\newcommand*{\citet}[1]{\AtNextCite{\AtEachCitekey{\defcounter{maxnames}{2}}} \textcite{#1}}
\newcommand*{\citep}[1]{\cite{#1}}
\newcommand{\lp}{\left}
\newcommand{\rp}{\right}
\newcommand{\A}{M}
\newcommand{\DE}{\mathrm{DE}}
\newcommand{\ZDER}[1]{Z_{\DE,R}}
\newcommand{\simiid}{\stackrel{\mathrm{i.i.d.}}{\sim}}
\newcommand{\calX}{\mathcal{X}}
\newcommand{\vx}{\mathbf{x}}
\newcommand{\vy}{\mathbf{y}}
\newcommand{\veta}{\mathbf{\eta}}
\renewcommand{\eqref}[1]{Eq.~(\ref{#1})}
\newtheorem{theorem}{Theorem}[section]
\newtheorem{definition}[theorem]{Definition}
\newtheorem{lemma}[theorem]{Lemma}
\newtheorem{claim}[theorem]{Claim}
\newtheorem{corollary}[theorem]{Corollary}
\newtheorem{question}{Question}
\newcommand{\E}{\mathbb{E}}
\title{A bounded-noise mechanism for differential privacy}
\begin{document}
\author{Yuval Dagan (MIT, EECS)\footnote{dagan@mit.edu}\qquad Gil Kur (MIT, EECS)\footnote{gilkur@mit.edu}}
\maketitle
\thispagestyle{empty}

\begin{abstract}
    %Answering multiple queries is one of the  best-studied problems in differential privacy. %The goal is to output an approximation of the average $\frac{1}{n}\sum_{i=1}^n \vec{x}^{(i)}$ of vectors $\vec{x}^{(i)} \in [0,1]^k$, while preserving the privacy with respect to any $\vec{x}^{(i)}$. 
    We present an asymptotically optimal $(\epsilon,\delta)$ differentially private mechanism for answering multiple, adaptively asked, $\Delta$-sensitive queries, settling the conjecture of Steinke and Ullman [2020]. 
    Our algorithm has a significant advantage that it adds independent bounded noise to each query, thus providing an absolute error bound.
    Additionally, we apply our algorithm in adaptive data analysis, obtaining an improved guarantee for answering multiple queries regarding some underlying distribution using a finite sample. Numerical computations show that the bounded-noise mechanism outperforms the Gaussian mechanism in many standard settings.
\end{abstract}

\section{Introduction}

Differential privacy provides a framework to publish statistics of datasets that contain users' information, while preserving their privacy. Here, one assumes an underlying dataset $\vx = (x_1,\dots,x_n) \in \calX$ where $x_i$ contains private information of user $i$. An analyst, which does not have access to the dataset, requests some statistics of the data. These statistics are provided by the dataset holder, however, by analyzing them, the analyst should not learn significant information on any specific datapoint $x_i$. 
We follow the standard framework of \emph{$(\epsilon,\delta)$ differential privacy} \cite{dwork2006calibrating,dwork2006our} where the parameter $\epsilon$ quantifies the typical level of privacy and $\delta$ is (intuitively) the probability that the algorithm fails to preserve privacy (formally defined in Section~\ref{sec:prelim}).

Perhaps the most well-studied problem in differential privacy is \emph{answering multiple queries}. The adaptive version is described by an interactive game between the dataset holder and the analyst: in each iteration $i=1,\dots,k$, the analyst submits a query $q_i \colon \calX^n \to \mathbb{R}$. Then, the dataset holder should provide an approximate answer $a_i \approx q_i(\vx)$. Providing the exact answer may cause a leakage of private information and a common approach is to output $a_i = q_i(\vx)+\eta_i$, where $\eta_i$ is a random noise, whose outcome is unknown to the analyst. The goal is to keep the magnitude of noise as low as possible while preserving privacy.

Clearly, it is impossible to preserve privacy while answering arbitrary queries accurately. This happens when particular datapoints have significant influence over the outcome of the query: Here, an accurate answer to the query would necessarily leak information on these datapoints. To avoid this issue, it is common to assume that each datapoint can change the outcome by at most $\Delta$. Formally, we use the standard notion of $\Delta$-sensitive queries: For any user $i$ and any datasets $\vx$ and $\vx'$ that differ only on entry $i$, we assume that $|q(\vx)-q(\vx')|\le \Delta$. For example, $q$ can be the average of some bounded statistic $h \colon \calX \to [0,1]$, where $q(x_1,\dots,x_n) = \frac{1}{n} \sum_{i=1}^n h(x_i)$ and $\Delta = 1/n$. %For convenience, we can assume that $\Delta=1$.

Despite being a central problem in differential privacy, it was unknown what is the least amount of noise that should be added. This question can be formalized as follows:
\begin{question}
	Fix parameters $\epsilon,\delta,k$ and $\Delta$. What is the minimal noise-level $\alpha$, such that there is an $(\epsilon,\delta)$ differentially private algorithm that answers $k$ $\Delta$-sensitive queries with error at most $\alpha$?
	(namely, $\forall i=1\cdots k, |a_i - q_i(\vx)| \le \alpha$)
\end{question}
One of the earliest algorithms, the \emph{Gaussian mechanism} \cite{dwork2006our}, consists of adding independent Gaussian noise of standard deviation $\sigma = O(R)$, where $R := \Delta \sqrt{k \log 1/\delta}/\epsilon$. Namely, $a_i = q_i(\vx)+\eta_i$ where $\eta_i \sim N(0,\sigma^2)$ are independent Gaussians. This yields a high-probability error of 
\[
\alpha 
= \max_{i=1,\dots,k} |q_i(\vx)-a_i|
= \max_{i=1,\dots,k} |\eta_i|
\le O(R \sqrt{\log k}),
\]
since the maximum of $k$ Gaussian random variables with standard deviation $\sigma$ is bounded by $O(\sigma \sqrt{\log k})$ with high probability. This was the best known algorithm until long after, when \citet{steinke2016between} showed how to obtain an improved bound $\alpha \le O(R \sqrt{\log\log k})$, by applying the same Gaussian mechanism and adding a smart algorithmic step that truncates the most-erroneous answers (via the \emph{sparse vectors} algorithm). In the same paper, they also showed a lower bound of $\alpha \ge \Omega(R)$, for any $\delta \ge e^{-k}$. Later, \citet{ganesh2020privately} showed how to obtain an improved bound of $O(R\sqrt{\log\log\log k})$ by replacing the Gaussian distribution with a generalized Gaussian and applying the same truncation technique of \cite{steinke2016between}. We further note that for all $\delta \le e^{-k}$, the optimal noise level is different, and equals $\Theta(k/\epsilon)$, using an algorithm that is in fact $(\epsilon,0)$ private \cite{hardt2010geometry}. Yet, it remained open whether one can match the lower bound of \cite{steinke2016between} and achieve a noise of $\alpha = O(R)$ for $\delta \ge e^{-o(k)}$. This was raised as an open problem by\citet{openProblem}.

One feature common to these algorithms is that they rely on adding \emph{unbounded noise}, and then, possibly, making a correction. Such an approach has multiple obvious disadvantages: (1) All the above-discussed algorithms fail to give a definite bound on the error that holds with probability $1$; (2) The correction step (i.e. the sparse vector technique), if used, complicates the algorithm and (3) The numerical constants associated with the noise may significantly degrade if one uses correction techniques. 

To guarantee a bounded noise, various prior works \cite{liu2018generalized,holohan2020bounded} suggested to truncate known noise distributions such as the Gaussian and Laplace. Yet, this yields suboptimal algorithms and it is possible that specifically tailored bounded-noise distributions would provide better results. This gives rise to the following question:
\begin{question}
What are the best mechanisms that rely on adding i.i.d. bounded noise? Can they provide the asymptotically optimal noise rate? Can they yield a reduced noise in practical settings?
\end{question}

\subsection{Main Results}
In this paper, we provide a positive answer to the above two questions:

\begin{theorem}\label{thm:main-informal} 
	Let $k,n \in \mathbb{N}$, $\epsilon \in (0,1]$, $\delta \in [e^{-k/\log^2(k)\log^4\log(k)},1/2]$ and $\Delta > 0$. There exists an algorithm for answering $k$ adaptive $\Delta$-sensitive queries that is $(\epsilon,\delta)$ deferentially private and further, its error is bounded as follows:
	\[
	\max_{i=1,\dots,k} |a_i - q_i(\vx)| \le 
	O(R) := O(\Delta\sqrt{k\log(1/\delta)}/\epsilon),
	\quad\text{with probability $1$.}
	\]
	Further, this is attained using an algorithm that adds i.i.d. noise of bounded magnitude to each answer.
\end{theorem}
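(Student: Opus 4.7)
The plan is to design a single noise distribution $\mu$, supported on an interval $[-cR,cR]$ for a universal constant $c$, and output $a_i = q_i(\vx) + \eta_i$ with $\eta_i \simiid \mu$. The absolute-error bound is then automatic from the support: $|a_i-q_i(\vx)| = |\eta_i| \le cR$ for every $i$ with probability $1$. All the work is in showing that the $k$-fold adaptive composition of this release mechanism satisfies $(\epsilon,\delta)$-DP at the stated rate, and in particular that the distribution can be chosen with bounded support without sacrificing the optimal constant.

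First, I would reduce the privacy analysis to a one-dimensional hypothesis-testing problem. For any $\Delta$-sensitive query, releasing $q(\vx) + \eta$ with $\eta \sim \mu$ is indistinguishable from releasing $\eta$ whose law is either $\mu$ or its shift $\mu(\cdot - \Delta)$; the relevant quantity is then the privacy loss random variable $L := \log(\mu(Y)/\mu(Y - \Delta))$ with $Y \sim \mu$. The benchmark is the Gaussian mechanism with variance $\sigma^2 = \Theta(R^2/(k\log(1/\delta)))$, whose $L$ is Gaussian with mean $\Delta^2/(2\sigma^2)$ and variance $\Delta^2/\sigma^2$, and whose $k$-fold composition achieves exactly $(\epsilon,\delta)$-DP. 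My target is to construct a bounded-support $\mu$ whose privacy-loss profile (e.g.\ moment generating function, or equivalently Rényi divergence at every order $\alpha$) matches the Gaussian's up to absolute constants.

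Second, I would construct $\mu$ explicitly. The natural candidate is a log-concave density on $[-cR,cR]$ whose log-density is quadratic in an interior bulk — so that the interior contribution to $L$ behaves like the Gaussian PLRV — and which is modified near the boundary so that the ratio $\mu(y)/\mu(y-\Delta)$ remains controlled rather than blowing up. Concretely, one can start from a Gaussian of variance $\sigma^2 \asymp R^2/(k\log(1/\delta))$ and reshape its tails by either a smooth re-weighting or a convolution with a small kernel, then hard-truncate at $\pm cR$. The key lemma I would prove is that the MGF of $L$ satisfies $\E[e^{\lambda L}] \le \exp(C\lambda^2 \Delta^2/\sigma^2)$ for all $\lambda$ in the range needed by the composition, mirroring the Gaussian bound; this requires checking separately the bulk (where the density is smooth and nearly Gaussian) and a thin boundary layer of width $O(\Delta)$ (where the ratio is poorly behaved but occurs with small mass).

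Third, I would compose. Using the MGF/Rényi bound on $L$, the privacy loss of the $k$-fold mechanism is dominated by a sum of $k$ i.i.d.\ copies (or a martingale analog, which is standard for adaptive composition), and a Chernoff/Markov bound combined with the standard conversion from bounded Rényi divergence to $(\epsilon,\delta)$-DP yields the desired $(\epsilon,\delta)$ guarantee. The regime $\delta \ge e^{-k/(\log^2 k\, \log^4\log k)}$ is precisely what makes the concentration argument tight: it is exactly the regime in which the optimal privacy loss behaves like the Gaussian one rather than like the $(\epsilon,0)$ regime of \cite{hardt2010geometry}.

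I expect the main obstacle to be the distribution design itself — specifically, arranging that the density ratio $\mu(y)/\mu(y-\Delta)$ remains well-controlled all the way up to the support boundary $\pm cR$, since this is precisely where truncated-Gaussian constructions from prior work (e.g.\ \cite{liu2018generalized,holohan2020bounded}) lose constant factors and become suboptimal. A secondary obstacle is handling the polylogarithmic slack in the $\delta$ range: the construction must be tight enough that the MGF bound of $L$ is sharp at every Rényi order simultaneously, not just on average, so that the composition does not bleed extra $\log\log$ factors of the kind that \cite{steinke2016between,ganesh2020privately} incurred when they relied on correction steps to fix an unbounded-noise mechanism rather than engineering a bounded one from the start.
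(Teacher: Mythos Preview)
Your high-level skeleton (i.i.d.\ bounded noise, analyze the one-query privacy-loss random variable $L$, then compose) is right, but the central technical claim --- that you can arrange $\E[e^{\lambda L}]\le\exp(C\lambda^2\Delta^2/\sigma^2)$ --- cannot hold for any bounded-support noise, and this is exactly where the real work lies. If $\mu$ is supported on $[-cR,cR]$ with positive density there and you hard-truncate, then $L=+\infty$ on a set of positive measure (whenever $\eta$ is within $\Delta$ of the boundary and the shift lands outside the support). If instead the density vanishes continuously at the endpoints --- write $\mu(\eta)\propto e^{-f(\eta/R)}$ with $f\to\infty$ at $\pm1$ --- then $L\approx v\,f'(\eta/R)/R$ is finite but necessarily \emph{heavy-tailed}: the paper observes explicitly that $\E[e^{\theta f'(\eta)}]=\infty$ for every $\theta>0$, so no sub-Gaussian or even sub-exponential MGF bound on $L$ is available. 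A ``Gaussian bulk plus thin boundary fix'' does not escape this, because the tail of $L$ is governed entirely by how $f$ diverges near $\pm1$, not by the bulk shape; your construction would at best reproduce the truncated-Gaussian suboptimality you cite from \cite{liu2018generalized,holohan2020bounded}.

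The paper's actual argument embraces the heavy tail rather than trying to make it sub-Gaussian. It chooses $f$ so that $|f'|$ is controlled by $f$ via a differential inequality $I(|f'|)\le f$, which translates into the single-variable tail bound $\Pr[|f'(\eta)|>t]\lesssim e^{-I(t)}$. The sum $\sum_i u_i f'(\eta_i)$ is then bounded by a truncation-plus-MGF concentration inequality for heavy-tailed martingale increments: one truncates each summand at a level $L$, applies the MGF only to the truncated pieces, and optimizes $L$ and the Chernoff parameter jointly. The threshold $\delta\ge e^{-k/(\log^2 k\,\log^4\log k)}$ is not, as you suggest, the line between Gaussian-like and $(\epsilon,0)$ behavior; it is the crossover point $t^*$ (solving $t=kI(t)/2t$) at which the sub-Gaussian concentration of the \emph{sum} gives way to the heavy tail of a \emph{single} summand, and reaching this particular threshold requires the specific choice $f(\eta)=\exp(\exp(1/(1-\eta^2)))$, for which $I(t)\asymp t/(\log t\,\log^2\log t)$. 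None of these ingredients --- the differential inequality on $f$, the heavy-tailed concentration lemma, or the double-exponential $f$ --- appear in your plan.
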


In addition to providing an optimal error, this algorithm has additional benefits: 
\begin{itemize}
\item The bound on the maximal error, $\max_i |a_i - q_i(\vx)| = O(R)$, holds with probabilty $1$! This provides to the analyst definite bounds on the true answer $q_i(\vx)$, which is significantly more convenient in some settings. In comparison, the previous algorithms discussed above only guarantee a high probability error bound, which degrades at least as fast as $R\sqrt{\log 1/\beta}$ for confidence level $1-\beta$.
\item The algorithm is simple: the noise added to each query is drawn i.i.d. from some simple closed-form density. This is compared to the previous algorithms discussed above that relied on an additional algorithmic truncation step.
\item It yields better bounds than the Gaussian mechanism in many practical settings. This can be shown using an algorithm that computes upper bounds on the optimal noise level that is required to achieve $(\epsilon,\delta)$ privacy (see Section~\ref{sec:simulations}; code available online).
\end{itemize}

We recall that our algorithm achieves an optimal noise only for $\delta \ge e^{-k/\log^2k \log^4\log k}$.
A subsequent (which was essentially concurrent) independent work of \cite{ghazi2020avoiding} provided an optimal rate for all $\delta \ge e^{-k}$, thus closing the gap between the upper and lower bounds that was left open for $\delta$ slightly larger than $e^{-k}$. Their algorithm smartly consists of permuting the queries and applying multiple stages of the sparse vector algorithm to reduce the noise. The advantages of our algorithm include the three items discussed above and the fact that it can answer adaptively asked queries, which also makes it applicable to adaptive data analysis.

Lastly, we argue that it is impossible to achieve an optimal bound using an algorithm that adds i.i.d. bounded noise for all $\delta \ge e^{-k}$: (Proof of Section~\ref{sec:lb})

\begin{theorem}[informal]\label{thm:lb-informal}
There is no $(\epsilon,\delta)$ differentially private algorithm that adds bounded i.i.d. noise, that is asymptotically optimal in the regime $\delta \ge e^{-\omega(k/\log^2 k)}$, where $\omega()$ denotes a strict asymptotic inequality (we assume in the proof that the noise density is unimodal, yet we believe that this assumption is redundant).
\end{theorem}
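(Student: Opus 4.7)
The plan is to argue by contradiction: assume a bounded i.i.d.\ unimodal noise mechanism with support $[-R', R']$ is $(\epsilon, \delta)$-differentially private for $k$ adaptive $\Delta$-sensitive queries with $R' = O(R)$, where $R = \Delta\sqrt{k\log(1/\delta)}/\epsilon$, and derive a contradiction in the regime $\log(1/\delta) = \omega(k/\log^2 k)$.

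The first step is a boundary-mass constraint. I would apply the $(\epsilon,\delta)$-DP definition to the adversarial shift $\vec{t} = (\Delta, \ldots, \Delta)$ and the event $A = \{\vec{y} : \exists\, i,\ y_i \in [-R', -R'+\Delta]\}$. Since the shifted product measure vanishes on this strip (the relevant $y_i - \Delta$ would lie outside $[-R',R']$), the DP inequality forces $\mu^k(A) \le \delta$, and by the i.i.d.\ product structure $p_1 := \mu([-R', -R'+\Delta]) \lesssim \delta/k$; a symmetric bound holds at the right boundary.

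The second step exploits unimodality to extract information about the density profile. Discretize $[-R', 0]$ into $N = R'/\Delta$ consecutive blocks of width $\Delta$, and write $p_1 \le p_2 \le \cdots \le p_N$ for their masses (non-decreasing by unimodality). The per-coordinate privacy loss under shift $+\Delta$ is essentially $a_j := \log(p_j/p_{j-1})$ when the noise falls in block $j$. Telescoping gives $\sum_{j=2}^{N} a_j = \log(p_N/p_1) \ge \log(k/(2N\delta))$, using $p_N \ge 1/(2N)$ (from $\sum_j p_j = 1/2$ and monotonicity) together with $p_1 \lesssim \delta/k$. Since $N = R'/\Delta = O(\sqrt{k\log(1/\delta)}/\epsilon)$, this telescoping bound is substantially larger than $N$ times the \emph{typical} PL magnitude one would expect under a Gaussian-like density.

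The third step combines the structural bounds with a tail analysis of the composed privacy loss $\sum_{i=1}^k L_i$. Using Chernoff / moment-generating-function bounds on the per-coordinate $L$ (which takes values $\{a_j\}_j$ with probabilities $\{p_j\}_j$, together with $+\infty$ with probability $p_1$), the DP requirement $\Pr[\sum_i L_i > \epsilon] \le \delta$ coupled with the lower bound on $\sum_j a_j$ and the boundary constraint forces $R'/R \to \infty$ whenever $\log(1/\delta) = \omega(k/\log^2 k)$. The \textbf{main obstacle} is precisely this last step: individually, the Fisher-information lower bound $I(f) \ge 1/(R')^2$ and the boundary bound $p_1 \le \delta/k$ each yield only $R' = \Omega(R)$, matching the classical Gaussian lower bound, and the strict improvement $R' = \omega(R)$ must emerge from their joint interaction. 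A natural route is to set up an extremal optimization -- minimizing the composed PL tail over unimodal densities on $[-R', R']$ subject to $p_1 \le \delta/k$ and normalization -- and to show that the minimum exceeds $\delta$ precisely when $\log(1/\delta)$ crosses the $\omega(k/\log^2 k)$ threshold. Handling the logarithmic factors cleanly is the crux of the argument.
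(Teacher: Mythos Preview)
Your first two steps are sound observations, but step~3 --- which you yourself flag as the ``main obstacle'' --- is not a proof sketch so much as a restatement of what remains to be shown. The difficulty is structural: the telescoping identity $\sum_j a_j = \log(p_N/p_1)$ constrains only the \emph{total} of the block log-ratios, whereas the tail of $\sum_{i=1}^k L_i$ under any Chernoff/MGF analysis depends on how that total is \emph{distributed} across blocks (weighted by the $p_j$). A density that is nearly flat over most of $[-R',R']$ and decays smoothly only in a thin boundary layer already satisfies both your boundary bound $p_1\lesssim\delta/k$ and the telescoping constraint, yet can be $(\epsilon,\delta)$-private with $R'=\Theta(R)$ for a wide range of $\delta$ --- indeed, the paper's own upper-bound constructions are exactly of this type. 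So your two global constraints do not by themselves separate the regime $\log(1/\delta)=\omega(k/\log^2 k)$ from smaller $\delta$, and the ``extremal optimization'' you propose would have to reproduce, in disguise, whatever additional structure actually forces the $\log k$ loss. You have not indicated what that structure is.

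The paper supplies the missing idea, and it is quite different from a global Chernoff argument: a \emph{local multiplicative growth bound} on $f=-\log\mu$, namely that whenever $f(\eta)$ lies in the range $[\Theta(\log(1/\delta)/k),\,\Theta(\log(1/\delta))]$ one has
\[
f(\eta+1/2)\le f(\eta)\cdot\exp\!\bigl(O(1)/\log(1/\delta)\bigr).
\]
This is proved by a tensorization step: if $f(\eta_0+1/2)-f(\eta_0)$ were too large at some $\eta_0$, then choosing $m\approx \log(1/\delta)/f(\eta_0)\le k$ coordinates and forcing each $\eta_i$ to land in $[\eta_0-1/2,\eta_0]$ (an event of probability $\ge e^{-f(\eta_0)}/2$ per coordinate, by unimodality) drives the total privacy loss above $2$ with probability $\gtrsim\delta$, contradicting DP. Starting from a base estimate $f(1/2)\le\log(10M)$ and iterating the growth bound, $f$ stays below $\log(1/\delta)/3$ for $\Omega(\log(1/\delta)\cdot\log k)$ half-steps, so the support radius satisfies $M\ge\Omega(\log(1/\delta)\log k)$. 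The crucial feature absent from your outline is this \emph{localization}: the number $m$ of coordinates used in the adversarial event is chosen adaptively as a function of $f(\eta_0)$, which is what converts the available $k$ coordinates into the extra $\log k$ factor.
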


\subsection{Application to adaptive data analysis.}
Adaptive data analysis concerns of answering multiple adaptively asked queries on some underlying distribution $P$ over a domain $\calX$, while having access only to a finite i.i.d. sample $x_1,\dots,x_n \sim P$ \cite{dwork2015preserving,hardt2014preventing}. 
This scenario is common in statistics and machine learning, where an adaptive procedure or an algorithm are used to infer or learn properties of the distribution.

The standard setting can be formulated as an interaction between the dataset holder, that has access to $n$ i.i.d. samples from $P$, and a statistical analyst whose goal is to infer properties of the distribution. In each iteration $t=1,\dots,k$ the algorithm submits a \emph{statistical query} $q_i \colon \calX \to [0,1]$, and the goal of the dataset holder is to send an answer $a_i$ that approximate the expectation $q_i(P) = \E_{x\sim P}[q_i(x)]$. The queries are asked adaptively, namely, $q_i$ can depend on the previous answers $a_1,\dots,a_{i-1}$. The goal is to answer all the queries with low error, ensuring that with probability at least $1-\beta$, $|a_i - q_i(P)| \le \alpha$ for all $i$.
%

%, and one may also be interested in the more general notion of $\Delta$-sensitive queries, where $q(P) = \E_{\vx\sim P^n}[q(\vx)]$ for some $\Delta$-sensitive function $q\colon \calX^n \to \mathbb{R}$. 
The straightforward approach is to answer each query $q_i$ using the sample average, outputting $a_i = \frac{1}{n}\sum_{j=1}^n q_i(x_j)$. This gives a valid result with high probability for non adaptively-asked queries, namely, if $q_1,\dots,q_n$ are given a priori. However, in case that they are asked adaptively, it is possible, and even likely in some scenarios, that they fit or adjust to the specifically-drawn sample. In such cases, the sample-mean will not provide a valid approximation to the true expectation $q_i(P)$. A solution suggested by \cite{bassily2021algorithmic} is to use a differentially private algorithm to answer the queries. Intuitively, this prevents the queries $q_i$ from fitting to the data, since the previous answers $a_1\cdots a_{i-1}$ are differentially private with respect to it. In particular a \emph{transfer theorem} of \cite{bassily2021algorithmic,jung2020new} yields guarantees for adaptive data analysis given guarantees for the underlying differentially private algorithm. Applying these results on the known algorithms, one obtains the following guarantee: for all $\alpha,\beta \in (0,1/2)$, there is a sample size
\begin{equation}\label{eq:def-n-adaptive-data}
n(\alpha,\beta) = O\lp(\min\lp(
\frac{\sqrt{k \log k \log^2(1/\alpha\beta)}}{\alpha^2},\ \frac{\sqrt{k \log\log k \log^3(1/\alpha\beta)}}{\alpha^2}\rp)\rp),
\end{equation}
and an algorithm that receives $n(\alpha,\beta)$ samples from some arbitrary distribution $P$, and answers $k$ adaptively asked queries, with a high-probability error bound of
\[
\Pr\lp[\forall i=1\cdots k,\ 
|q_i(P)-a_i| \le \alpha
\rp]\ge 1-\beta.
\]
Here, the first argument in the right hand side of \eqref{eq:def-n-adaptive-data} corresponds to the Gaussian mechanism and the second to the algorithm of \cite{steinke2016between}. In this paper, we obtain the following improved bound: (Proof in Section~\ref{sec:ada-proof})
\begin{corollary} \label{cor:ada}
For every $k \in \mathbb{N}$ and $\alpha,\beta \in (0,1/2)$ such that $\alpha\beta \ge 4e^{-k/\log^2 k\log\log^4 k}$, there exists an algorithm for answering $k$ adaptive statistical queries $q_i \colon \calX \to [0,1]$, with a sample size of  
\[
n = O\lp(
\frac{\sqrt{k \log(1/\alpha\beta)}}{\alpha^2}\rp),
\]
that satisfies
$\Pr\lp[\forall i=1\cdots k,\ 
|q_i(P)-a_i| \le \alpha
\rp]\ge 1-\beta.$
More generally, this bound is also valid for answering $1$-sensitive queries.\footnote{When answering a $\Delta$-sensitive query $q \colon \calX^n\to \mathbb{R}$, the goal is to provide an approximation to the expected value of the query taken over a random dataset, $q(P^n) := \E_{\vx \sim P^n}[q(\vx)]$.}
\end{corollary}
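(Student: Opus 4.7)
The plan is to instantiate the bounded-noise mechanism of Theorem~\ref{thm:main-informal} with an appropriate choice of privacy parameters and then invoke the transfer theorem of \cite{bassily2021algorithmic,jung2020new} to convert the sample-accuracy guarantee into a distribution-accuracy guarantee.

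First, I would note that a statistical query $q \colon \calX \to [0,1]$ induces a $(1/n)$-sensitive query on datasets $\vx \in \calX^n$ via $\bar q(\vx) = \frac{1}{n}\sum_{j=1}^n q(x_j)$. Set the sensitivity $\Delta = 1/n$ and the privacy parameters $\epsilon = c_1 \alpha$, $\delta = c_2 \alpha\beta$ for absolute constants $c_1, c_2 > 0$ matched to the transfer theorem. The hypothesis $\alpha\beta \ge 4 e^{-k/\log^2 k \log^4\log k}$ ensures that this choice of $\delta$ satisfies the prerequisite $\delta \ge e^{-k/\log^2 k \log^4\log k}$ of Theorem~\ref{thm:main-informal}.

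Applying Theorem~\ref{thm:main-informal} with these parameters yields a private mechanism whose worst-case sample error, \emph{with probability $1$}, is at most
\[
\max_{i=1,\dots,k} |a_i - \bar q_i(\vx)| \le \frac{C \sqrt{k \log(1/\delta)}}{n\epsilon} = \frac{C' \sqrt{k \log(1/\alpha\beta)}}{n \alpha}.
\]
Choosing $n$ of order $\sqrt{k \log(1/\alpha\beta)}/\alpha^2$ with a sufficiently large constant reduces this bound to at most $\alpha/2$. The transfer theorem of \cite{bassily2021algorithmic,jung2020new} then promotes this to the distribution-level guarantee $\Pr[\forall i,\ |q_i(P) - a_i| \le \alpha] \ge 1-\beta$. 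Because Theorem~\ref{thm:main-informal} provides a probability-one error bound, the ``high probability on the sample'' precondition of the transfer theorem is satisfied trivially, independent of $\beta$.

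The main obstacle is purely bookkeeping: one must track the constants in the transfer theorem to pin down $c_1, c_2$ and verify that the resulting $\delta$ remains in the regime of validity of Theorem~\ref{thm:main-informal}. The absence of any $\log k$ loss in the final sample complexity is ultimately a consequence of Theorem~\ref{thm:main-informal} eliminating the extra $\sqrt{\log k}$ and $\sqrt{\log\log k}$ factors present in prior algorithms, which previously propagated into \eqref{eq:def-n-adaptive-data}; no additional idea beyond plugging the improved noise bound into the existing transfer machinery is needed.
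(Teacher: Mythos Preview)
Your proposal is correct and follows essentially the same approach as the paper: set $\epsilon = \Theta(\alpha)$ and $\delta = \Theta(\alpha\beta)$, apply Theorem~\ref{thm:main-informal} to get a probability-$1$ sample-accuracy bound of $O(\sqrt{k\log(1/\alpha\beta)}/(n\alpha))$, choose $n$ to make this a small fraction of $\alpha$, and invoke the transfer theorem with $\beta'=0$. The paper's version just pins down the constants explicitly ($\epsilon=\alpha/8$, $\delta=\alpha\beta/4$, $d=\alpha/4$, $c\to 0$), which is exactly the bookkeeping you flagged.
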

This yields an optimal dependence both on $k$ and $\beta$ \cite{bassily2021algorithmic}, while removing logarithmic factors in $k,\alpha,\beta$. Further, this algorithm can answers approximately twice as many queries as the Gaussian mechanism in a standard setting (see Section~\ref{sec:simulations}).

%This removes both the term logarithmic in $k$ and the term polynomial in $\log(1/\alpha\beta)$, and this is due to the improved guarantees of our differentially private algorithm, both in terms of the error $\alpha$ and the confidence probability of $1$. This yields the best sample guarantee known today. Further, it has an optimal dependence on $k$ and $\beta$, and it is the asymptotically optimal guarantee that can be obtainable via $(\epsilon,\delta)$ differential privacy \cite{bassily2021algorithmic}.

%\paragraph{Paper organization.} In Section~\ref{sec:abstract-thm} we present a general theorem which certifies that any i.i.d. bounded-noise mechanism which satisfies some certain differential inequalities is asymptotically optimal. Section~\ref{sec:prelim} contains the preliminaries, Section~\ref{sec:sketch} provides a proof sketch of Theorem~\ref{thm:main-informal}, Section~\ref{sec:proof-upper} contains the formal proof of Theorem~\ref{thm:main-informal}, and Section~\ref{sec:lb} contains the proof of Theorem~\ref{thm:lb-informal}.

\section{The abstract theorem}\label{sec:abstract-thm}

We present an abstract statement that provides guarantees for bounded-noise distributions, assuming that they satisfy some differential inequalities. We use the following notation for bounded-noise mechanisms:

\begin{definition}\label{def:general}
    Given a function $f \colon (-1,1)\to (0,\infty)$, consider the continuous distribution $\mu_f$ with density
    \[
    \mu_f(\eta) = \frac{\exp(-f(\eta))}{Z_{f}},\quad \text{where }  Z_{f} = \int_{-1}^1 \exp^{-f(\eta)} d\eta.
    \]
    Further, for any $R>0$ denote by $\mu_{f,R}$ the scaling of $\mu_f$ by $R$, namely $\eta\sim \mu_{f,R}$ is obtained from sampling $\eta' \sim \mu_f$ and setting $\eta = R\eta'$. Equivalent, $\mu_{f,R}$ has density
    \[
    \mu_{f,R}(\eta) = \frac{\exp(-f(\eta/R))}{Z_{f,R}},\quad \text{where }  Z_{f,R} = \int_{-R}^R \exp^{-f(\eta/R)} d\eta = R Z_{f}.
    \]
    Define by $M_{f,R}$ the mechanism that adds to each query a noise drawn independently from $\mu_{f,R}$.
\end{definition}

Next, we present our abstract theorem that shows that some noise mechanisms are optimal, if $f$ satisfies some desired properties. There are two essential properties: (1) $\mu_f$ decays to zero in the neighborhoods of $-1$ and $1$, or, equivalently, $f(\eta) \to \infty$ as $\eta \to \pm 1$; and (2) $\mu$ does not decay too fast, which amounts to requiring that $|f'(\eta)|$ is bounded in terms of $f(\eta)$. In particular, we would like that $I(|f'(\eta)|) \le f(\eta)$ for the function $I$ defined below. Any function $f$ that satisfies these assumptions (and a couple more technical assumptions), yields DP mechanisms with asymptotically optimal error, for any $\delta \ge \delta^*_k$. The threshold $\delta^*_k$ improves (i.e. decreases) as the upper bound on $|f'(\eta)|$ improves, or, equivalently, as $I$ increases.

Formally, let $I \colon [0,\infty)\to[0,\infty)$ be a continuous function that satisfies the following properties: 
\begin{itemize}
    \item  $I(t) \le t$ and $I(t) \ge c \sqrt{t}$ for any $t\ge C$, where $C,c>0$ can be any constants independent of $t$.
    \item  $I(t)$ is increasing in $t$ and $I(t)/t$ is decreasing in $t$.
\end{itemize}
% (1) (2) and (3) and 
For example, $I(t) = t^\alpha$ for some $\alpha \in [1/2,1]$ or $I(t) = t/\log^\alpha t$ for $\alpha \ge 0$.
Now, we state some requirements on the function $f$ that appears in the definition above:
\begin{enumerate}
    \item $f$ is symmetric, i.e. $f(-\eta) = f(\eta)$; 
    \item $f$ diverges:  $\lim_{\eta\to 1^-}f(\eta)=\lim_{\eta\to -1^+} f(\eta) = \infty$. 
    \item Bounded first derivative: $I(|f'(\eta)|) \le f(\eta)$.
    \item Bounded second derivative: $|f''(\eta)| \le C f(\eta)^2$, where  $C>0$ can be any constant independent of $\eta$.
\end{enumerate}
Lastly, we define $\delta^*_k$. For this purpose, define $t^*$ as the unique solution to $t=kI(t)/2t$. Notice that such a unique solution exists as $I(t)/t$ is continuous and decreasing in $t$. Then, we define $\delta^*_k = e^{-I(t^*)/C_f}$ where $C_f>0$ is a constant depending only on $f$.
This yields the following theorem: (Proof in Section~\ref{sec:proof-upper})
% Most importantly, $f$ has to diverge at $-1$ and $1$, and the first two derivatives of $f$ have to be bounded in terms of $f$, as stated in the following  theorem:%, e.g. $|f'(\eta)|, |f''(\eta)| \le O(f(\eta)^2)$. We add an informal theorem.
\begin{theorem}\label{thm:general-func}

Let $I(t)$ and $f$ satisfy the conditions above, let $k \in \mathbb{N}$ and $\delta^*_k$ is defined as above.
Let $\Delta>0$, $\epsilon \in (0,1]$, $\delta \in [\delta^*_k,1/2]$ and define $R = C_f \Delta \sqrt{k\log 1/\delta}/\epsilon$ for some constant $C_f$ depending only on $f$. Then, the mechanism $M_{f,R}$ is $(\epsilon,\delta)$-differentially private for answering $k$ adaptive $\Delta$-sensitive queries.
\end{theorem}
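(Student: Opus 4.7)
The plan is to bound the privacy loss of each query and then concentrate its sum across the $k$ queries, with a separate union bound handling rare boundary events. Fix neighboring datasets $\vx,\vx'$ and an adaptive query sequence $q_1,\ldots,q_k$; write $\Delta'_i := q_i(\vx') - q_i(\vx) \in [-\Delta,\Delta]$ and let $\eta_i \sim \mu_{f,R}$ be the fresh noise at step $i$. The per-query privacy loss is
\[
L_i := \log \frac{\mu_{f,R}(\eta_i)}{\mu_{f,R}(\eta_i - \Delta'_i)} = f\lp((\eta_i - \Delta'_i)/R\rp) - f\lp(\eta_i/R\rp).
\]
Because queries are adaptive but each noise is independent of the history, $(L_i - \E[L_i \mid \mathcal{F}_{i-1}])$ is a martingale difference sequence, and it suffices to prove $\Pr\lp[\sum_{i=1}^k L_i > \epsilon\rp] \le \delta$ uniformly over $\vx,\vx'$ and the adversary.

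The first main step is a sharp per-query analysis. A Taylor expansion gives $L_i = -(\Delta'_i/R)\, f'(\eta_i/R) + \tfrac{1}{2}(\Delta'_i/R)^2 f''(\zeta_i/R)$ for some $\zeta_i$. By property~1, $f'$ is odd and $\mu_f$ is even, so $\E[f'(\eta/R)] = 0$; together with property~4 and the generic density tail $\Pr_{\eta\sim\mu_f}[f(\eta) > t] \le (2/Z_f) e^{-t}$ (which yields $\E[f^j] \le 2 j!/Z_f$ for every $j$), this gives $\E[L_i] = O((\Delta/R)^2)$ and $\mathrm{Var}(L_i) = O((\Delta/R)^2)$. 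Most importantly, property~3 converts the density tail into a sharp tail on $L_i$: $|L_i| > t$ forces $|f'(\eta_i/R)| \gtrsim tR/\Delta$, hence $f(\eta_i/R) \ge I(tR/\Delta)$ by property~3, so
\[
\Pr[|L_i| > t] \le (C/Z_f)\, e^{-I(tR/\Delta)}.
\]
Integrating by parts gives moment bounds $\E[|L_i|^m] = O((\Delta/R)^m) \cdot m \int_0^\infty u^{m-1} e^{-I(u)}\,du$, an $I$-dependent expression.

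Next, I would plug these into the conditional MGF $\log \E[e^{\theta L_i}\mid \mathcal{F}_{i-1}] \le \theta^2 \mathrm{Var}(L_i)/2 + (\text{higher order})$, compose multiplicatively over the $k$ steps via the martingale property, and apply Markov's inequality. The optimal $\theta$ balances the Gaussian leading term against $\epsilon$; the choice $R = C_f \Delta\sqrt{k\log(1/\delta)}/\epsilon$ is exactly what makes the Gaussian contribution match $\epsilon$ at probability $\delta$. The threshold $\delta^*_k = e^{-I(t^*)/C_f}$, with $t^*$ defined by $2(t^*)^2 = kI(t^*)$, is precisely the largest $\delta$ for which the optimal $\theta$ still lies in the ``safe'' regime of the MGF, i.e., where the higher-order moments $\int u^{m-1} e^{-I(u)}\,du$ remain subleading relative to the Gaussian piece. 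Equivalently, one can truncate $f(\eta_i/R)$ at a threshold $T$ and apply Bernstein's inequality for martingales; the feasibility window $\log(4k/\delta) \le T \le I(c\sqrt{k/\log(1/\delta)})$ is non-empty exactly when $\delta \ge \delta^*_k$.

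The main technical obstacle is controlling these higher-order contributions uniformly in the choice of $I$. The lower bound $I(t) \ge c\sqrt{t}$ keeps the $I$-moment integrals bounded by $\lesssim (2m)!$, which is just barely enough for sub-Gaussian concentration in the required range of $\theta$; the monotonicity of $I(t)/t$ together with property~4 controls the quadratic Taylor remainder so it is absorbed into the Gaussian piece. Finally, a separate \emph{boundary event} $\{\eta_i - \Delta'_i \notin (-R,R)\}$ makes $L_i$ infinite; property~2 (vanishing of $\mu_f$ at $\pm 1$) together with a union bound over the $k$ queries bounds its probability by $\delta/2$, absorbed into the overall failure budget. The remaining work is careful bookkeeping of the constant $C_f$, which depends only on $f$ through the implicit constants in properties~3 and~4 and the normalizer $Z_f$.
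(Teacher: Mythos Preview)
Your plan mirrors the paper's: reduce to bounding $\Pr[\sum_i L_i > \epsilon]$, Taylor-expand each $L_i$, concentrate the first-derivative martingale via truncation plus an MGF argument (this is the paper's Proposition~\ref{prop:heavytail}), and handle the second-order remainder and boundary separately. The gap is in your treatment of the Taylor remainder. The implication ``$|L_i|>t$ forces $|f'(\eta_i/R)|\gtrsim tR/\Delta$'' is not justified: the term $\tfrac12(\Delta'_i/R)^2 f''(\zeta_i/R)$ can itself be of order $t$, and property~4 only bounds $|f''(\zeta_i/R)|$ by $Cf(\zeta_i/R)^2$ at the \emph{intermediate} point $\zeta_i$, not at $\eta_i$. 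Because $f$ diverges at $\pm 1$, $f(\zeta_i/R)$ can be arbitrarily larger than $f(\eta_i/R)$ even though $|\zeta_i-\eta_i|\le \Delta$. Your boundary event $\{\eta_i-\Delta'_i\notin(-R,R)\}$ does not cover this: the problematic region is a thin layer near $\pm 1$ where $f$ more than doubles over an interval of length $\Delta/R$, and property~2 alone gives no quantitative control of that layer.

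The paper supplies the missing step as Lemma~\ref{lem:differential-eq}: from $|f'|\le C f^2$ (a consequence of property~3 together with $I(t)\ge c\sqrt t$) one compares $f$ with the explicit solution of $g'=Cg^2$ to obtain $f(\eta+\lambda/(2Cf(\eta)))\le 2f(\eta)$ for $|\lambda|\le 1$. Hence on the high-probability event $\{f(\eta_i/R)\lesssim R/\Delta\}$ one has $f(\zeta_i/R)\le 2f(\eta_i/R)$ and therefore $|f''(\zeta_i/R)|\le 4C f(\eta_i/R)^2$. The paper then bounds $\sum_i f(\eta_i/R)^2$ by a second application of the heavy-tailed concentration (Lemma~\ref{lem:bnd-second-der}, using $I(t)=\sqrt t$) and shows it is $O(k\log(1/\delta))$, which after division by $R^2$ is at most $\epsilon/2$. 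You need this Gr\"onwall-type comparison, or something equivalent, to make ``absorbed into the Gaussian piece'' rigorous.
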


As a corollary, we obtain the following guarantees for specific functions $f$: (Proof in Section~\ref{sec:cor-to-abst})
\begin{corollary}\label{cor:to-abstract}
The following functions $f$ yield mechanisms $M_{f,R}$ with an optimal value of $R=\Theta(\Delta \sqrt{k\log(1/\delta))}/\epsilon)$, for any 
$\delta \ge \delta^*_k$:
\begin{itemize}
    \item The function $f(\eta) = 1/(1-\eta^2)^p$ with $\delta^*_k = \exp\lp(-C(p) k^{p/(p+2)}\rp)$, for any $p \ge 2$, where $C(p)$ depends only on $p$.
    \item The function $f(\eta) = \exp\lp(\exp\lp(1/(1-\eta^2)\rp)\rp)$ with $\delta^*_k = \exp(-C k/(\log^2 k \log^4\log k))$, for some $C>0$.
    % \item Slightly improved values of $\delta^*_k$ can be obtained for $f(\eta) = \exp^{(\ell)}(1/(1-\eta^2))$ where $\exp^{(\ell)}$ is a tower of $\ell$ exponents and $\log^{(\ell)}$ is a composition of $\ell$ logarithms. In particular, for any $\ell \in \mathbb{N}$ there exists $C(\ell)>0$ such that
    % \[
    % \delta^*_k = \exp\lp(-\frac{C(\ell) k}{\prod_{i=1}^{\ell-1} (\log^{(i)} k)^2 \cdot(\log^{(\ell)} k)^4}\rp)\enspace.
    % \]
    % \if 0
    % \item More generally, let $\exp^{(\ell)}(x)$ be defined as $\exp^{(\ell+1)}(x) =\exp(\exp^{(\ell)}(x))$ and $\exp^{(0)}(x) = x$ and similarly define $\log^{(\ell)}(x)$. Then, for any $\ell \ge 1$, the function $f(\eta) = \exp^{(\ell)}(1/(1-\eta^2))$ has 
    % \[
    % \delta^*_k = \exp\lp(-k/\lp(\prod_{i=1}^{\ell-1} \lp(\log^{(i)}(k)\rp)^2 \cdot \lp(\log^{(\ell)}(k)^4\rp)\rp)\rp),
    % \]
    % \if 0
\end{itemize}
\end{corollary}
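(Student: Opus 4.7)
The plan is to instantiate Theorem~\ref{thm:general-func} with each of the two candidate functions $f$, after picking an appropriate auxiliary function $I$, verifying the four structural conditions on $f$, and unwinding the defining equation $I(t^*)=2(t^*)^2/k$ (i.e., $I(t^*)/t^* = 2t^*/k$) to recover the claimed threshold $\delta^*_k = \exp(-I(t^*)/C_f)$.

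For the polynomial blow-up $f(\eta)=1/(1-\eta^2)^p$, $p\ge 2$, the natural choice is $I(t)=t^{p/(p+1)}$. The monotonicity requirements on $I$ and on $I(t)/t = t^{-1/(p+1)}$ are trivial, and $p/(p+1)\ge 1/2$ gives $I(t)\ge c\sqrt t$. Symmetry and divergence of $f$ at $\pm 1$ are immediate; a direct calculation yields $|f'(\eta)|=2p|\eta|(1-\eta^2)^{-(p+1)}$, so $I(|f'(\eta)|)=\Theta((1-\eta^2)^{-p})=\Theta(f(\eta))$, the implicit constant being absorbable into $C_f$ (or equivalently into a constant rescaling of $I$). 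A second differentiation shows $|f''(\eta)|=\Theta((1-\eta^2)^{-(p+2)})$, which is bounded by $Cf(\eta)^2=C(1-\eta^2)^{-2p}$ precisely when $p\ge 2$. Solving $I(t)=2t^2/k$, i.e., $t^{-1/(p+1)}=2t/k$, gives $t^*=\Theta(k^{(p+1)/(p+2)})$ and therefore $I(t^*)=\Theta(k^{p/(p+2)})$, yielding the advertised $\delta^*_k=\exp(-C(p)\,k^{p/(p+2)})$.

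For the double-exponential blow-up $f(\eta)=\exp(\exp(1/(1-\eta^2)))$, set $I(t)=t/(\log t\cdot \log^2\log t)$, times a harmless universal constant that I will absorb into $C_f$. Abbreviating $u=1/(1-\eta^2)$, one computes $f'=f\cdot e^u\cdot 2\eta u^2$, so $\log|f'|=e^u+u+O(\log u)\sim e^u$ and $\log\log|f'|\sim u$ as $\eta\to\pm 1$; substituting gives $I(|f'(\eta)|)\sim 2|\eta|\,f(\eta)\le Cf(\eta)$. The bound $|f''|\le Cf^2$ is even easier, since the dominant term of $f''$ is of order $f\cdot e^{2u}\cdot u^4$, and $f^2=\exp(2e^u)$ dwarfs any quantity of the form $f\cdot\exp(O(u))\cdot\mathrm{poly}(u)$ once $u$ is large. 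Solving $I(t)=2t^2/k$, i.e., $1/(\log t\cdot \log^2\log t)=2t/k$, gives $t^*=\Theta(k/(\log k\cdot \log^2\log k))$ and $I(t^*)=\Theta(k/(\log^2 k\cdot \log^4\log k))$, matching the advertised threshold.

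The main obstacle is technical bookkeeping rather than any conceptual difficulty: the derivative bounds in conditions~(3) and~(4) of Theorem~\ref{thm:general-func} must be checked uniformly for $\eta\in(-1,1)$ rather than only asymptotically as $\eta\to\pm 1$, which requires a separate (routine) treatment of the compact regime where $1-\eta^2$ is bounded below and where every relevant quantity is a bounded continuous function of $\eta$. One must also verify $I(t)\ge c\sqrt t$ for $t$ large in the second case, which reduces to the elementary inequality $\sqrt t\ge c'\log t\cdot \log^2\log t$ for $t\ge C$. Once these routine checks are in place, the derivation of $t^*$ and $\delta^*_k$ is purely algebraic and plugs directly into Theorem~\ref{thm:general-func}.
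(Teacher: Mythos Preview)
Your proposal is correct and follows essentially the same approach as the paper: pick $I(t)\propto t^{p/(p+1)}$ for the polynomial $f$ and $I(t)=\Theta(t/(\log t\,\log^2\log t))$ for the double-exponential $f$, verify conditions (1)--(4) of Theorem~\ref{thm:general-func}, and solve $t=kI(t)/2t$ to extract $\delta_k^*$. The only minor difference is that for the second case the paper constructs $I$ more carefully as the inverse of $g(u)=2u\log u\,(\log\log u)^2$, patched with $I(t)=t$ for small $t$ and shown concave so that $I(t)/t$ is globally decreasing---this is precisely the ``technical bookkeeping'' you flag, and your asymptotic computations match theirs.
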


\section{Preliminaries}\label{sec:prelim}

\paragraph{Neighboring datasets and $\Delta$-sensitive queries.}
Given a domain $\calX$ and $n \in \mathbb{N}$, a \emph{dataset} is any element of $\calX^n$. Two datasets $\vx$ and $\vx'$ are called \emph{neighbors} if $\vx$ and $\vx'$ differ on exactly one entry. 
Given $\Delta > 0$, a \emph{$\Delta$-sensitive query} is any function $q \colon \calX^n \to \mathbb{R}$ such that for any two neighboring datasets $\vx$ and $\vx'$, it holds that $|q(\vx) - q(\vx')| \le \Delta$.

\paragraph{Interactive and non-interactive query-answering.}
The interactive setting can be viewed as an interactive game between two parties: (1) a dataset holder, which has access to some dataset $\vx\in \calX^n$, and (2) an analyst that has no information on $\vx$. In each iteration $t=1,\dots,k$, the analyst submits to the dataset-holder some query $q_i\colon \calX^n \to \mathbb{R}$, who replies with an answer $a_i$, that approximates the value $q_i(\vx)$. Notice that here $q_i$ can depend only on the previous answers $a_1,\dots,a_{i-1}$. In comparison, in the non-interactive setting the analyst submits all the queries ahead of time, and then the dataset holder answers them.

\paragraph{Differential privacy.}
Fix some mechanism $M$ for answering $k$ $\Delta$-sensitive queries and fix $\epsilon,\delta>0$. 
Let $A$ denote any query-asking strategy of the analyst that defines each query $q_i$ as a function of the previous answers $a_1,\dots,a_{i-1}$. We say that $M$ satisfies $(\epsilon,\delta)$-differential privacy if for any two neighboring datasets $\vx$ and $\vx'$, any analyst $A$ and any subset $U \subseteq \mathbb{R}^k$ of possible answers,
\[
\Pr[(a_1,\dots,a_k) \in U \mid \vx, A]
\le e^{\epsilon} \Pr[(a_1,\dots,a_k) \in U \mid \vx',A] + \delta.
\]
Intuitively, the distributions over the answers given any two neighboring datasets are similar.
\section{Proof Sketch}\label{sec:sketch}

We provide a proof sketch for Theorem~\ref{thm:general-func}, assuming that $\Delta = 1$. To simplify the presentation, we assume that the queries $q_1,\dots,q_n$ are fixed and non-adaptive.
The proof consists of two steps: first, we reduce the problem to showing a concentration inequality on a sum of independent variables, and secondly, we bound this sum.
\subsubsection*{Reducing to a concentration inequality}

In this section, our goal is to show that it suffices to prove \eqref{eq:sketch-toshow} ahead, which corresponds to bounding a weighted sum of $f'(\eta_1),\dots,f'(\eta_k)$, for randomly drawn $\eta_1,\dots,\eta_k \simiid \mu_f$.
Denote by $\vec q := (q_1,\dots,q_k)$ and $\vec a := (a_1,\dots,a_k)$ the vectors of queries and answers, respectively, and let $\Pr_{\cdot \mid \vx}$ and $\mathrm{density}_{\cdot \mid \vx}$ denote the conditional probability and density of $\vec a$ given the dataset $\vx$, respectively.
Recall that we want to show that for any two neighboring datasets $\vx$ and $\vy$ and any subset $U \subseteq \mathbb{R}^k$, we have that $\Pr_{\cdot \mid \vx}[\vec{a} \in U] \le e^\epsilon \Pr_{\cdot \mid \vy}[\vec{a}\in U]+\delta$.
A simple argument shows that it suffices to prove that
\begin{equation}\label{eq:first-comparison}
\Pr_{\cdot \mid \vx}\lp[\frac{\mathrm{density}_{\cdot \mid \vx}[\vec a]}{\mathrm{density}_{\cdot \mid \vy}[\vec a]} \ge e^{\epsilon}\rp] \le \delta%~; \quad \text{equivalently } \Pr_{\cdot \mid \vx}\lp[\log \mathrm{density}_{\cdot \mid \vx}[\vec a]\ge \log \mathrm{density}_{\cdot \mid \vx'}[\vec a] + \epsilon \rp] \le \delta
\enspace.
\end{equation}
Intuitively, this means that only a small fraction of the possible answers $\vec a \in \mathbb{R}^k$ are significantly more likely given $\vx$ compared to $\vy$. Recall that the answers $a_i$ are obtained by adding an i.i.d. noise whose density equals $\mathrm{noise}(\eta_i) := \exp(-f(\eta_i/R))/Z_{f,R}$, therefore 
\[
\mathrm{density}_{\cdot \mid \vx}[\vec a] 
= \prod_{i=1}^k \mathrm{noise}(a_i - q_i(\vx))
= \prod_{i=1}^k \exp\lp(-f\lp(\frac{a_i-q_i(\vx)}{R}\rp)\rp)/Z_{f,R}\enspace.
\]
Substituting this in \eqref{eq:first-comparison} and taking a log inside the $\Pr[]$, one obtains
\begin{equation}\label{eq:second-in-sketch}
\Pr_{\cdot \mid \vx}\lp[-\sum_{i=1}^k f\lp(\frac{a_i-q_i(\vx)}{R}\rp) + \sum_{i=1}^k f\lp(\frac{a_i-q_i(\vy)}{R}\rp) \ge \epsilon\rp] \le \delta.
\end{equation}
We substitute $\eta_i = (a_i - q_i(\vx))/R$ and $v_i = (q_i(\vx)- q_i(\vy))/R$, which also implies that $(a_i - q_i(\vy))/R = \eta_i + v_i$. Notice that $\eta_i \sim \mu_{f,1}=\mu_f$, namely $\eta_i$ is drawn from the normalized noise supported in $(-1,1)$ and notice that $v_i \in [-1/R,1/R]$, since $q_i$ is $1$-sensitive.
Then, \eqref{eq:second-in-sketch} translates to
\begin{equation}\label{eq:draft-sum-diff}
\Pr_{\vec \eta \sim \mu_{f}^k}\lp[\sum_{i=1}^k f\lp(\eta_i + v_i \rp) - \sum_{i=1}^k f\lp(\eta_i \rp) \ge \epsilon\rp] \le \delta.
\end{equation}
We then use the second-degree Taylor expansion to obtain 
$
f(\eta_i+v_i) = f(\eta_i) + v_i f'(\eta_i) + v_i^2 f''(\xi_2)/2
$
for some $\xi_i$ in the line connecting $\eta_i$ and $\eta_i+u_i$, and particularly, $\xi_i \in [\eta_i-1/R,\eta_i+1/R]$. Substituting this in \eqref{eq:draft-sum-diff} and substituting $v_i = u_i/R$, it suffices to prove the second inequality below:
\begin{equation}\label{eq:sketch-toshow}
\Pr_{\vec \eta \sim \mu_{f}^k}\lp[\sum_{i=1}^k v_i f'\lp(\eta_i\rp) + \sum_{i=1}^k v_i^2 f''(\xi_i) \ge \epsilon\rp]
\le
\Pr_{\vec \eta \sim \mu_{f}^k}\lp[\frac{1}{R}\sum_{i=1}^k u_i f'\lp(\eta_i\rp) + \frac{1}{2R^2}\sum_{i=1}^k \max_{\xi_i} |f''(\xi_i)| \ge \epsilon\rp] \le \delta
\end{equation}
where $u_i \in [-1,1]$ and the maximum is taken over 
$\xi_i \in [\eta_i-1/R,\eta_i+1/R]$. Notice that it is possible that $\xi_i \notin (-1,1)$, and for these values, we use the convention $f''(\xi_i) = \infty$.
We will bound separately by $\epsilon/2$ the sums that correspond to the first and the second derivatives. 

\subsubsection*{Proving the concentration inequality.}

%Below, we bound the sums of first and second derivatives. We start by discussing the differences between the adaptive and the non-adaptive settings, and then, for the remainder of the sketch, we assume that the queries are non-adaptive, to simplify the presentation.

%\paragraph{The adaptive vs. the non-adaptive setting.}
%In the non-adaptive setting, the queries $q_1,\dots,q_n$ are asked ahead of time, and assumed to be fixed. Notice that in \eqref{eq:sketch-toshow}, the queries $q_i$ come into play via $u_i$. Indeed, recall that $u_i = R v_i= q_i(\vx) - q_i(\vy)$. In this setting, the $u_i$ are fixed numbers. Since the noise entries $\eta_i$ are i.i.d., we derive that $\sum_i u_i f'(\eta_i)$ is a sum of i.i.d. random variables. In order to bound them, we apply a concentration inequality for a sum of i.i.d. variables. 

%In comparison, in the adaptive setting, $q_i$ is asked after observing the previous answers $a_1,\dots,a_{i-1}$. Since $q_i$ depends on $a_1,\dots,a_{i-1}$, then $u_i$ depends on $\eta_1,\dots,\eta_{i-1}$. In particular, the summands in $\sum_i u_i f'(\eta_i)$ are no longer i.i.d. Yet, we can still bound this sum if we replace the i.i.d. concentration bound by a Martingale bound. In the remainder of the sketch, we assume the non-adaptive setting, yet, the proof is nearly identical for the adaptive setting.

Before sketching the actual concentration inequality that is used to bound the sum of first derivatives, we give an intuition by applying a central limit theorem, which is valid for any fixed $\delta$ as $k \to \infty$.

\paragraph{Central limit theorem for the sum of first-derivatives.}
Here, we assume for simplicity that $u_i \in \{-1,1\}$, which implies that $\mathrm{Var}(u_if'(\eta_i)) = \mathrm{Var}(f'(\eta_i)) :=\sigma^2$. Further,
notice that since $f$ is assumed to be symmetric, we have that $\E[u_i f'(\eta_i)] = u_i \E[f'(\eta_i)] = 0$ for all $i$.
Thus, for any $t \ge 0$,
\[
\lim_{k\to \infty} \Pr\lp[\frac{\sum_{i=1}^k u_i f'(\eta_i)}{\sqrt{k}\sigma} > t\rp] = \int_t^\infty \frac{e^{-s^2/2}}{\sqrt{2\pi}}ds \le e^{-t^2/2}~; \quad \text{where } \sigma^2 = \mathrm{Var}(f'(\eta_i))\enspace.
\]
If we fix $\delta > 0$, take $t = \sqrt{\log (2/\delta)}$ and $R = 2\sigma\sqrt{k\log(2/\delta)}/\epsilon$, we obtain that for a sufficiently large $k$,
\[
\Pr\lp[\frac{\sum_{i=1}^k u_i f'(\eta_i)}{R} > \epsilon/2 \rp] \le \delta/2.
\]
This is what we wanted to prove, in terms of the sum over first derivatives, and if we prove a similar statement with respect to the sum of second derivatives, the proof concludes. Yet, this bound holds for any \emph{fixed $\delta$} in the limit $k \to \infty$. Instead, we want a bound that holds when $k\to \infty$ and $\delta \to 0$ \emph{simultaneously}.

%With a similar argument, we can bound the sum of second derivatives. Here, the expectation of each term is not zero, however, the whole sum is divided by $R^2$, and similarly, one can show that for any fixed $\delta$, the corresponding sum is bounded by $\epsilon/2$ with probability $\delta/2$, for any sufficiently large $k$.

\paragraph{Non-asymptotic bound for the sum of first derivatives.}

Here, we would like to prove a non-asymptotic result. The standard approach to bounding a sum of independent random variables, $\sum_{i=1}^k X_i$, is to prove that each individual variable $X_i$ concentrates, and this should imply a concentration inequality for the sum. Perhaps the most well-known concentration inequality is Chernoff-Hoeffding, which assumes that the variables $X_i$ are bounded. Other inequalities assume that the $X_i$ have a bounded tail. For example, Bernstein's inequality is valid if there exists some constant $C>0$ such that
\begin{equation}\label{eq:subexp}
\Pr[|X_i|>t] \le C\exp(-t/C).
\end{equation}
In our case, substituting $X_i = u_i f'(\eta_i)$, we cannot guarantee such behavior. Instead, we can guarantee
\begin{equation}\label{eq:sketch-It-bnd}
\Pr[|X_i|>t] = \Pr[|u_i f'(\eta_i)|>t] \le C\exp(-I(t)),
\end{equation}
where $I(t)\ll t$ is the function given in the theorem statement. Next, we describe how to obtain \eqref{eq:sketch-It-bnd} and then, we explain how to bound the sum $\sum_i u_i f'(\eta_i)$ assuming \eqref{eq:sketch-It-bnd}. 

To prove \eqref{eq:sketch-It-bnd}, one can use the assumption $f(\eta_i) \ge I(|f'(\eta_i)|)$ and the fact that $I(t)$ is monotonic non-decreasing, and integrate:
\begin{align*}
\Pr[|u_i f'(\eta_i)|>t]
&\le \Pr[|f'(\eta_i)|>t]
= \frac{1}{Z_f} \int_{\substack{\eta \in (-1,1)\colon\\ |f'(\eta)| \ge t}} e^{-f(\eta)} d\eta
\le \frac{1}{Z_f} \int_{\substack{\eta \in (-1,1)\colon\\ |f'(\eta)| \ge t}} e^{-I(|f'(\eta)|)} d\eta \\
&\le \frac{1}{Z_f} \int_{\substack{\eta \in (-1,1)\colon\\ |f'(\eta)| \ge t}} e^{-I(t)} d\eta
\le \frac{2}{Z_f} e^{-I(t)}.
\end{align*}
%Notice that if we want \eqref{eq:subexp} to hold, we need to bound $|f'(\eta)| \le O(f(\eta))$.

Next, we explain how to bound the a sum of independent variables satisfying \eqref{eq:sketch-It-bnd}. Here we go along the lines of \cite{bakhshizadeh2020sharp} which uses the known idea of \emph{truncation}, as explained below. We start by explaining the standard approach that is used for bounded random variables or variables satisfying \eqref{eq:subexp}, and then explain how to adapt these ideas to our setting.
The standard approach is via an analysis of the moment generating function: for any $\theta > 0$, we can compute
\[
\E\lp[ \exp\lp(\theta \sum_{i=1}^k X_i\rp)\rp]
= \prod_{i=1}^k \E\lp[ \exp(\theta X_i)\rp],
\]
and use Markov's inequality to bound:
\[
\Pr\lp[\sum_i X_i > t \rp]
= \Pr\lp[\exp\lp(\theta\sum_i X_i\rp) > \exp(\theta t) \rp]
\le \frac{\E\lp[\exp\lp(\theta \sum_i X_i\rp) \rp]}{\exp(\theta t)}
= \frac{\prod_{i=1}^k\E\lp[\exp\lp(\theta X_i\rp) \rp]}{\exp(\theta t)}\enspace.
\]
We can now optimize over $\theta>0$ to obtain the known inequalities. 

Next, we move to our setting, substituting $X_i = u_i f'(\eta_i)$.
Since $I(t) \ll t$ and \eqref{eq:subexp} does not hold, we cannot use the MGF bound, because $\E[\exp(\theta X_i)] = \infty$ for all $\theta > 0$. This is called the \emph{heavy-tailed} regime. A standard approach is to truncate the random variables. Given some fixed $L \ge 0$, we define $X_i^{\le L} = X_i \mathds{1}(X_i \le L)$ and notice that $X_i^{\le L}$ is bounded, hence its MGF is finite for any $\theta>0$. Then, in order to bound the probability that $\sum_i X_i>t$, we first bound the probability that $\sum_i X_i^{\le L}>t$ and then bound the probability that there exists an $i$ such that $X_i > L$, leading to the following bound:
\[
\Pr\lp[ \sum_{i=1}^k X_i > t \rp]
\le \Pr\lp[ \sum_{i=1}^k X_i^{\le L} > t \rp]
+ \sum_{i=1}^k \Pr[X_i > L].
\]
The first term can be bounded using the moment generating function, and the second term is simply bounded by $Cke^{-I(L)}$, using \eqref{eq:sketch-It-bnd}. Optimizing over the parameter $\theta$ in the MGF bound and over the truncation parameter $L$, one obtains the following bound, assuming that the $X_i$ have zero mean:
\begin{equation}\label{eq:sketch-concentration}
\Pr\lp[ \sum_{i=1}^k X_i > t \rp]
\le 2e^{-t^2/kC'} + C'k e^{-I(t^*)/C'} \quad \text{ for all $t \le t^*$ where $t^*$ is the solution to } t = kI(t)/2t.
\end{equation}
Here, $t^*$ is the same parameter as defined in Theorem~\ref{thm:general-func} and $C'>0$ possibly depends on $I(t)$. We note that the first term is the analogue of the CLT, and it dominates the second term for $t \le t^*$, hence, we obtain the desired bound.

\paragraph{Bounding the sum of second derivatives.}

Recall that we want to show that
\begin{equation}\label{eq:toshow-second}
\Pr\lp[\frac{1}{2R^2}\sum_{i=1}^k \max_{\xi_i\in [\eta_i-1/R,\eta_i+1/R]} |f''(\xi_i)| \le \epsilon/2\rp] \ge 1-\delta/2.
\end{equation}
First, using the condition that $|f''(\eta)| \le f(\eta)^2$, it suffices to show that
\begin{equation}\label{eq:f-squared}
\Pr\lp[\frac{1}{2R^2}\sum_{i=1}^k \max_{\xi_i} f(\xi_i)^2 \le \epsilon/2\rp] \ge 1-\delta/2. 
\end{equation}

Next, we show that with high probability over $\eta_i$, we have that $f(\xi_i) \le 2 f(\eta_i)$. Since the density is proportional to $\exp(-f(\eta_i))$, $f(\eta_i)$ is small with high probability, so it is sufficient to show that if $f(\eta_i)$ is not too large, then $f(\eta_i \pm 1/R) \le 2 f(\eta_i)$. For that purpose, we use the condition that $|f'(\eta_i)| \le f(\eta_i)^2$, which guarantees that if $f$ is not very large, then it cannot grow very fast. This will conclude that with high probability $f(\xi_i) \le 2f(\eta_i)$, and by taking a union bound, this holds with high probability simultaneously for all $i$. Then, \eqref{eq:f-squared} translates to
\[
\Pr\lp[\frac{1}{2R^2}\sum_{i=1}^k (2f(\eta_i))^2 \le \epsilon/2\rp] \ge 1-\delta/2. 
\]
Similarly to the arguments above regarding the first derivative, we can show that $\Pr[f(\eta_i)^2>t] \le C\exp(-\sqrt{t})$. Again, we use a concentration inequality similar to \eqref{eq:sketch-concentration} to bound this sum. Following \eqref{eq:sketch-toshow} and the bound on the sum of first derivatives, this concludes the proof.

%\begin{remark}
%We note a few differences between the concentration inequality applied on the second derivatives and on the first derivatives: (1) here, the summands are not zero mean, hence one has to subtract their mean before applying the result; (2) The denominator is $R^2$ instead of $R$, which plays to our favor; (3) we replace $I(t)$ with $\sqrt{t}$ and (4) we use a bound for the regime $t > t^*$, which states that $\Pr\lp[\sum_i X_i > t\rp] \le ke^{-I(t)/C'} = k e^{-\sqrt{t}/C'}$. 
%\end{remark}
\section{Simulations}\label{sec:simulations}

We compare the numerical noise-levels of bounded-noise mechanisms to the Gaussian mechanism, for fixed values of $k,\epsilon$ and $\delta$. We used a computer program to derive tighter noise bounds than the ones appearing in the proof, by an exact computation of a suitable moment generating function (the formal derivation appears in Section~\ref{sec:exp-formal}; code appears online\footnote{Code available in \url{https://github.com/yuvaldag/Bounded-Noise-DP}}). 
We note that similar techniques can be used to obtain bounds on any mechanism that uses i.i.d. noise. Yet, for the Gaussian mechanism we used the exact optimal noise level, computed in \cite{balle2018improving}. Still, the upper bounds on the bounded-noise mechanism outperforms those optimally computed values for the Gaussian mechanism. We did not compare to the other mechanisms that have better asymptotic noise than the Gaussian mechanism, as the constants associated with their bounds are significantly worse. %We did not compare to the other related mechanisms that are mentioned above, since, to the best of our knowledge, their constants were either not explicitly computed, or the computed constants were very large.

The comparison appears in Figure~\ref{fig:privacy}. For the bounded mechanism, we plot both the absolute bound on the noise and the $0.95$ probability bound on the maximal noise over $k$ queries, whereas for the Gaussian mechanism we plotted high probability bounds on the maximal noise with different confidence levels. It is worth noting that the gap between the bounded noise and the Gaussian mechanism increases as $k$ grows, as expected. For the fixed setting of $\epsilon=0.1$ and $\delta=10^{-10}$, the $0.95$-probability bound for the bounded-noise mechanism matches the $0.95$ bound by the Gaussian mechanism already at $k=10^3$, and it is $29\%$ less at $k=10^6$. Further, the absolute bound for the bounded-noise mechanism is lower by $28\%$ than the $0.999$-probability bound of the Gaussian mechanism at $k=10^6$.

Further, we present numerical comparisons for adaptive data analysis in Figure~\ref{fig:ada}. We used the same setting that was plotted in \cite{jung2020new}: we set the values of $\alpha = 0.1$ and $\beta = 0.05$, and for multiple values of $n$, we computed the number of number of adaptive queries that can be answered while keeping all the errors below $\alpha$ with probability $1-\beta$. Here, the bounded noise mechanism can answer at least twice many queries as the Gaussian mechanism for any $n \ge 8\cdot 10^5$ and it significantly outperforms the Gaussian mechanism also for smaller values of $k$.

\begin{figure}
\begin{subfigure}{0.33\textwidth}
\includegraphics[width=\linewidth]{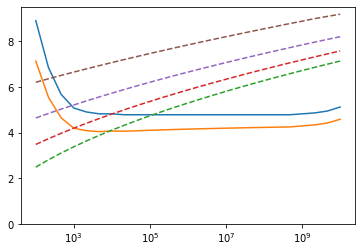}
\caption{$k$ ranges from $100$ to $10^{10}$\\ $\epsilon=0.1$ and $\delta=10^{-10}$}
\end{subfigure}
\begin{subfigure}{0.33\textwidth}
\includegraphics[width=\linewidth]{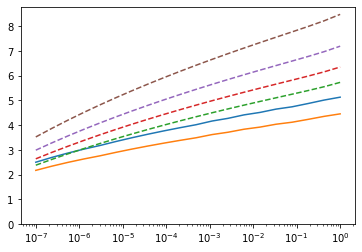}
\caption{$\epsilon$ ranges from $10^{-7}$ to $1$\\$k=10^6$ and $\delta=10^{-10}$}
\end{subfigure}
\begin{subfigure}{0.33\textwidth}
\includegraphics[width=\linewidth]{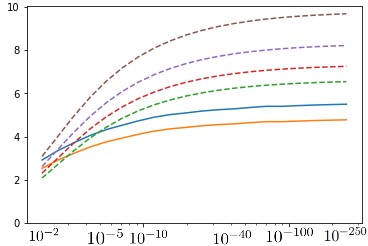}
\caption{$\delta$ ranges from $10^{-2}$ to $10^{-250}$\\$k=10^6$ and $\epsilon=0.1$}
\end{subfigure}
\caption{The errors of different mechanisms are plotted as a function of $k$, $\epsilon$ and $\delta$. The solid lines correspond to bounded noise mechanism with $f(\eta)=1/(1-\eta^2)^2$. In all of the plots, the upper solid line corresponds to the absolute bound on the noise, while the lower solid line corresponds to a $0.95$-probability bound on the maximal error over the $k$ queries. The dashed lines corresponds to the Gaussian mechanism, and they correspond to bounds on the maximal error that hold with probabilities $1-10^{-6},0.999.0.95$ and $0.5$ (larger noise corresponds to a higher probability). The values on the $x$-axis are described in each figure separately and $y$-axis corresponds to the noise divided by $\sqrt{k\log(1/\delta)}/\epsilon$.}
\label{fig:privacy}
\end{figure}

\begin{figure}[h!] 
    \centering
    \includegraphics[width=0.5\textwidth]{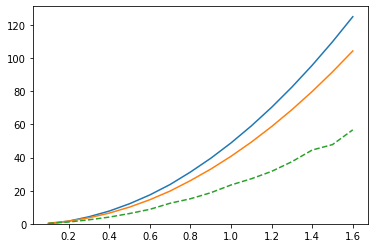}
    \caption{The number of queries $k$ (in thousands) that can be answered as a function of $n$ (in millions), while retaining $(\alpha=0.1,\beta=0.05)$-validity. The top line corresponds to the bounded noise mechanism with $f(\eta)=1/(1-\eta^2)$, the middle line to $f(\eta)=1/(1-\eta^2)^2$ (which is the same mechanism tested in Figure~\ref{fig:privacy}) and the dashed line corresponds to the Gaussian mechanism.}
    \label{fig:ada}
\end{figure}

%We compare our bound $R$ to the bounds on the maximal error over $k$ queries used by prior differentially private mechanisms. We note that the best numerical bounds of these works correspond to the Gaussian mechanism, while the bounds on the mechansims with improved asymptotic behavior CITATIONS were not explicitly computed. Further, the prior mechanisms that utilize the sparse vector technique seem to suffer a larger error for reasonable values of $k$, compared to mechansims that use i.i.d. noise.

%In figure REF below, we compare our absolute noise bound $R$ for bounded-noise mechanisms to the high-probability bound on the maximal noise over $k$ queries answered using the Gaussian mechanism. Namely, we find some value $R$ such that w.pr. $1-\beta$, $\max_{i=,\dots,k} |\eta_i| \le R$ where $\eta_i$ is the noise on query $i$. The bounds on the Gaussian mechanism cannot be improved as we computed the optimal variance that is required to achieve $(\epsilon,\delta)$ privacy, using CITE.

%We note that the bound using the Gaussian mechanism is generally larger by a multiplicative factor of $\kappa \sqrt{\ln (k/\beta)}$, for $\kappa \approx ..$ and $\beta$ is the failure probability, namely, $1-\beta$ is the probability all the errors are bounded by the given value of $\alpha$. %For the Gaussian mechanism that adds i.i.d. random noise with variance $\sigma^2$, we use the fact that for any $\delta>0$, it is $(\epsilon,\delta)$ private, 

\section{Abstract upper bound: Proof of Theorem~\ref{thm:general-func}}\label{sec:proof-upper}

First, notice that in the proof sketch we assume the queries to be non-adaptive. Hence, we start by explaining the differences that one has to make in order to adjust to the adaptive setting. Then, we proceed with the formal proof.
\paragraph{The adaptive vs. the non-adaptive setting.}
In the non-adaptive setting, the queries $q_1,\dots,q_n$ are asked ahead of time, and assumed to be fixed. Notice that in \eqref{eq:sketch-toshow}, the queries $q_i$ come into play via $u_i$. Indeed, recall that $u_i = R v_i= q_i(\vx) - q_i(\vy)$. In this setting, the $u_i$ are fixed numbers. Since the noise entries $\eta_i$ are i.i.d., we derive that $\sum_i u_i f'(\eta_i)$ is a sum of i.i.d. random variables. In order to bound them, we apply a concentration inequality for a sum of i.i.d. variables. 

In comparison, in the adaptive setting, $q_i$ is asked after observing the previous answers $a_1,\dots,a_{i-1}$. Since $q_i$ depends on $a_1,\dots,a_{i-1}$, then $u_i$ depends on $\eta_1,\dots,\eta_{i-1}$. In particular, the summands in $u_i f'(\eta_i)$ are no longer i.i.d. Yet, since $q_i$ is only a function of $\eta_1,\dots,\eta_{i-1}$, then $u_i$ is only a function of $\eta_{1},\dots, \eta_{i-1}$ and it is independent on $\eta_{i},\dots,\eta_k$. Since the $\eta_i$ variables are i.i.d., it holds that $\E[u_i f'(\eta_i) \mid \eta_1,\dots,\eta_{i-1}]=u_i \E[f'(\eta_i) \mid \eta_1,\dots,\eta_{i-1}]=0$. In particular, the partial sums of $\sum_i f'(\eta_i) u_i$ constitute of a Martingale whose deviation can be bounded using a concentration inequality.

\subsubsection*{Formal proof}
Here, we use asymptotic notation, e.g. $O()$, to hide constants that might depend on the log-density function $f$. Notice that it suffices to prove for $\Delta = 1$ (we can always scale the the queries and the noise by the same amount, while retaining $(\epsilon,\delta)$-privacy).

We start with a simple sufficient condition for the mechanism to be $(\epsilon,\delta)$ differentially private:
\begin{restatable}{lemma}{lemgeneral}
	\label{lem:general}
	Let $\epsilon,\delta > 0$ and let $k\in\mathbb{N}$. Let $f$ and $\vec \eta=(\eta_1,\dots,\eta_k) \overset{iid}{\sim} \mu_{f,1}$. Let $R>0$ and assume that for any random variables $v_1,\dots,v_n \in [-1/R,1/R]$ such that $v_i$ is a deterministic function of $\eta_1,\dots,\eta_{i-1}$, it holds that
	\begin{equation}\label{eq:1}
	\Pr\lp[\sum_{j=1}^k f(\eta_j + v_j) \le \sum_{j=1}^k f(\eta_j) + \epsilon\rp]
	\ge 1-\delta.
	\end{equation}
	Then, $M_{f,R}$ is $(\epsilon,\delta)$ differentially private.
   \end{restatable}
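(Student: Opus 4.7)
The plan is to reduce $(\epsilon,\delta)$-differential privacy to a high-probability bound on the privacy-loss random variable, and then to recognize the hypothesis \eqref{eq:1} as exactly that bound. Concretely, fix an analyst $A$ and two neighboring datasets $\vx,\vy$; we need $\Pr[\vec a\in U\mid \vx]\le e^\epsilon\Pr[\vec a\in U\mid \vy]+\delta$ for every measurable $U\subseteq \mathbb{R}^k$. Setting $B=\{\vec a\colon \mathrm{density}_{\cdot\mid\vx}(\vec a)/\mathrm{density}_{\cdot\mid\vy}(\vec a)>e^\epsilon\}$, on $B^c$ the density ratio is at most $e^\epsilon$, so
\[
\Pr_{\cdot\mid\vx}[\vec a\in U] \le \Pr_{\cdot\mid\vx}[\vec a\in U\cap B^c] + \Pr_{\cdot\mid\vx}[\vec a\in B] \le e^\epsilon \Pr_{\cdot\mid\vy}[\vec a\in U] + \Pr_{\cdot\mid\vx}[\vec a\in B].
\]
Hence it suffices to prove $\Pr_{\cdot\mid\vx}[\vec a\in B]\le \delta$, which is precisely the reduction \eqref{eq:first-comparison} used in the sketch.

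Next I would factorize the answer density under each dataset via the chain rule, which is legitimate because the noise attached to each query is independent of the past. Once the analyst $A$ is fixed, $q_j$ is a deterministic function of $(a_1,\dots,a_{j-1})$, so conditionally on the past, $a_j$ has density $\mu_{f,R}(a_j-q_j(\vx))$ under $\vx$ and $\mu_{f,R}(a_j-q_j(\vy))$ under $\vy$. The ratio telescopes and the normalization constants $Z_{f,R}$ cancel, giving
\[
\frac{\mathrm{density}_{\cdot\mid\vx}(\vec a)}{\mathrm{density}_{\cdot\mid\vy}(\vec a)} = \exp\lp(\sum_{j=1}^k \lp[f\lp(\tfrac{a_j-q_j(\vy)}{R}\rp) - f\lp(\tfrac{a_j-q_j(\vx)}{R}\rp)\rp]\rp).
\]
Thus $\{\vec a\in B\}$ coincides with the event $\{\sum_j [f((a_j-q_j(\vy))/R) - f((a_j-q_j(\vx))/R)] > \epsilon\}$.

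Finally I change variables to $\eta_j := (a_j-q_j(\vx))/R$ and $v_j := (q_j(\vx)-q_j(\vy))/R$, so that $(a_j-q_j(\vy))/R = \eta_j+v_j$ and the event in $B$ becomes $\{\sum_j [f(\eta_j+v_j)-f(\eta_j)]>\epsilon\}$. Under the law conditional on $\vx$, the noise satisfies $a_j-q_j(\vx) = R\eta_j$ with $\eta_j\sim\mu_{f,1}=\mu_f$ drawn independently of $(a_1,\dots,a_{j-1})$; therefore $\eta_1,\dots,\eta_k\simiid \mu_f$. Because $q_j$ depends only on $(a_1,\dots,a_{j-1})$ and, once $\vx$ is fixed, only on $(\eta_1,\dots,\eta_{j-1})$, the variable $v_j$ is a deterministic function of $(\eta_1,\dots,\eta_{j-1})$. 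Neighboring datasets combined with $1$-sensitivity give $|q_j(\vx)-q_j(\vy)|\le 1$, hence $v_j\in[-1/R,1/R]$. These are precisely the hypotheses on $(v_j)$ required by \eqref{eq:1}, so applying \eqref{eq:1} to this adapted sequence yields $\Pr_{\cdot\mid\vx}[\vec a\in B]\le\delta$ and the proof concludes.

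The only subtle point is verifying the adapted, bounded structure of $(v_j)$ in the adaptive regime; once the analyst is fixed and one views the whole interaction as a deterministic function of the noise history, this structure falls out immediately, and the lemma becomes a direct transcription of the standard privacy-loss criterion into the language of $\mu_f$-based mechanisms (with $\Delta$ normalized to $1$ as announced at the start of the formal proof). There is no nontrivial computation involved beyond the chain-rule factorization of the density.
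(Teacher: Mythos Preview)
Your proposal is correct and follows essentially the same approach as the paper: both fix neighboring datasets, define the ``bad'' event where the log-density ratio exceeds $\epsilon$ (the paper writes its complement $G$ in $\vec\eta$-space, you write $B$ in $\vec a$-space), bound $\Pr_{\cdot\mid\vx}$ of the bad event by $\delta$ via the hypothesis after the change of variables $\eta_j=(a_j-q_j(\vx))/R$, $v_j=(q_j(\vx)-q_j(\vy))/R$, and conclude by the standard decomposition $\Pr[U]\le\Pr[U\cap B^c]+\Pr[B]$. The only cosmetic difference is that the paper writes out the integral chain explicitly while you invoke the privacy-loss criterion \eqref{eq:first-comparison}; the content is identical.
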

   \begin{proof}
	First, we can assume that $Z_{f,R} = \int_{-R}^R e^{-f(\eta/R)}d\eta= 1$ (otherwise, $f$ can be replaced with $f + \log Z_{f,R}$).
	Let $\vx$ and $\vx'$ denote two neighboring datasets, let $\vec a=(a_1,\dots,a_k)$ denote the random output of the algorithm on input $\vx = (x_1,\dots,x_n)$ denote by $\vec a'=(a_1',\dots,a_k')$ its output on input $\vx'$.
	Let $U \subseteq \mathbb{R}^k$ and our goal is to show that $\Pr[\vec a \in U] \le e^\epsilon \Pr[\vec a'\in U]+\delta$. 
	Denote $v_j = (q_j(\vx) - q_j(\vx'))/R$ for $j \in [k]$ and notice that $v_j$ is a deterministic function of $\eta_1,\dots,\eta_{j-1}$. Denote 
	\[
	G = \lp\{ \vec{\eta} \in \mathbb{R}^k \colon 
	\sum_j f(\eta_j) \ge \sum_j f(\eta_j + v_j) - \epsilon \rp\}\enspace.
	\]
	Notice that by \eqref{eq:1}, $\Pr[\vec{\eta}\notin G] \le \delta$. Denote $(U-\vec{b})/c = \{(\vec{u}-\vec{b})/c \colon u \in U \}$ for any $\vec{b} \in \mathbb{R}^k$ and $c \ne 0$, define $\vec q(\vx) = (q_1(\vx),\dots,q_k(\vx))$, notice that $a_j = q_j(\vx) + R\eta_j$ and estimate:
	\begin{align*}
	\Pr[\vec a \in U]
	&= \Pr\lp[R\vec\eta \in \lp(U-\vec q(\vx)\rp)\rp]
	\le \Pr\lp[\vec{\eta} \in \lp(\lp(U-\vec q(\vx)\rp)/R\rp)\cap G\rp] + \Pr\lp[\vec{\eta}\notin G \rp]\\
	&\le \Pr\lp[\vec{\eta} \in \lp(\lp(U-\vec q(\vx)\rp)/R\rp)\cap G\rp] + \delta
	= \int_{\vec{u} \in \lp(\lp(U-\vec q(\vx)\rp)/R\rp)\cap G} e^{-\sum_i f(u_i)} du+\delta\\
	&\le \int_{\vec{u} \in \lp(\lp(U-\vec q(\vx)\rp)/R\rp)\cap G} e^{\epsilon -\sum_i f(u_i + v_i)} du+\delta
	= e^\epsilon \Pr\lp[\vec{\eta} \in \lp(\lp(U-\vec q(\vx) + R\vec{v}\rp)/R\rp)\cap G\rp] + \delta\\
	&= e^\epsilon \Pr\lp[\vec{\eta} \in \lp(\lp(U-\vec q(\vx')\rp)/R\rp)\cap G\rp] + \delta 
	\le e^\epsilon \Pr\lp[R\vec{\eta} \in \lp(U-\vec q(\vx')\rp)\rp] + \delta\\
	&= e^\epsilon \Pr[\vx'\in U] + \delta.
	\end{align*}
\end{proof}

To prove that \eqref{eq:1} holds, one can approximate $f$ by its second-degree Taylor expansion, thus deriving the following statement.
\begin{restatable}{lemma}{lemTaylor}\label{lem:Taylor}
	Let $\epsilon,\delta,R > 0$ and let $k\in\mathbb{N}$. Let $f$ and let $\vec \eta \simiid \mu_{f,1}$. Assume that for any $u_1,\dots,u_n \in [-1,1]$ such that $u_j$ is a deterministic function$\eta_{1},\cdots, \eta_{j-1}$, the following holds:
	\begin{equation}\label{eq:taylor}
	\Pr\lp[ 
	\forall i,\ \eta_i \in (-1+1/R, 1-1/R), \ \text{and }
	\lp|\sum_{i=1}^k \frac{u_i f'(\eta_i)}{R}\rp|
	+ \sum_{i=1}^k \max_{\xi_i \colon |\xi_i-\eta_i| \le 1/R} \frac{|f''(\xi_i)|}{2R^2}
    \le \epsilon
	\rp] \ge 1-\delta.
	\end{equation}
	Then, $M_{f,R}$ is $(\epsilon,\delta)$ differentially private.
\end{restatable}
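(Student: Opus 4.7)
The plan is to invoke \Cref{lem:general} and show that its hypothesis \eqref{eq:1} follows from \eqref{eq:taylor} via a second-degree Taylor expansion. Given adaptive random variables $v_1,\dots,v_k \in [-1/R,1/R]$ with $v_j$ depending only on $\eta_1,\dots,\eta_{j-1}$, I set $u_j = R v_j \in [-1,1]$. Since $v_j$ is a deterministic function of $\eta_1,\dots,\eta_{j-1}$, so is $u_j$, so \eqref{eq:taylor} applies to this particular adaptive choice.

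Next, I work on the good event from \eqref{eq:taylor}, on which every $\eta_j$ lies in $(-1+1/R,\,1-1/R)$. Combined with $|v_j|\le 1/R$, this guarantees that the entire closed segment connecting $\eta_j$ and $\eta_j+v_j$ is contained in $(-1,1)$, where $f$ is twice differentiable. Taylor's theorem with Lagrange remainder then yields, for each $j$, some $\xi_j$ with $|\xi_j-\eta_j|\le 1/R$ such that
\[
f(\eta_j+v_j)-f(\eta_j) \;=\; v_j\, f'(\eta_j) + \tfrac{v_j^2}{2} f''(\xi_j).
\]
Summing over $j$, using $v_j f'(\eta_j) = u_j f'(\eta_j)/R$ and $v_j^2 \le 1/R^2$, and bounding $|f''(\xi_j)|$ by the maximum over $\xi_j$ in the relevant interval,
\[
\sum_{j=1}^k \bigl[f(\eta_j+v_j)-f(\eta_j)\bigr]
\;\le\; \Bigl|\sum_{j=1}^k \tfrac{u_j f'(\eta_j)}{R}\Bigr| + \sum_{j=1}^k \max_{\xi_j:\,|\xi_j-\eta_j|\le 1/R} \tfrac{|f''(\xi_j)|}{2R^2}.
\]
By \eqref{eq:taylor}, the right-hand side is at most $\epsilon$ outside a set of probability at most $\delta$. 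Hence \eqref{eq:1} holds, and \Cref{lem:general} concludes that $M_{f,R}$ is $(\epsilon,\delta)$ differentially private.

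The proof is essentially a bookkeeping exercise, and I do not anticipate a serious obstacle. The two points that require some care are: (i) verifying the adaptive structure is preserved after the substitution $u_j = R v_j$, which is immediate from the definition; and (ii) ensuring the Taylor remainder is well-defined, which is why the hypothesis \eqref{eq:taylor} explicitly includes the event $\eta_i \in (-1+1/R,\,1-1/R)$ — this pushes $\eta_i+v_i$ and the intermediate point $\xi_i$ strictly inside the open domain of $f$ and $f''$, avoiding the singularity where $f$ diverges.
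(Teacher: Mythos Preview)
Your proposal is correct and follows essentially the same approach as the paper's proof: set $u_j = Rv_j$, apply the second-order Taylor expansion with Lagrange remainder, and bound the resulting sum by the quantity in \eqref{eq:taylor} to verify the hypothesis \eqref{eq:1} of \Cref{lem:general}. Your treatment is in fact slightly more careful than the paper's, since you explicitly justify why the remainder point $\xi_j$ stays in $(-1,1)$ via the event $\eta_i\in(-1+1/R,\,1-1/R)$.
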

\begin{proof}
	We will show that \eqref{eq:taylor} implies \eqref{eq:1}, where $u_i$ in \eqref{eq:taylor} is replaced with $Rv_i$ in \eqref{eq:1}. In particular, notice that we can write
	%We will apply Lemma~\ref{lem:general}. Fix $v_1,\dots,v_n \in [-1,1]$ and we will show that $\sum_j f(\eta_j) \ge \sum_j f(\eta_j)-\epsilon$ holds with probability at least $1-\delta$. It suffices to prove that this inequality holds whenever the high probability event in Lemma~\ref{lem:Taylor} holds. Assume that this event holds. Write 
	$f(\eta_j+v_j)$ using the Taylor expansion $f(\eta_j + v_j) = f(\eta_j) + f'(\eta_j)v_j + f''(\xi_j) v_j^2/2$ where $\xi_j$ is a point in the line connecting $\eta_j$ and $\eta_j+v_j$. We have
	\begin{align*}
	\sum_j f(\eta_j + v_j)
	= \sum_j f(\eta_j) + \sum_j f'(\eta_j) v_j + \sum_j \frac{f''(\xi_j) v_j^2}{2}\\
	\le \sum_j f(\eta_j) 
	+ \lp|\sum_j \frac{f'(\eta_j) u_j}{R}\rp| 
	+ \sum_j \max_{\xi_j \colon |\xi_j - \eta_j| \le 1/R}\frac{|f''(\xi_j)|}{2R^2}.
	\end{align*}
	Thus, whenever the high-probability event in \eqref{eq:taylor} holds, the event of \eqref{eq:1} also holds. In particular, \eqref{eq:taylor} implies \eqref{eq:1}, which concludes the proof.
\end{proof}

To apply Lemma~\ref{lem:Taylor}, we would like to prove concentration of the sum of the derivatives of $f$. In order to analyze the concentration properties of a sum of random variables, it is common to consider each variable separately and then use a concentration result for sums. We start by providing a definition of what it means for a random variable to concentrate:
\begin{definition}
Given $C>0$ and a function $I(t) \colon [0,\infty)\to\mathbb{R}$, we say that a random variable $X$ is $(I(t),C)$ bounded if for all $t$, $\Pr[|X|>t] \le C \exp(-I(t))$.
\end{definition}

Given a sum of $(I(t),C)$ bounded variables, we can obtain the following concentration inequality, that is proven in Section~\ref{subsec:probabilistic}, using ideas from \cite{bakhshizadeh2020sharp}.

\begin{restatable}{proposition}{propConcentrateMartingale}\label{prop:heavytail}
Let $X_1 \cdots X_n$ be a Martingale, namely, for all $i$, $\E[X_i \mid X_1,\dots,X_{i-1}] = 0$. Further, assume that there exists a function $I \colon [0,\infty) \to [0,\infty)$ and $C>0$ such that
\[
\Pr[|X_{i}| > t \mid X_1 \cdots X_{i-1}] \le C\exp(-I(t)),
\]
and additionally, $I(t) \le t$ and $I(t)/t$ is monotonic decreasing. Let $M>0$ be such that $M \ge C\int_0^\infty (t^2 + 2t) e^{-I(t)/2}$ and let $t^*$ be the unique solution to 
\[
t = MnI(t)/2t.
\]
Then, for any $t > 0$,
\[
\Pr\lp[\lp|\sum_{i=1}^n X_i\rp| > t\rp]
\le \begin{cases}
2 e^{-t^2/2Mn} + Cn e^{-I(t^*)} & t \le t^* \\
2 e^{-I(t)/4} + Cne^{-I(t)} & t > t^*
\end{cases}\enspace.
\]
\end{restatable}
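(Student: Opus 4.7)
The plan is to follow the truncation strategy outlined in the sketch, combining a Chernoff estimate on a truncated, recentered martingale with a union bound handling the exceptional event that some $|X_i|$ is too large. By symmetry (applying the same argument to $(-X_i)$), it suffices to bound the upper tail $\Pr[\sum X_i > t]$, which is where the factor of $2$ in the statement comes from. Fix a truncation level $L$ to be chosen, and decompose $X_i = X_i^L + X_i \mathbf{1}(|X_i| > L)$ with $X_i^L := X_i \mathbf{1}(|X_i| \le L)$. The event that $|X_i| > L$ for some $i$ has probability at most $Cn e^{-I(L)}$ by union bound on the conditional tail hypothesis, so on its complement $\sum X_i = \sum X_i^L$ and it remains to control $\Pr[\sum X_i^L > t]$.

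Since the $X_i^L$ are no longer centered, the plan is to recenter: set $\mu_i := \E[X_i^L \mid \mathcal{F}_{i-1}] = -\E[X_i \mathbf{1}(|X_i|>L) \mid \mathcal{F}_{i-1}]$ and $Y_i := X_i^L - \mu_i$. Then $(Y_i)$ is a martingale difference sequence with $|Y_i| \le L + |\mu_i|$, and the bias is controlled by the tail integral $|\mu_i| \le C \int_L^\infty e^{-I(s)} ds + CL e^{-I(L)}$. For the truncation levels that arise below this bias is negligibly small compared to $t$, so $\Pr[\sum X_i^L > t]$ can be replaced by $\Pr[\sum Y_i > t']$ for a threshold $t'$ close to $t$ at the cost of harmless constants.

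The main quantitative step is a conditional MGF bound for $Y_i$. Starting from $e^x - 1 - x \le \tfrac{x^2}{2} e^{|x|}$ together with the martingale property, integration by parts gives
\[
\E[e^{\theta Y_i} \mid \mathcal{F}_{i-1}] \le 1 + \tfrac{\theta^2}{2} \int_0^\infty (2u + \theta u^2) e^{\theta u} \Pr[|Y_i| > u \mid \mathcal{F}_{i-1}] \, du.
\]
If $\theta \le I(L)/(2L)$, then the assumption that $I(t)/t$ is decreasing yields $\theta u \le I(u)/2$ for every $u \in (0, L]$, so $e^{\theta u} \Pr[|Y_i|>u \mid \mathcal{F}_{i-1}] \le C e^{-I(u)/2}$; together with $\theta \le 1$ (from $I(t) \le t$), the integral is at most $M$ by the hypothesis on $M$. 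Hence $\E[e^{\theta Y_i}\mid \mathcal{F}_{i-1}] \le \exp(\theta^2 M/2)$ almost surely, and iterated conditioning plus Markov gives
\[
\Pr\Bigl[ \sum_{i=1}^n Y_i > s \Bigr] \le \exp\bigl( -\theta s + \tfrac{n \theta^2 M}{2} \bigr), \qquad \theta \in (0, I(L)/(2L)].
\]

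To finish, I would split into two cases. For $t \le t^*$, take $L = t^*$, so that the exceptional event contributes $Cne^{-I(t^*)}$; the unconstrained optimizer $\theta = s/(Mn)$ lies in the admissible range precisely when $s \le MnI(t^*)/(2t^*) = t^*$ (by the defining equation of $t^*$), which covers $s = t \le t^*$ and yields the subgaussian term $\exp(-t^2/(2Mn))$. For $t > t^*$, take $L = t$; the unconstrained optimizer is no longer feasible, so take the boundary value $\theta = I(t)/(2t)$. The Chernoff exponent becomes $-I(t)/2 + nMI(t)^2/(8t^2)$, and the second term equals $(I(t)/4) \cdot (MnI(t)/(2t))/t$, which is at most $I(t)/4$ since $MnI(t)/(2t) \le MnI(t^*)/(2t^*) = t^* \le t$ (by the monotonicity of $I(t)/t$ and $t > t^*$); this yields $e^{-I(t)/4}$, and the exceptional event contributes $Cne^{-I(t)}$. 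The main obstacle is the coordinated choice of truncation level and Chernoff parameter so that both regimes are handled by a single clean bound, while ensuring that the integral defining $M$ absorbs the contributions from both the quadratic and the exponential weight; a secondary subtlety is showing that the recentering biases $\mu_i$ are small enough to be absorbed into slack, which is where the assumption $I(t) \le t$ and the tail bound on $|X_i|$ come into play simultaneously.
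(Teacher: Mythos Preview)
Your proposal is correct and follows essentially the same route as the paper: truncate at a level $L$, control the exceptional event $\{\exists i:|X_i|>L\}$ by a union bound, bound the conditional MGF of the truncated increments via the inequality $e^z\le 1+z+\tfrac{z^2}{2}e^{|z|}$ together with integration by parts and the constraint $\theta\le I(L)/(2L)$, and then make the same two choices of $(L,\theta)$---namely $L=t^*,\ \theta=t/(Mn)$ for $t\le t^*$ and $L=t,\ \theta=I(t)/(2t)$ for $t>t^*$---to obtain the two regimes. The one place where you diverge is the explicit recentering $Y_i=X_i^L-\mu_i$: the paper simply applies its zero-mean MGF lemma to $X_i^{\le L}$ directly, silently treating the truncation bias as negligible, whereas you isolate it and argue it can be absorbed. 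That extra care is warranted, but note that to make it fully rigorous you would also need to check that the shift by $\mu_i$ does not spoil the tail bound you use for $\Pr[|Y_i|>u\mid\mathcal F_{i-1}]$ (you invoke $Ce^{-I(u)}$ for $Y_i$, which strictly speaking holds for $X_i$); since $|\mu_i|$ is exponentially small in $I(L)$ this is harmless, but it is the same sort of slack you flag at the end.
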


To give some intuition, we note that for any $t \le t^*$, the first term dominates the second, and here, we get a sub-Gaussian concentration, namely, the tail behaves as a Gaussian tail, where the variance of the Gaussian is replaced with $Mn$, where $M$ is just a constant that depends on $I(t)$. Yet, one could not hope for a sub-Gaussian concentration for all $t\ge 0$. After all, a single variable $X_i$ decays slower than a Gaussian. At some point, the heavy tail of the single $X_i$ will dominate the sub-Gaussian tail of the sum, and this happens exactly at $t^*$. From that point onward, the tail is dictated by the function $I(t)$.

In order to apply Proposition~\ref{prop:heavytail}, we would like to show concentration properties of a single instance of $f'(\eta)$ and $f''(\eta)$. This follows from the fact that $f'$ and $f''$ are bounded in terms of some function of $f$, using the following simple lemma:
\begin{lemma}\label{lem:bnd-single}
Let $\eta$ be a random variable supported on $(-1,1)$, with density $e^{-f(\eta)}/Z$ where $Z$ is the normalizing constant and $f(\eta) > 0$. Assume that $h\colon (-1,1)\to\mathbb{R}$ is such that $f(\eta) \ge I(|h(\eta)|)$ for all $\eta$, for some increasing $I \colon (0,\infty) \to (0,\infty)$. Then, for any $t \ge 0$,
\[
\Pr[|h(\eta)| \ge t] \le \frac{2}{Z} e^{-I(t)}.
\]
\end{lemma}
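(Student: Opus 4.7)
The plan is to prove this by direct integration against the density, using only the pointwise inequality $f(\eta) \ge I(|h(\eta)|)$ and monotonicity of $I$. The result is essentially the same calculation already sketched informally in Section~\ref{sec:sketch} when deriving \eqref{eq:sketch-It-bnd}, so the proof should be very short and mostly mechanical.

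First, I would write the probability as an integral:
\[
\Pr[|h(\eta)| \ge t] \;=\; \frac{1}{Z} \int_{\{\eta \in (-1,1) \,:\, |h(\eta)| \ge t\}} e^{-f(\eta)} \, d\eta.
\]
On the domain of integration we have $|h(\eta)| \ge t$, so by the monotonicity of $I$ we get $I(|h(\eta)|) \ge I(t)$, and then the hypothesis $f(\eta) \ge I(|h(\eta)|)$ gives $f(\eta) \ge I(t)$. Hence $e^{-f(\eta)} \le e^{-I(t)}$ pointwise on the region of integration.

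Pulling the constant $e^{-I(t)}$ out of the integral, the remaining mass is bounded by the Lebesgue measure of the set $\{\eta \in (-1,1) : |h(\eta)| \ge t\}$, which is at most the length of $(-1,1)$, namely $2$. Combining,
\[
\Pr[|h(\eta)| \ge t] \;\le\; \frac{1}{Z} \cdot 2 \cdot e^{-I(t)} \;=\; \frac{2}{Z} e^{-I(t)},
\]
which is exactly the desired bound. There is no real obstacle here: the lemma is essentially a one-line application of a monotone pointwise comparison followed by the trivial bound on the measure of the support. The only minor subtlety is that $I$ need only be assumed increasing (not strictly so) for $I(|h(\eta)|) \ge I(t)$ to hold on the relevant set, which matches the hypothesis of the lemma.
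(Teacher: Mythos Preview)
Your proof is correct and essentially identical to the paper's: both write the probability as an integral of the density over $\{\eta:|h(\eta)|\ge t\}$, use $f(\eta)\ge I(|h(\eta)|)\ge I(t)$ on that set by monotonicity of $I$, and then bound the remaining integral by the length $2$ of $(-1,1)$.
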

\begin{proof}
%We will bound the probability that $h(\eta) \ge t$, and the probability that $h(\eta) \le -t$ is obtained by bounding the probability that $-h(\eta) \ge t$. 
A simple calculation shows
\[
\Pr[|h(\eta)| \ge t]
= \frac{1}{Z} \int_{\substack{\eta \in (-1,1)\colon\\ |h(\eta)| \ge t}} e^{-f(\eta)} d\eta
\le \frac{1}{Z} \int_{\substack{\eta \in (-1,1)\colon\\ |h(\eta)| \ge t}} e^{-I(|h(\eta)|)} d\eta
\le \frac{1}{Z} \int_{\substack{\eta \in (-1,1)\colon\\ |h(\eta)| \ge t}} e^{-I(t)} d\eta
\le \frac{2}{Z} e^{-I(t)}.
\]
\end{proof}

We can apply Lemma~\ref{lem:bnd-single} to prove a bound on the weighted sum of derivatives:
\begin{lemma} \label{lem:bnd-sum-derivatives}
    Let $t^*$ be as in the definition of Theorem~\ref{thm:general-func}. It holds that for any $t \le t^*$ and any sufficiently large $k$,
    \[
    \Pr\lp[ \lp|\sum_j f'(\eta_j) u_j\rp| \ge t \rp] \le e^{-t^2/2C_f k} + e^{-I(t^*)/2},
    \]
    where $C_f$ depends only on $f$.
\end{lemma}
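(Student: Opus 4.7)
The plan is to apply Proposition~\ref{prop:heavytail} to the sequence $X_j := f'(\eta_j) u_j$, for $j=1,\dots,k$. Three things need to be checked before invoking the proposition: the Martingale property, the heavy-tail bound on each increment, and that the resulting constants match the form claimed in the lemma.

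For the Martingale property, I would use that $u_j$ is a deterministic function of $\eta_1,\dots,\eta_{j-1}$ (this is the adaptive setup as discussed before Section 5.1), while $\eta_j \sim \mu_f$ is independent of $\eta_1,\dots,\eta_{j-1}$. Since $f$ is symmetric by condition (1), $f'$ is odd, hence $\E[f'(\eta_j)] = 0$. Therefore
\[
\E[X_j \mid X_1,\dots,X_{j-1}] = \E[u_j f'(\eta_j) \mid \eta_1,\dots,\eta_{j-1}] = u_j \E[f'(\eta_j)] = 0,
\]
so $(X_j)$ is a Martingale difference sequence.

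For the tail bound, since $|u_j|\le 1$ we have $|X_j| \le |f'(\eta_j)|$, so conditional on $\eta_1,\dots,\eta_{j-1}$ it suffices to control $|f'(\eta_j)|$ with $\eta_j \sim \mu_f$. Applying Lemma~\ref{lem:bnd-single} with $h = f'$, and using condition (3) of the theorem statement, $I(|f'(\eta)|) \le f(\eta)$, yields
\[
\Pr[|X_j| > t \mid X_1,\dots,X_{j-1}] \le \frac{2}{Z_f} e^{-I(t)},
\]
so the tail condition in Proposition~\ref{prop:heavytail} holds with a constant $C = 2/Z_f$ that depends only on $f$. The monotonicity hypothesis $I(t)\le t$ and $I(t)/t$ decreasing is assumed globally in the setup.

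Now I would invoke Proposition~\ref{prop:heavytail} with $n = k$. Set $M$ to be any constant upper bound on $C\int_0^\infty(t^2+2t)e^{-I(t)/2}dt$; since $I(t) \ge c\sqrt t$ for large $t$, this integral is finite and $M$ depends only on $f$. The proposition gives, for $t \le t^*$,
\[
\Pr\lp[\lp|\sum_j X_j\rp| > t\rp] \le 2 e^{-t^2/(2Mk)} + Ck\, e^{-I(t^*)}.
\]
To match the stated form, I would choose $C_f$ large enough (depending only on $f$) so that $2 e^{-t^2/(2Mk)} \le e^{-t^2/(2C_f k)}$ for all relevant $t$ (absorbing the factor of $2$ into the constant). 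For the second term, note that by definition $2(t^*)^2 = M k I(t^*)$, and since $I(t)\ge c\sqrt{t}$ for $t$ large, $t^*$ grows polynomially in $k$, hence $I(t^*) \to \infty$ faster than $\log k$. Thus for all sufficiently large $k$, $Ck \le e^{I(t^*)/2}$, giving $Ck\, e^{-I(t^*)} \le e^{-I(t^*)/2}$, which is the bound claimed.

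The only substantive step is verifying the tail bound on $|X_j|$ and checking that the growth of $t^*$ dominates $\log k$; both are straightforward from the conditions on $I$ and the definition of $t^*$. The remainder is a direct application of Proposition~\ref{prop:heavytail} with constants collected into $C_f$.
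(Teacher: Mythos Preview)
Your proof is correct and follows essentially the same approach as the paper. One point the paper makes explicit that you gloss over: the $t^*$ in Proposition~\ref{prop:heavytail} (the solution of $t = MkI(t)/2t$) is not the same as the $t^*$ of Theorem~\ref{thm:general-func} (the solution of $t = kI(t)/2t$)---the paper takes $M \ge 1$ so that the proposition's $t^*$ dominates the theorem's, which is what guarantees the sub-Gaussian branch of the proposition applies for every $t \le t^*$ as the lemma requires.
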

\begin{proof}
    Applying Lemma~\ref{lem:bnd-single}, we have that $f'(\eta_i)$ is zero mean and is $(I(t),2/Z_{f,1})$ bounded, where $2/Z_{f,1}$ is a constant. Since $u_i$ and $\eta_i$ are independent conditioned on $\eta_1,\dots,\eta_{i-1}$ and since $|u_i|\le 1$, we derive that conditioned on $\eta_1,\dots,\eta_{i-1}$, the random variable $u_i f'(\eta_i)$ is also zero mean and $(I(t),2/Z_{f,1})$-bounded. We would like to apply Proposition~\ref{prop:heavytail}. Let us discuss what values we substitute in that proposition: 
    \begin{itemize}
    \item
    First, consider the value $M$, that has to be lower bounded by $C\int_0^\infty (t^2+2t)e^{-I(t)/2}dt$. 
    Notice that this integral converges due to the assumption in Theorem~\ref{thm:general-func} that $I(t) \ge \Omega(\sqrt{t})$. We will substitute $M$ to the maximum of that integral and $1$
    \item
    Next, notice that we substitute $n$ with $k$. 
    \item 
    Further, let us distinguish between the value $t^*$ appearing in Proposition~\ref{prop:heavytail}, that we will denote here by $t'$, which is the solution of
    $t = MkI(t)/2t$, and the value appearing Theorem~\ref{thm:general-func}, which is the solution of $t = kI(t)/2t$, that we denote here by $t^*$. We note that $t' \ge t^*$, since $M\ge 1$ and since $I(t)/t$ is monotonic decreasing in $t$.
    \end{itemize}
    We obtain that for any $t \le t^*$,
    \[
    \Pr\lp[\lp| \sum_{i=1}^k v_i f'(\eta_i) \rp|>t\rp]
    \le 2e^{-t^2/2Mk} + Cke^{-I(t^*)}.
    \]
    Let us bound the second term, $Cke^{-I(t^*)}$, and for that purpose, let us obtain a lower bound on $t^*$: first, for a sufficiently large $k$, it holds that $t^* \ge 1$. Indeed, for any $t \le 1$ and for a sufficiently large $k$, the value at the right hand side of the equation $t = kI(t)/2t$ is $kI(t)/2t \ge kI(1)/(2\cdot 1) \ge 1 \ge t$, which follows from the monotonicity assumption on $I(t)/t$. By definition of $t^*$ we have $t^* \ge 1$. From the definition of $t^*$ we have that  $t^* = kI(t^*)/2t^*$, hence, $t^* = \sqrt{kI(t^*)/2}$. For any sufficiently large $k$, $t^* \ge 1$, hence, since $I(t)$ is increasing, we have that $t^* \ge \sqrt{kI(1)/2}$.
    Further, recall that $I(t) \ge \Omega(\sqrt{t})$, which implies that $I(t^*)\ge \Omega(\sqrt{k})$, hence, for a sufficiently large $k$, 
    \[
    Cke^{-I(t^*)} = Cke^{-I(t^*/2)}\cdot e^{-I(t^*)/2} \le
    Ck e^{-\Omega(\sqrt{k})} \cdot e^{-I(t^*)/2}
    \le e^{-I(t^*)/2}.\]
\end{proof}

Next, we would like to bound the term that corresponds to the second derivative. Recall that our goal is to bound $\sum_i |f''(\xi_i)|$ where $\xi_i$ is in the vicinity of $\eta_i$. We start by bounding the sum $\sum_i |f''(\eta_i)|$ and then relate the sum over $\xi_i$ to that over $\eta_i$. Since $|f''(\eta)|\le O(f(\eta)^2)$ we can instead use the following lemma:
\begin{lemma}\label{lem:bnd-second-der}
    Let $\vec\eta \sim \mu_{f,1}^k$. Then, for any sufficiently large $k$ and any $t \ge k$,
    \[
    \Pr\lp[\lp| \sum_{i=1}^k f(\eta_i)^2 \rp| > t + Ck \rp] \le e^{-c\sqrt{t}},
    \]
    for some constants $C,c>0$ depending only on $f$.
\end{lemma}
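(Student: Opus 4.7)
The plan is to use Lemma~\ref{lem:bnd-single} to derive a sub-exponential-square-root tail bound on each summand $f(\eta_i)^2$, and then apply Proposition~\ref{prop:heavytail} to the centered sum. All constants below depend only on $f$.

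First I would apply Lemma~\ref{lem:bnd-single} with $h=f$ and the trivially valid choice $J(t)=t$ (so that $f(\eta)\ge J(|f(\eta)|)$), which yields $\Pr[f(\eta_i)\ge s]\le (2/Z_f)\,e^{-s}$ for all $s\ge 0$, and hence
\[
\Pr\!\left[f(\eta_i)^2 \ge s\right] \le \frac{2}{Z_f}\,e^{-\sqrt{s}}.
\]
In particular, $C_0:=\E[f(\eta_i)^2]$ is a finite constant, and the centered variables $Y_i:=f(\eta_i)^2-C_0$ are i.i.d., zero mean, and satisfy $\Pr[|Y_i|>s]\le C_1 e^{-c_1\sqrt{s}}$ for constants $C_1,c_1>0$ (the shift by $C_0$ is absorbed using $\sqrt{s-C_0}\ge\sqrt{s}-\sqrt{C_0}$ for $s\ge C_0$, and trivially for smaller~$s$).

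Next I would apply Proposition~\ref{prop:heavytail} to the zero-mean sequence $(Y_i)$, taking the tail function $\widetilde I(s)=c_1'\,s/\sqrt{1+s}$ with $c_1'\le \min(c_1,1)$. This $\widetilde I$ satisfies $\widetilde I(s)\le s$ on all of $[0,\infty)$ and $\widetilde I(s)/s=c_1'/\sqrt{1+s}$ is strictly decreasing, while $\widetilde I(s)\sim c_1'\sqrt{s}$ for large $s$, so the single-variable tail bound $\Pr[|Y_i|>s]\le C_1'\,e^{-\widetilde I(s)}$ still holds (for a suitably enlarged constant $C_1'$). The integral $\int_0^\infty (s^2+2s)e^{-\widetilde I(s)/2}\,ds$ converges, yielding a valid $M$ depending only on $f$, and the threshold $t^*$ solving $t=MkI(t)/2t$ scales as $t^*\asymp k^{2/3}$. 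Hence for any $t\ge k$ and all sufficiently large $k$ we are in the regime $t>t^*$, and Proposition~\ref{prop:heavytail} gives
\[
\Pr\!\left[\left|\sum_{i=1}^k Y_i\right|>t\right] \le 2e^{-\widetilde I(t)/4}+C_1'\,k\,e^{-\widetilde I(t)} \le e^{-c\sqrt{t}}
\]
for an appropriate constant $c>0$, since the polynomial factor $k$ in the second term is absorbed into the exponential using $\sqrt{t}\ge\sqrt{k}\gg\log k$. Setting $C=C_0$ translates this into the statement $\Pr[\sum_i f(\eta_i)^2>t+Ck]\le e^{-c\sqrt{t}}$.

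The only delicate point, and hence the main (modest) obstacle, is verifying the structural hypotheses of Proposition~\ref{prop:heavytail}: the naive tail envelope $c_1\sqrt{s}$ violates $I(s)\le s$ near zero and is not strictly decreasing as $I(s)/s$. The interpolant $\widetilde I(s)=c_1's/\sqrt{1+s}$ neatly bridges the two regimes, preserving the square-root growth that drives the final $e^{-c\sqrt{t}}$ bound while satisfying every hypothesis of the proposition on the nose.
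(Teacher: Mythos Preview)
Your proposal is correct and follows essentially the same approach as the paper: center the variables $f(\eta_i)^2$, apply Proposition~\ref{prop:heavytail} with a square-root-type tail to land in the regime $t>t^*\asymp k^{2/3}$, and absorb the factor $k$ into the exponential. The only difference is that the paper substitutes $I(t)=\sqrt{t}$ directly without pausing over the hypothesis $I(t)\le t$ near zero, whereas your interpolant $\widetilde I(s)=c_1'\,s/\sqrt{1+s}$ is a clean way to make that technical point airtight.
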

\begin{proof}
    %Many of the proof ingredients are similar to Lemma~\ref{lem:bnd-sum-derivatives}, and we refer the reader to that lemma for the proof. 
    We would like to apply Proposition~\ref{prop:heavytail}, with the following substitutions:
    \begin{itemize}
        \item We replace $n$ with $k$ and $X_i$ with $f(\eta_i)^2 - \E f(\eta_i)^2$. Notice that $X_i$ is zero mean and $X_1,\dots,X_k$ are independent, hence $\E[X_i \mid X_1\dots X_{i-1}] = 0$ as required.
        \item From Lemma~\ref{lem:bnd-single}, we have that $X_i$ is $(\sqrt{t},2/Z_{f,1})$-bounded. In particular, we replace $I(t)$ with $\sqrt{t}$ and $C$ with $2/Z_{f,1}$.
        \item We replace $M$ with the corresponding integral in Proposition~\ref{prop:heavytail}, and this integral converges as argued in Lemma~\ref{lem:bnd-sum-derivatives}.
        \item We replace $t^*$ with with the solution of $t = Mk\sqrt{t}/2t$, and notice that $t* = (Mk/2)^{2/3}$.
    \end{itemize}
	Since $t^*= \Theta(k^{2/3})$ it follows that for a sufficiently large $k$, $k\ge t^*$. From Proposition~\ref{prop:heavytail} we obtain that for any $t \ge k$,
	\[
	\Pr\lp[ \sum_{i=1}^k f(\eta_i)^2 > t + \sum_{i=1}^k \E f(\eta_i)^2 \rp] \le 2e^{-c\sqrt{t}},
	\]
	for some constant $c>0$.
    Lastly, notice that from Lemma~\ref{lem:bnd-single}, we have that $\E[f(\eta_i)^2] < \infty$, hence, $\sum_{i=1}^k \E f(\eta_i)^2 \le O(k)$. This concludes the proof.
\end{proof}
This lets us bound $\sum_i |f(\eta_i)''|$, however, recall that we want a bound on $\sum_i |f''(\xi_i)|$ for some $\xi_i$ in the vicinity of $\eta_i$. In fact, it suffices to bound $f(\xi_i)$ in terms of $f(\eta_i)$ and then bound $f''(\xi_i) \le O(f(\xi_i)^2)$. Therefore, we have the following lemma:

\begin{lemma}\label{lem:differential-eq}
Let $f \colon (-1,1) \to (0,\infty)$ be a function such that $\lim_{ \eta \to 1^-} f(\eta) = \lim_{\eta \to -1^+} = \infty$, and $|f'(\eta)| \le C f(\eta)^2$ for some $C>0$. Then, for any $\eta \in (-1,1)$ and any $\lambda \in [-1,1]$,
\[
f(\eta) \ge f\lp(\eta+\frac{\lambda}{2Cf(\eta)}\rp)/2.
\]
\end{lemma}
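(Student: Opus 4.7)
The plan is to convert the differential inequality $|f'| \le C f^2$ into a Lipschitz bound on the reciprocal $g := 1/f$, and then use the Lipschitz property to control how much $g$ can shrink, which in turn bounds how much $f$ can grow. By symmetry (replacing $\lambda$ by $-\lambda$ just changes the direction), I may assume $\lambda \in [0,1]$, and set $h = \lambda/(2Cf(\eta))$; the goal is then to show that $f(\eta + h) \le 2f(\eta)$, or equivalently $g(\eta + h) \ge g(\eta)/2$.

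On any subinterval of $(-1,1)$, $g$ is differentiable with $g'(s) = -f'(s)/f(s)^2$, so the hypothesis $|f'| \le C f^2$ gives $|g'(s)| \le C$ wherever $f$ is defined. Hence $g$ is $C$-Lipschitz on any subinterval of $(-1,1)$. Starting from $g(\eta) = 1/f(\eta)$, the Lipschitz bound gives $g(\eta + s) \ge g(\eta) - Cs$ for any $s \in [0,h]$, provided $\eta + s$ remains in $(-1,1)$. Since $h = \lambda/(2Cf(\eta)) \le g(\eta)/(2C)$, this would yield $g(\eta + h) \ge g(\eta) - C \cdot g(\eta)/(2C) = g(\eta)/2$, which is exactly the desired inequality.

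The one step that needs care is verifying that $\eta + s$ does not leave $(-1,1)$ on the interval $s \in [0,h]$; this is where the blow-up hypothesis $\lim_{\eta \to \pm 1} f(\eta) = \infty$ gets used. Suppose for contradiction there is a smallest $s_0 \in (0,h]$ with $\eta + s_0 = 1$ (the $-1$ case is analogous). Then on $[0,s_0)$ we still have $|g'| \le C$, so $g(s) \ge g(\eta) - Cs > g(\eta) - Ch \ge g(\eta)/2 > 0$ throughout $[0,s_0)$. But $f(\eta + s) \to \infty$ as $s \to s_0^-$, which forces $g(s) \to 0$, a contradiction. Hence $\eta + s \in (-1,1)$ on the whole interval, and the Lipschitz argument applies.

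I do not expect a real obstacle here: the only subtlety is handling the possibility of $f$ blowing up before we reach the target point, and that is ruled out cleanly by the boundary blow-up assumption combined with the Lipschitz estimate for $g$. The argument is essentially a one-line Gronwall-type bound in disguise, rewritten to avoid assuming a global upper bound on $f$.
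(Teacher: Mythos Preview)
Your proof is correct and is essentially the same argument as the paper's, just packaged differently: the paper compares $f$ with the explicit function $g(\theta)=1/(1/f(\eta)+C(\eta-\theta))$ solving $g'=Cg^2$, whereas you observe directly that $(1/f)'$ is bounded by $C$ in absolute value, which is the same statement in integrated form. Your version has the minor advantage of making the domain check (that $\eta+s$ stays in $(-1,1)$) explicit, which the paper's proof glosses over.
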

\begin{proof}
%We prove this statement for $\lambda = 1$ and the proof for any other $\lambda \in [-1,1]$ is nearly identical. 
First, assume that $\lambda \ge 0$. Fix some $\eta$, let $C$ be the constant such that $f'(\eta) \le C f(\eta)^2$ and define the function 
\[
g(\theta)
= \frac{1}{1/f(\eta) + C(\eta-\theta)}.
\]
Computing the derivative of $g$ with respect to $\theta$, one obtains
\[
g'(\theta) = \frac{C}{\lp(1/f(\eta) + C(\eta-\theta)\rp)^2}
= Cg(\theta)^2,
\]
for all $\theta$ such that $1/f(\eta) + C(\eta-\theta) > 0$. In particular, this holds for all $\theta < \eta + 1/(Cf(\eta))$. Notice that $f(\eta) = g(\eta)$ and further, that the assumption that $f'(\eta) \le Cf(\eta)^2$ while $g'(\theta) = Cg(\theta)^2$, implies that $f(\theta) \le g(\theta)$ for all $\theta \in [\eta, \eta + 1/(Cf(\eta)))$.
In particular,
\[
f\lp(\eta + \frac{\lambda}{2Cf(\eta)}\rp)
\le g\lp(\eta + \frac{\lambda}{2Cf(\eta)}\rp)
= \frac{f(\eta)}{1-\lambda/2}
\le 2f(\eta),
\]
as $\lambda \le 1$.

For the case that $\lambda < 0$, the result follows by applying the same lemma with $\tilde{\lambda} = -\lambda$, $\tilde{\eta} = -\eta$ and $\tilde{f}(\theta) = f(-\theta)$.
\end{proof}

We derive the following bound on the the sum of second derivatives as a consequence:
\begin{lemma} \label{lem:second-der-final}
    For a sufficiently large $k$, and any $t \ge k$, it holds with probability at least $1-e^{-\sqrt{t}/C_f}-e^{-R/C_f}$ that 
    \[
    \sum_{i=1}^k \sup_{\xi_i \colon |\xi_i-\eta_i|\le 1/R} |f''(\xi_i)|
    \le C_f t,
    \]
    where $C_f>0$ is a constant depending only on $f$.
\end{lemma}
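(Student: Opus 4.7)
The plan is to reduce control of $\sum_i \sup_{\xi_i}|f''(\xi_i)|$ to control of $\sum_i f(\eta_i)^2$, which is already handled by Lemma~\ref{lem:bnd-second-der}. The reduction uses the hypothesis $|f''(\eta)|\le Cf(\eta)^2$ together with Lemma~\ref{lem:differential-eq}, which lets one pass from $f(\xi_i)$ to $f(\eta_i)$ (up to a factor of $2$) provided $f(\eta_i)$ is not too large.

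First, I would introduce the good event
\[
E = \lp\{ \max_{i\in[k]} f(\eta_i) \le R/(2C) \rp\},
\]
where $C$ is the constant from $|f'(\eta)|\le Cf(\eta)^2$ appearing in Lemma~\ref{lem:differential-eq}. Because $\eta_i$ has density $e^{-f(\eta)}/Z_{f,1}$ on $(-1,1)$, the trivial estimate
\[
\Pr[f(\eta_i)>s] = \int_{\{f(\eta)>s\}} \frac{e^{-f(\eta)}}{Z_{f,1}}\,d\eta \le \frac{2}{Z_{f,1}}e^{-s}
\]
combined with a union bound over $i$ gives $\Pr[E^c]\le (2k/Z_{f,1})e^{-R/(2C)}$. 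Under the hypotheses of Theorem~\ref{thm:general-func} one has $R\ge \Omega(\sqrt{k\log(1/\delta)})\gg \log k$ (for sufficiently large $k$), so $\Pr[E^c]\le e^{-R/C_f}$ for an appropriate constant $C_f$.

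Next, I would argue that on $E$, for every $i$ and every $\xi_i$ with $|\xi_i-\eta_i|\le 1/R$, Lemma~\ref{lem:differential-eq} applies with the choice $\lambda=2Cf(\eta_i)(\xi_i-\eta_i)$: on $E$ we have $|\lambda|\le 2Cf(\eta_i)/R\le 1$, so $f(\xi_i)\le 2f(\eta_i)<\infty$, which in turn forces $\xi_i\in(-1,1)$ (so the convention $f''(\xi_i)=\infty$ outside $(-1,1)$ is never triggered). Combining with $|f''(\xi_i)|\le Cf(\xi_i)^2$ yields, on $E$,
\[
\sum_{i=1}^k \sup_{\xi_i\colon|\xi_i-\eta_i|\le 1/R} |f''(\xi_i)|
\le 4C \sum_{i=1}^k f(\eta_i)^2 .
\]

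Finally, I would invoke Lemma~\ref{lem:bnd-second-der}: for every $t\ge k$, with probability at least $1-e^{-c\sqrt{t}}$ the sum $\sum_i f(\eta_i)^2$ is at most $t+C'k\le 2t$. A union bound with the estimate for $\Pr[E^c]$ then yields the claimed bound $\sum_i\sup|f''(\xi_i)|\le C_f t$ with probability $\ge 1-e^{-\sqrt{t}/C_f}-e^{-R/C_f}$ for a sufficiently large $C_f$. The only real subtlety is bookkeeping: verifying that the threshold $R/(2C)$ in the definition of $E$ is high enough that the factor $k$ from the union bound is absorbed into the exponent $-R/C_f$, which is where the hypothesis $R\gg\log k$ (guaranteed by $\delta\ge \delta_k^*$ and the range of $\epsilon$) is used.
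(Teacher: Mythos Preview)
Your proposal is correct and follows essentially the same approach as the paper's proof: introduce the event that all $f(\eta_i)$ are below a threshold of order $R$, use Lemma~\ref{lem:differential-eq} on that event to replace $f(\xi_i)$ by $2f(\eta_i)$, bound $|f''|$ via $|f''|\le C f^2$, and finish with Lemma~\ref{lem:bnd-second-der} plus a union bound. If anything, your write-up is slightly more careful than the paper's (you explicitly check that $\xi_i\in(-1,1)$ on $E$ and get the factor $4C$ rather than the paper's $2C$, which appears to be a typo there).
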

\begin{proof}
    First, we bound $f''(\xi_i) \le Cf(\xi_i)^2$, as given in the assumptions of Theorem~\ref{thm:general-func}.
    Next, we would like to bound $f(\xi_i)$ in terms of $f(\eta_i)$. Notice that $|\xi_i - \eta_i| \le 1/R$. From Lemma~\ref{lem:differential-eq}, we have that if $1/2Cf(\eta) \ge 1/R$, then $f(\xi_i) \le 2 f(\eta_i)$. This happens whenever $f(\eta) \le R/2C$. 
    From Lemma~\ref{lem:bnd-single}, this happens with probability at least $1- 2/Z_{f,1}\cdot e^{-R/2C}$. By a union bound over the $k$ coordinates, and since $R \ge \sqrt{k}$, we have that if $k$ is sufficiently large, then with probability $1 - e^{-R/4C}$, all the $k$ coordinates satisfy $f(\eta) \le R/2C$. This implies that
    \[
    \sum_i |f''(\xi_i)| \le C\sum_i f(\xi_i)^2 \le 2C \sum_i f(\eta_i)^2.
    \]
    From Lemma~\ref{lem:bnd-second-der}, we have that w.p. $e^{-c\sqrt{t}}$, $\sum_i f(\eta_i)^2 \le t + O(k)$. Combining the above arguments, we obtain that with probability at least $1-e^{-c\sqrt{t}}-e^{-R/4C}$,
    \[
    \sum_i |f''(\xi_i)| 
    \le 2C \sum_i f(\eta_i)^2 \le 2C t + O(k) \le O(t),
    \]
    since we assumed in this lemma that $t \ge k$. This concludes the proof.
\end{proof}

\begin{proof}[Proof of Theorem~\ref{thm:general-func}]
From Lemma~\ref{lem:Taylor} it suffices to show that with probability $1-\delta$,
\begin{equation}\label{eq:final-bound}
\lp|\sum_{i=1}^k \frac{u_i f'(\eta_i)}{R}\rp|
	+ \sum_{i=1}^k \max_{\xi_i \colon |\xi_i-\eta_i| \le 1/R} \frac{|f''(\xi_i)|}{2R^2}
    \le \epsilon.
\end{equation}
The first term can be bounded by $\epsilon/2$, if we substitute $t=R\epsilon/2$ in Lemma~\ref{lem:bnd-sum-derivatives}, and the failure probability is bounded by
\[
e^{-R^2\epsilon^2/kC_f} + e^{-I(t^*)/2}.
\]
Since we assumed that $R \ge \Omega(\sqrt{k\log 1/\delta}/\epsilon)$, if the constant in the definition of $R$ is sufficiently large, then the first term can be bounded by $\delta/4$. For the second term, recall that we assume that $\delta \ge e^{-\Omega(I(t^*))}$. If the constant in the $\Omega()$ is sufficiently small, then this term is also bounded by $\delta /4$. We conclude that the weighted sum of derivatives is bounded by $\epsilon/2$ with probability at least $1-\delta/2$.

Next, we consider the The second term, that corresponds to the second derivatives. To bound it, we apply Lemma~\ref{lem:second-der-final}, substituting $t = C_0 k \log (1/\delta)$, where $C_0>0$ is a sufficiently large constant to be determined later. We derive that with probability $1-e^{-\sqrt{t/C_f}} - e^{-R/C_f}$,
\[
\sum_i |f''(\xi_i)| \le O(k \log (1/\delta)).
\]
Recall that in \eqref{eq:final-bound} this sum is divided by $R^2$. In particular, if the constant in the definition of $R$ is sufficiently large, then this term is bounded by $\epsilon/2$ as required. Lastly, notice that the failure probability is bounded by
\[
e^{-C_0 \sqrt{k\log(1/\delta)}/C_f} + e^{-C_0 R/C_f} \le 2e^{-C_0  \sqrt{k \log (1/\delta)}/C_f},
\]
assuming that the constant in the definition of $R$ is sufficiently large.
First, we would like to bound $k \ge \log 1/\delta$. Recall that $\delta \ge e^{-I(t^*)} \ge e^{-t^*}$, since $I(t)\le t$. From the inequality $I(t)\le t$ and the definition of $t^*$, we have that $t^* = kI(t^*)/2t^* \le k/2\le k$, which, in combination with $\delta \ge e^{-t^*}$, implies that $\delta \ge e^{-k}$. Hence, $k \ge \log(1/\delta)$. Let us get back to the failure probability, and notice that it is bounded by
\[
2e^{-C_0  \log (1/\delta)/C_f}.
\]
Recall that $C_0$ is a constant that we can define, and we can set it sufficiently large such that this failure probability is at most $\delta/2$. This concludes that the bound on the sum of second derivatives is bounded by $\epsilon/2$ with probability $1-\delta/2$. In particular, \eqref{eq:final-bound} holds with probability $1-\delta$ which concludes the proof. 
\end{proof}

\subsection{Proof of Proposition~\ref{prop:heavytail}}\label{subsec:probabilistic}
We restate the following proposition and prove it:

\propConcentrateMartingale*
For convenience, we will refer to a random variable $X$ as $(I(t),C)$ bounded if $\Pr[|X| > t] \le C\exp(-I(t))$ for all $t \ge 0$.

We will use a truncation of the random variables.
Given $L>0$ define $X_i^{\le L} = X_i \mathds{1}(|X_i| \le L)$ and $X_i^{>L} = X_i \mathds{1}(|X_i| > L)$, and notice that $X_i = X_i^{\le L} + X_i^{>L}$. %Our bound is a function of the moment generating function. 
We bound the sum $\sum_i X_i$ by considering the moment generating function of $\sum_i X_i^{\le L}$ and bounding the probability that there exists $i$ such that $X_i > L$, as formalized in the following lemma:
\begin{lemma}\label{lem:mgf-and-truncation}
Fix $L,K,\theta,\delta>0$ be such that for all $i$,
\[
\E\lp[e^{\theta X_{i}^{\le L}} \mid X_1 \cdots X_{i-1}\rp] \le e^K ; \quad \text{and }
\Pr[|X_{i}| > L \mid X_1 \cdots X_{i-1}] \le \delta.
\]
Then, for any $t > 0$,
\[
\Pr\lp[\lp|\sum_i X_i\rp| > t\rp]
\le 2 e^{Kn - \theta t} + \delta n.
\]
\end{lemma}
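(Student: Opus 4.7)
The lemma is a standard truncation-plus-Chernoff argument, so my proof will have two independent parts: a union bound that handles the contribution of the heavy tails (the $\delta n$ term) and a Markov/MGF bound on the truncated sum (which yields the $e^{Kn-\theta t}$ term).

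\textbf{Step 1: peel off the heavy part by a union bound.} Let $G = \{|X_i| \le L \text{ for all } i \in [n]\}$. On $G$ every $X_i$ coincides with its truncation $X_i^{\le L}$, so $\sum_i X_i = \sum_i X_i^{\le L}$. Using the assumed conditional tail bound and the tower property, $\Pr[G^c] \le \sum_{i=1}^n \E\!\left[\Pr[|X_i| > L \mid X_1,\dots,X_{i-1}]\right] \le \delta n$. Consequently, $\Pr[|\sum_i X_i| > t] \le \Pr[|\sum_i X_i^{\le L}| > t] + \delta n$, and it remains to bound the first term on the right.

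\textbf{Step 2: Chernoff on the truncated sum.} For the upper tail I would apply Markov's inequality to $\exp(\theta \sum_i X_i^{\le L})$, obtaining $\Pr[\sum_i X_i^{\le L} > t] \le e^{-\theta t}\, \E[\exp(\theta \sum_{i=1}^n X_i^{\le L})]$. The MGF factorizes via the tower property: writing $\mathcal{F}_{i-1} = \sigma(X_1,\dots,X_{i-1})$ and using that $X_i^{\le L}$ is $\mathcal{F}_i$-measurable,
\[
\E\!\left[\exp\!\left(\theta \sum_{i=1}^n X_i^{\le L}\right)\right]
= \E\!\left[\exp\!\left(\theta \sum_{i=1}^{n-1} X_i^{\le L}\right) \cdot \E\!\left[e^{\theta X_n^{\le L}} \mid \mathcal{F}_{n-1}\right]\right]
\le e^{K}\, \E\!\left[\exp\!\left(\theta \sum_{i=1}^{n-1} X_i^{\le L}\right)\right].
\]
Iterating this $n$ times yields $\E[\exp(\theta \sum_i X_i^{\le L})] \le e^{Kn}$, and hence $\Pr[\sum_i X_i^{\le L} > t] \le e^{Kn - \theta t}$.

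\textbf{Step 3: lower tail and combining.} The lower tail $\Pr[\sum_i X_i^{\le L} < -t]$ is controlled by the identical argument applied to the sequence $-X_i$, whose truncated conditional MGFs satisfy the same bound $e^K$ by the symmetry assumed implicitly on the hypothesis (the lemma is applied in our setting to symmetric noise, so $X_i^{\le L}$ and $-X_i^{\le L}$ are identically distributed conditionally on the past). Summing the two one-sided bounds gives $\Pr[|\sum_i X_i^{\le L}| > t] \le 2 e^{Kn - \theta t}$, which when combined with the $\delta n$ term from Step~1 yields the claimed inequality. There is no real obstacle here; the only point that requires a little care is justifying the lower-tail bound from the one-sided MGF hypothesis, which in our intended application follows from the symmetry of $\mu_f$.
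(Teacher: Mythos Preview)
Your proof is correct and follows essentially the same approach as the paper: split off the event $\{\exists i:|X_i|>L\}$ by a union bound (cost $\delta n$), then bound the truncated sum via Markov's inequality after controlling $\E[\exp(\theta\sum_i X_i^{\le L})]\le e^{Kn}$ through iterated conditioning. Your remark about the lower tail needing the symmetric MGF hypothesis is in fact more careful than the paper's own proof, which simply asserts ``similarly'' without comment; as you note, in the intended application this is harmless because the underlying noise $\mu_f$ is symmetric.
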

\begin{proof}
First, by a standard induction on $n$, one can show that 
\[
\E\lp[\exp\lp(\theta \sum_{i=1}^n X_{i}^{\le L}\rp)\rp]
\le e^{Kn}.
\]
Consequently, by Markov's inequality,
\[
\Pr\lp[\sum_i X_i^{\le L}> t\rp]
= \Pr\lp[e^{\theta \sum_i X_i^{\le L}}> e^{\theta t}\rp]
\le \E[e^{\theta \sum_i X_i^{\le L}}]/e^{\theta t}
\le e^{Kn - \theta t}.
\]
Similarly, the probability that the sum is less than $-t$ can be bounded by the same quantity. Thus,
\[
\Pr\lp[\lp|\sum_i X_i\rp| > t\rp]
= \Pr\lp[\lp|\sum_i X_i^{\le L} + X_i^{>L}\rp| > t\rp]
\le \Pr\lp[\lp|\sum_i X_i^{\le L}\rp| > t\rp] + \Pr\lp[\exists i,~ X_i^{>L} > 0\rp] \le 2 e^{Kn - \theta t} + n \delta.
\]
% Indeed, for $j=0$ this holds, and for $j>0$ one has
% \[
% \E\lp[\exp\lp(\theta \sum_{i=1}^j X_{i}^{\le L}\rp)\rp]
% = \E\lp[\E\lp[\exp\lp(\theta \sum_{i=1}^j X_{i}^{\le L}\rp) \middle| X_1 \cdots X_{j-1} \rp]\rp]
% \]

\end{proof}

Therefore, we would like to bound the moment generating function of $X_{i+1}$ given $X_1 \cdots X_i$. We have the following lemma:
\begin{lemma}\label{lem:x2-moment-bound}
Let $X$ be a zero-mean random variable. Then, for any $\theta > 0$,
\[
\E[e^{\theta X}] \le 1 + \theta^2 \E[X^2 e^{\theta |X|}]/2.
\]
\end{lemma}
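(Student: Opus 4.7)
The plan is to reduce the lemma to a pointwise inequality. Specifically, I would first establish that for every real number $y$ and every $\theta > 0$,
\[
e^{\theta y} \le 1 + \theta y + \frac{\theta^2 y^2}{2} e^{\theta |y|}.
\]
Once this holds pointwise, I apply it with $y = X$, take expectations, and use the hypothesis $\E[X] = 0$ to eliminate the linear term, which yields the claim.

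To prove the pointwise inequality I split on the sign of $y$. When $y \ge 0$, I expand $e^{\theta y} - 1 - \theta y = \sum_{k \ge 2} (\theta y)^k / k!$. Factoring out $(\theta y)^2/2$ and using the fact that $k!/2 \ge (k-2)!$ for all $k \ge 2$, I get
\[
\sum_{k \ge 2} \frac{(\theta y)^k}{k!}
= \frac{(\theta y)^2}{2}\sum_{k \ge 0} \frac{2 (\theta y)^k}{(k+2)!}
\le \frac{(\theta y)^2}{2}\sum_{k \ge 0} \frac{(\theta y)^k}{k!}
= \frac{\theta^2 y^2}{2} e^{\theta y}
= \frac{\theta^2 y^2}{2} e^{\theta |y|}.
\]
When $y < 0$, writing $z = -y > 0$ and using the integral identity $e^{-\theta z} - 1 + \theta z = \int_0^{\theta z}(1 - e^{-s})\,ds$ together with the elementary bound $1 - e^{-s} \le s$ for $s \ge 0$, I obtain $e^{\theta y} - 1 - \theta y \le (\theta z)^2/2 \le \frac{\theta^2 y^2}{2} e^{\theta|y|}$, since $e^{\theta |y|} \ge 1$.

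With the pointwise bound in hand, taking expectations preserves the inequality, and the term $\theta \E[X]$ vanishes by the zero-mean assumption, leaving exactly $\E[e^{\theta X}] \le 1 + \theta^2 \E[X^2 e^{\theta |X|}]/2$. No serious obstacle is anticipated: the content is a standard Taylor-with-remainder estimate, and the only mild subtlety is handling the $y<0$ case where the naive sub-Gaussian-style bound $e^{\theta y} \le 1 + \theta y + \theta^2 y^2 e^{\theta y}/2$ fails and one instead needs the factor $e^{\theta|y|}$; this is exactly what the integral argument above supplies.
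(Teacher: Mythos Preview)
Your proposal is correct and follows essentially the same route as the paper: establish the pointwise inequality $e^{\theta y}\le 1+\theta y+\tfrac{\theta^2 y^2}{2}e^{\theta|y|}$, then take expectations and use $\E[X]=0$. The only difference is cosmetic: the paper obtains the pointwise bound in one line via Taylor's theorem with Lagrange remainder (writing $e^z=1+z+\tfrac{z^2}{2}e^{\zeta}$ with $\zeta$ between $0$ and $z$, hence $e^{\zeta}\le e^{|z|}$), whereas you derive it by a series argument for $y\ge 0$ and an integral argument for $y<0$.
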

\begin{proof}
For any $z\in \mathbb{R}$, we have by the Taylors series of $e^z$ aronud $z=0$,
\[
e^{z} = 1 + z + \frac{z^2}{2} e^{\zeta(z)}
\le 1 + z + \frac{z^2}{2} e^{|z|}
\]
where $\zeta(z)$ is in the segment between $0$ and $z$. Substituting $z = \theta X$, taking expectation over $X$, and using the fact that $X$ is zero mean, the result follows.
\end{proof}

We would like to apply the above lemma for bounding the moment generating function.
\begin{lemma}\label{lem:bnd-mgf-formula}
Let $X$ be a zero-mean random variable that is $(I(t),C)$ bounded. Then,
\[
\E\lp[X^2 e^{\theta |X|}\rp] \le \int_{0}^{\infty} (2t+\theta t^2) e^{\theta t} \Pr[|X| > t] dt.
\]
\end{lemma}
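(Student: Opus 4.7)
The plan is to apply the standard layer-cake (tail-integral) representation of expectation to the non-negative random variable $g(|X|)$ where $g(t) = t^2 e^{\theta t}$.

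First, I would observe that $X^2 e^{\theta |X|} = g(|X|)$ with $g(t) = t^2 e^{\theta t}$ for $t \ge 0$. Since $g$ is continuously differentiable on $[0,\infty)$, with $g(0)=0$ and $g'(t) \ge 0$ on $[0,\infty)$, the fundamental theorem of calculus gives the pointwise identity
\[
g(|X|) = \int_0^{|X|} g'(t)\, dt = \int_0^\infty g'(t)\, \mathds{1}(|X| > t)\, dt.
\]

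Next, I would take expectations on both sides and swap expectation with the integral by Tonelli's theorem (the integrand is non-negative), yielding
\[
\E\lp[X^2 e^{\theta |X|}\rp] = \E[g(|X|)] = \int_0^\infty g'(t)\, \Pr[|X| > t]\, dt.
\]

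Finally, a direct computation gives $g'(t) = 2t\, e^{\theta t} + \theta t^2\, e^{\theta t} = (2t+\theta t^2)\, e^{\theta t}$. Substituting this back produces the claimed identity (in fact equality, not just an inequality). There is no real obstacle here; the only care needed is to justify the Tonelli interchange, which is immediate since both $g'(t)$ and the indicator $\mathds{1}(|X|>t)$ are non-negative. The lemma is stated as an inequality presumably because the authors will apply it in situations where they only have an upper bound on $\Pr[|X|>t]$ (e.g. $C e^{-I(t)}$ from the $(I(t),C)$-boundedness hypothesis).
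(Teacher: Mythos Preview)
Your proof is correct and follows essentially the same approach as the paper: the paper also sets $Z=|X|$, invokes the tail-integral identity $\E[h(Z)] = h(0) + \int_0^\infty \Pr[Z>t]\,h'(t)\,dt$ for $h(t)=t^2 e^{\theta t}$, and computes $h'(t)=(2t+\theta t^2)e^{\theta t}$. Your Tonelli justification is a slightly cleaner phrasing of what the paper calls a ``standard change of measure argument'' via integration by parts, and your observation that equality actually holds is correct.
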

\begin{proof}
Define $Z = |X|$. Then, our goal is to bound $\E[Z^2 e^{\theta Z}]$. By a standard change of measure argument (that can be proved, e.g., using integration by parts), for any differentiable function $h \colon [0,\infty)$ and any nonnegative r.v. $Z$, one has
\[
\E[h(Z)] = h(0) + \int_0^\infty \Pr[Z>t] \frac{dh(t)}{dt} dt.
\]
Applying for $h(t) = t^2 e^{\theta t}$, and substituting $dh(t)/dt=(2t + \theta t^2) e^{\theta t}$, the result follows. 
%one obtains
% \[
% \E[h(Z)] = \int_{0}^\infty (2t + \theta t^2) e^{\theta t}\Pr[|Z|>t] dt.
% \]
\end{proof}

We would like use Lemma~\ref{lem:bnd-mgf-formula} on $X^{\le L}$:
\begin{lemma}\label{lem:bnd-mgf-by-M}
Let $X$ be a zero-mean random variable that is $(I(t),C)$ bounded. Let $L,\theta > 0$ such that $\theta \le I(L)/2L$. Assume that $I(t)/t$ is monotonic decreasing and that $I(t) \le t$. Then, for the value $M$ defined in Proposition~\ref{prop:heavytail},
\[
\E[e^{\theta X^{\le L}}]
\le e^{\theta^2 M/2}.
\]
\end{lemma}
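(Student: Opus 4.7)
My plan is to apply the exponential Taylor estimate from Lemma~\ref{lem:x2-moment-bound} together with the integral representation from Lemma~\ref{lem:bnd-mgf-formula}, exploiting the truncation level $L$ and the inequality $\theta \le I(L)/(2L)$ to kill the $e^{\theta t}$ factor in the resulting integrand. Concretely, I would start from the pointwise inequality $e^{z} \le 1 + z + \tfrac{z^2}{2}e^{|z|}$ specialized to $z = \theta X^{\le L}$ and take expectations, obtaining
\[
\E\!\left[e^{\theta X^{\le L}}\right] \le 1 + \theta\,\E[X^{\le L}] + \frac{\theta^2}{2}\,\E\!\left[(X^{\le L})^2 e^{\theta|X^{\le L}|}\right].
\]
Note that Lemma~\ref{lem:bnd-mgf-formula} only uses nonnegativity of $|Y|$ (not zero-mean), so it applies directly to $Y=X^{\le L}$; since $|X^{\le L}|\le L$ the resulting integral is supported on $[0,L]$.

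The heart of the argument is the bound on the quadratic term. Using the tail estimate $\Pr[|X|>t]\le Ce^{-I(t)}$ and the monotonicity of $I(t)/t$, I would argue that for every $t\le L$,
\[
\theta t \;\le\; \frac{I(L)}{2L}\cdot t \;\le\; \frac{I(t)}{2t}\cdot t \;=\; \frac{I(t)}{2},
\]
so $e^{\theta t - I(t)} \le e^{-I(t)/2}$ throughout the range of integration. Combined with $\theta \le I(L)/(2L) \le 1/2$ (which follows from $I(L)\le L$) this gives $2t+\theta t^2 \le 2t+t^2$, and hence
\[
\E\!\left[(X^{\le L})^2 e^{\theta|X^{\le L}|}\right] \;\le\; C\int_0^{L}(2t+\theta t^2)e^{\theta t - I(t)}\,dt \;\le\; C\int_0^{\infty}(t^2+2t)e^{-I(t)/2}\,dt \;\le\; M,
\]
by the defining lower bound on $M$ in Proposition~\ref{prop:heavytail}.

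Finally I have to dispose of the linear term $\theta\E[X^{\le L}]$. Here the zero-mean assumption on $X$ enters crucially: $\E[X^{\le L}] = -\E[X\mathds{1}(|X|>L)]$, so $|\E[X^{\le L}]|$ is bounded by the tail expectation $\E[|X|\mathds{1}(|X|>L)] = L\Pr[|X|>L] + \int_L^\infty \Pr[|X|>t]\,dt$, which by the $(I(t),C)$-bound and the rapid decay $I(t)\ge c\sqrt{t}$ is exponentially small in $I(L)$. Multiplying by $\theta \le I(L)/(2L)$ and comparing against $\theta^2 M/2$ (i.e., against the quadratic scale we already control), this linear contribution is absorbed into the constant implicit in $M$ — this is essentially why the hypothesis on $M$ is a lower bound rather than an equality. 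Once both terms are bounded by $\theta^2 M/2$, the inequality $1+u\le e^u$ closes the proof.

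The step I anticipate being the main obstacle is precisely this last one: bookkeeping the linear truncation bias $\theta\E[X^{\le L}]$ against $\theta^2M/2$. The bound $|\E[X^{\le L}]|\lesssim e^{-I(L)/2}$ is strong enough only because $\theta$ is forced to be at most $I(L)/(2L)$, which is itself small whenever $L$ is large; the heavy-tail regime $I(t)\ll t$ makes the two scales comparable and one needs the precise interplay of the monotonicity of $I(t)/t$, the tail bound, and the convergence of $\int_0^\infty(t^2+2t)e^{-I(t)/2}\,dt$ to push the bias into the quadratic term without altering the stated form $e^{\theta^2 M/2}$.
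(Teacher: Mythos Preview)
Your treatment of the quadratic term is exactly the paper's: use monotonicity of $I(t)/t$ to get $\theta t \le I(t)/2$ on $[0,L]$, use $\theta \le 1$ to replace $2t+\theta t^2$ by $2t+t^2$, and bound the resulting integral by $M$. The paper then directly invokes Lemma~\ref{lem:x2-moment-bound} and Lemma~\ref{lem:bnd-mgf-formula} on $X^{\le L}$ to conclude $\E[e^{\theta X^{\le L}}]\le 1+\theta^2 M/2$ --- silently treating $X^{\le L}$ as zero-mean. You are right to flag the missing linear term; the paper simply does not address it.

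However, your proposed patch does not close the gap. You want to absorb $\theta\,\E[X^{\le L}]$ into $\theta^2 M/2$ by enlarging $M$, but the lemma is stated for all $\theta\in(0,I(L)/2L]$, and a term linear in $\theta$ cannot be dominated by a quadratic one uniformly as $\theta\to 0$: the requirement $\theta\,|\E[X^{\le L}]|\le \theta^2 M/2$ is equivalent to $|\E[X^{\le L}]|\le \theta M/2$, which fails for small $\theta$ whenever the truncation bias is nonzero, no matter how large the constant $M$ is taken. The clean repairs are either to add a symmetry hypothesis on $X$ (so that $X^{\le L}$ is genuinely centered --- this is the case in the main application $X_i=u_if'(\eta_i)$ with $\eta_i\sim\mu_f$ symmetric), or to weaken the conclusion to $\E[e^{\theta X^{\le L}}]\le e^{\theta c_L+\theta^2 M/2}$ with $c_L=|\E[X^{\le L}]|$ and then carry the shift $t\mapsto t-nc_L$ through the proof of Proposition~\ref{prop:heavytail}. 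Either route salvages the downstream statements, but as written neither your argument nor the paper's establishes the lemma in the stated generality.
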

\begin{proof}
%Assuming that $\theta L \le I(L)/2$ and that $\theta \le 1$. 
Notice that from the monotonicity of $I(t)/t$ and from the fact that $\theta L \le I(L)/2$, we have that for any $0 \le t \le L$,
\[
\theta t = \theta \frac{t}{I(t)}I(t)
\le \theta \frac{L}{I(L)} I(t) \le I(t)/2. 
\]
Using this inequality and the fact that $\theta \le I(L)/2L \le 1/2\le 1$, we obtain
\begin{align*}
\int_{0}^{\infty} (2t+\theta t^2) e^{\theta t} \Pr[|X^{\le L}| > t] dt
\le C\int_{0}^L (2t + t^2) e^{\theta t - I(t)} dt \le 
C\int_{0}^{L} (2t + t^2) e^{-I(t)/2} dt \\
\le C\int_{0}^\infty (2t + t^2) e^{-I(t)/2} dt
\le M.
\end{align*}
Using Lemma~\ref{lem:x2-moment-bound} and Lemma~\ref{lem:bnd-mgf-formula}, it follows that
\[
\E[e^{\theta X^{\le L}}]
\le 1 + \theta^2 M/2 \le e^{\theta^2 M/2}.
\]
\end{proof}

\begin{proof}[Proof of Proposition~\ref{prop:heavytail}]
We apply Lemma~\ref{lem:mgf-and-truncation}, substituting $L,\theta,\delta$, such that $L$ is to be chosen later, $\delta = e^{-I(L)}$ and $\theta$ is a value to be chosen later that satisfies $\theta \le \min(I(L)/2L,1)$. From Lemma~\ref{lem:bnd-mgf-by-M} we can further substitute $K=\theta^2M/2$. We obtain that
\begin{equation}\label{eq:bnd-sum}
\Pr\lp[\lp|\sum_i X_i \rp|>t\rp]
\le 2 e^{\theta^2 Mn/2-\theta t} + Cne^{-I(L)}.
\end{equation}
Let us now substitute $L$ and $\theta$. Let $t^*$ be the solution of $t = MnI(t)/2t$, and notice that there is a unique solution, since the left hand side ($t$) is increasing while the right hand side is decreasing, by the assumption that $I(t)/t$ is decreasing. If $t \le t^*$, we take $L=t^*$ and $\theta = t/Mn$. Notice that
\[
\theta = t/Mn \le t^*/Mn = I(t^*)/2t^* = I(L)/2L,
\]
as required by Lemma~\ref{lem:bnd-mgf-by-M}. Then,
\[
e^{\theta^2 Mn/2-\theta t}
= e^{-t^2 /2Mn},
\]
and substituting into \eqref{eq:bnd-sum} concludes the case $t \le t^*$. 
If $t > t^*$, we substitute $L=t$ and $\theta = I(t)/2t=I(L)/2L$. Then, using the definition of $t^*$, we can bound the right hand side of \eqref{eq:bnd-sum} by
\begin{align*}
2e^{\theta^*Mn/2-\theta t} + Cne^{-I(L)}
=
2e^{I(t)^2Mn/8t^2 - I(t)/2} + Cne^{-I(t)}
= 2e^{I(t)Mn/2t \cdot I(t)/4t - I(t)/2} + Cne^{-I(t)}\\
\le 2e^{I(t)/4-I(t)/2} + Cne^{-I(t)}
= 2e^{-I(t)/4} + Cne^{-I(t)}.
\end{align*}
This concludes the proof.
\end{proof}

\if 0
\[
\E[e^{\theta X_{i+1}} \mid X_1 \cdots X_i] \le 1+\theta^2 M/2 \le e^{\theta^2 M/2}.
\]

Let us try to obtain the final bound, as a function of $t$. First, the term $e^{\theta^2 M k/2 - \theta t}$ is optimized by setting $\theta = t/Mk$, and its minima equals $e^{-t^2/2Mk}$. The maximal value of $L$ that we can take is such that $t/MK = I(L)/L$. The final rate is
\[
e^{-t^2/2Mk} + k e^{-I(L)}.
\]
If the second term dominates the first term, then we should instead use a smaller value of $L$: we use the value such that the two terms are identical. In this case, $\theta$ is maximal for that $L$, namely, $\theta = I(L)/(2L)$. The first term is upper bounded by
\[
e^{\theta^2 Mk/2-\theta t}
\le e^{-\theta t/2}
= e^{-I(L)t/4L},
\]
while the second term is $ke^{-I(L)}$. We select the $L$ such that both terms are identical, namely, $I(L)t/L = I(L)$ or, $t=L$. We have $e^{-I(t)}$! This only holds when the optimizer of the first point is greater than $I(L)/2L$, namely, when
\[
t/Mk \le I(L)/2L,
\]
for the value of $L$ where both terms are equal, namely, $L=t$.
In other words, we have
\[
t^2 \ge Mk I(t)/2.
\]
The transition is when 
$t = MkI(t)/2t$. Denote this by $t*$. We have:
\[
\Pr\lp[\lp|\sum_i X_i\rp| > t\rp]
\le \begin{cases}
    \exp(-t^2/2Mk) + k \exp(-I(t^*)) & t \le t^* \\
    k \exp(-I(t)) & t > t^*
\end{cases}\enspace.
\]
Let us see what happens in the special case that $I(t) = t/\log^\alpha(t)$. We have that for $t=t^*$,
\[
t = MkI(t)/2t = Mk/2\log^\alpha(t).
\]
Or,
\[
t \log^{\alpha}(t) = Mk/2.
\]
Or,
\[
t = C Mk \log(Mk).
\]
%If we increase $t/k$ then $L$ decreases. At some point, the two terms will be identical. Whenever this happens, we gain nothing by further reducing $L$: it will only affect the first term, which is already dominated. At this case, we have 
\fi

\section{Applying the abstract bound: Proof of corollary~\ref{cor:to-abstract}} \label{sec:cor-to-abst}
For the function $f(\eta) = 1/(1-\eta^2)^p$, one can compute that
\[
\lp| \frac{\partial f(\eta)}{\partial \eta} \rp|
= \lp| \frac{2\eta p}{(1-\eta^2)^{p+1}} \rp|
\le \frac{2p}{(1-\eta^2)^{p+1}} \enspace.
\]
Using the function $I(t) = (t/2p)^{p/(p+1)}$, we have that $I(|f'(\eta)|) \le f(\eta)$. It is straightforward to verify that all the other conditions on $I(t)$ and $f$ follow as well. Next, we find $t^*$, which is the solution to $t = kI(t)/2t$, namely to $t = k (t/2p)^{p/(p+1)}/2t$, which is solved by $t^* = C(p) k^{(p+1)/(p+2)}$,
where $C(p)$ depends only on $p$. Lastly, we have
\[
I(t^*) = C'(p) \cdot k^{p/(p+2)},
\]
for some $C'(p)$,
and the guarantee of the theorem implies an optimal rate for any $t \ge \exp(-I(t^*))$.

Next, we study the function $f(\eta) = \exp\lp(\exp\lp(1/(1-\eta^2)\rp)\rp)$. Denote by $h(\eta) = \frac{1}{1-\eta^2}$, and notice that $f(\eta) = \exp(\exp(h(\eta)))$. By the chain rule,
\begin{align*}
	\lp| f'(\eta)\rp| = \lp| \frac{d}{d\eta} \exp(\exp(h(\eta)))\rp|
	= \lp|\exp(\exp(h(\eta))) \exp(h(\eta)) h'(\eta)\rp|
	= \lp| f(\eta) \log(f(\eta)) \frac{2\eta}{(1-\eta^2)^2}\rp|\\
	\le \lp|2 f(\eta) \log(f(\eta)) \cdot h(\eta)^2 \rp|
	= 2 f(\eta) \log(f(\eta)) (\log\log f(\eta))^2.
\end{align*}
%We can take $I(t)$ to be the inverse function of $g(u) = 2u \log u \log\log^2 u$ for any 
Now, the intuition is to take $I(t)$ to be the inverse of $g(u) = 2u \log u \log\log^2 u$, and this will imply that $I(|f'(\eta)|) \le f(\eta)$. This substitution, however, does not satisfy all the required assumptions on $I(t)$. To be more formal, notice that $g(u)$ is monotonic increasing in $[e,\infty)$, that $g(e) = 0$ and that $\lim_{u\to \infty} g(u) = \infty$. Hence, $g \colon [e,\infty) \to [0,\infty)$ has an inverse that we denote by $h\colon [0,\infty) \to [e,\infty)$, which is also monotonic increasing. Further, we argue that $h$ is concave. Letting $g'$, $g''$, $h'$ and $h''$ denote derivatives, one has that
\[
h''(t) = \frac{-1}{(g'(h^{-1}(t))^3 g''(h^{-1}(t))} < 0,
\]
as $g$ is increasing and convex in $[e,\infty)$. Since $g(u) = \omega(u)$ as $u\to \infty$, we have that $h(t) = o(t)$ as $t \to \infty$. Let $t' = \sup_t h(t) \ge t$, and define
\[
I(t) = \begin{cases}
	t & t \le t' \\
	h(t) & t > t'
	\end{cases}.
\]
Then, $I(t) \le t$ as required, and further, $I(t) \ge \Omega(\sqrt{t})$ as $t\to \infty$, since $I(t) = \Theta(h(t)) = \Theta(t/(\log t \log^2\log t))$ as $t\to \infty$. It remains to argue that $I(t)/t$ is decreasing. First, $I(t)$ is a concave function as a minimum of two concave functions, and it satisfies $I(0) = 0$. %It is easy to argue that any increasing concave function $I$ that satisfies $I(0)=0$, also satisfies that $I(t)/t$ is increasing.
Then, computing the derivative of $I(t)/t$, one has
\[
\frac{d}{dt} \frac{I(t)}{t} 
= \frac{I'(t)t - I(t)}{t^2}
= \int_{0}^t \frac{I'(t) - I'(s)}{t^2} ds
\le 0,
\]
which follows from the fact that $I$ is concave, hence its derivative is decreasing in $t$. This concludes that $I(t)/t$ is decreasing. It is straightforward to verify the other assumptions on $f$ and $I(t)$.

Recall that $I(t) = \Theta(t/(\log t\log^2\log t))$ as $t\to \infty$.
Next, we solve for $t^*$ that is the solution of $t = k I(t)/2t = k/(2\log t \log^2\log t)$. We obtain that $t^* = \Theta(k/(\log k \log^2 \log k))$, and $I(t^*) = \Theta(k/(\log^2 k \log^4 \log k))$, as required.

% Lastly, we note that for the function $f(\eta) = \exp^{(\ell)}(1/(1-\eta^2))$ where $\ell \ge 1$, it holds that
% \[
% f'(\eta) = \prod_{i=0}^{\ell-1} \log^{(i)} f(\eta) \cdot \lp(\log^{(\ell)} f(\eta)\rp)^2 \cdot 2\eta,
% \]
% and the proof proceeds in a similar fashion as in the case $\ell = 2$.

\section{Lower bound: Proof of Theorem~\ref{thm:lb-informal}}\label{sec:lb}
Below, we state a formal version of Theorem~\ref{thm:lb-informal} and provide its proof.

\begin{theorem}\label{thm:lb-formal}
Fix $k,\delta, \epsilon=1$, $\Delta=1$  and $M>0$, and let $\mu$ be a continuous noise distribution supported on $[-M,M]$ whose density is monotonically decreasing for $\eta\ge 0$ and increasing for $\eta \le 0$.
Assume that the algorithm that adds to each answer an i.i.d. noise drawn from $\mu$, is $(1,\delta)$ differentially private against $1$-sensitive queries. Then, $M \ge \Omega(\log 1/\delta \log k)$. In particular, for $\delta = e^{-k}$, $M \ge \Omega(k \log k)$, and for any $\delta \le e^{-\omega(k/\log^2 k)}$, $M \ge \omega(\sqrt{k\log 1/\delta})$, where $\omega()$ denotes a strict asymptotic inequality.
\end{theorem}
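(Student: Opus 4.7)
The plan is to use an adversarial attack with $k$ adaptive $1$-sensitive queries against a single neighbor pair $\vx,\vx'$, and to translate the $(1,\delta)$-DP guarantee into a chain of inequalities on the tail masses $\mu_j := \Pr_{\eta\sim\mu}[\eta > M-j]$. Unimodality makes $\mu_j$ nondecreasing and convex in $j$, with $\mu_0 = 0$ and $\mu_M \ge 1/2$. The base attack takes $q_i(\vx)-q_i(\vx')=1$ for every $i$ and considers $E_1 := \{\exists i : \eta_i > M-1\}$: under $\vx$ the probability is $\approx k\mu_1$, whereas under $\vx'$ the event reduces (in $\vx$-centered coordinates) to $\{\exists i : \eta_i > M\}$ and has probability $0$, so $(1,\delta)$-DP forces $\mu_1 \le \delta/k$.

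I would generalize this to $E_{j,m} := \{|\{i : \eta_i > M-j\}| \ge m\}$, whose probabilities under $\vx$ and $\vx'$ are $\Pr[\mathrm{Bin}(k,\mu_j)\ge m]$ and $\Pr[\mathrm{Bin}(k,\mu_{j-1})\ge m]$ respectively. Sandwiching these via $\binom{k}{m} p^m (1-p)^{k-m} \le \Pr[\mathrm{Bin}(k,p)\ge m] \le e\binom{k}{m} p^m$ (valid when $kp = O(1)$) and applying $(1,\delta)$-DP yields the recurrence
\[
\mu_j \ \le \ e^{O(1/m)}\,\mu_{j-1} \ + \ \bigl(O(\delta)/\binom{k}{m}\bigr)^{1/m}.
\]
Setting $m = \Theta(\log(1/\delta))$ and using $\binom{k}{m} \ge (k/m)^m$ turns this into $\mu_j \le \alpha\,\mu_{j-1}+\beta$ with $\alpha = 1+\Theta(1/\log(1/\delta))$ and $\beta = \Theta(\log(1/\delta)/k)$. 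Iterating from $\mu_0 = 0$ and enforcing $\mu_M \ge 1/2$ gives $1/2 \le \beta(\alpha^M-1)/(\alpha-1)$, which rearranges to $M \gtrsim \log(1/\delta)\log\bigl(k/\log^2(1/\delta)\bigr)$, establishing $M \ge \Omega(\log(1/\delta)\log k)$ whenever $\log(1/\delta) \ll \sqrt{k}$.

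The hard part is the complementary regime $\log(1/\delta) \gtrsim \sqrt{k}$---in particular $\delta = e^{-k}$, where the theorem claims $M \ge \Omega(k\log k)$---since the one-neighbor recurrence above degenerates and loses the $\log k$ factor there. To recover it I would supplement the multi-event attack with group privacy: for $s$-neighbor datasets $(\vx_0, \vx_s)$ with queries satisfying $q_i(\vx_0)-q_i(\vx_s) = s$, the event $E_{s,m}$ has probability $0$ under $\vx_s$ (it would require $\eta_i > M$), so the $(s, se^{s-1}\delta)$-group-privacy inequality yields the direct estimate $\binom{k}{m}\mu_s^m \le se^{s-1}\delta$. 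Splicing this bound at level $j = s$ with the one-neighbor recurrence for $j > s$, and tuning $s$ and $m$ jointly as functions of $k$ and $\delta$, should remove the degeneracy and deliver $M \ge \Omega(\log(1/\delta)\log k)$ uniformly across the stated range. Unimodality enters both in ensuring $\mu_M \ge 1/2$ (a meaningful target for the chain) and in lower-bounding $\Pr[\mathrm{Bin}(k,\mu_j)\ge m]$ by its dominant term; the authors' remark that the assumption is believed redundant suggests that the argument should extend by working with the decreasing rearrangement of $\mu$.
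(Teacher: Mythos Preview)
Your tail-mass formulation via $\mu_j = \Pr_{\eta\sim\mu}[\eta > M-j]$ and the events $E_{j,m}$ is a legitimate dual to the paper's setup, which works instead with the log-density $f(\eta)$. Both exploit tensorization over the $k$ coordinates. However, the affine recurrence $\mu_j \le \alpha\mu_{j-1} + \beta$ obtained from a \emph{fixed} $m = \Theta(\log(1/\delta))$ yields only $M \gtrsim \log(1/\delta)\,\log\bigl(k/\log^2(1/\delta)\bigr)$, which collapses once $\log(1/\delta) \gtrsim \sqrt{k}$---precisely the regime the theorem targets (e.g.\ $\delta = e^{-k}$ or $\delta \le e^{-\omega(k/\log^2 k)}$).

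The idea you are missing is to let $m$ vary along the chain. The paper proves a \emph{multiplicative} step bound on $f$: whenever $\log(1/\delta)/k \lesssim f(\eta) \lesssim \log(1/\delta)$, one has $f(\eta+\tfrac12) \le f(\eta)\exp\bigl(8/\log(2/\delta)\bigr)$. The argument first shows (by a direct density-ratio computation) that any $(1,\delta)$-DP noise satisfies $\Pr\bigl[\sum_{i} f(\eta_i+1)-f(\eta_i)\ge 2\bigr] < 2\delta$, and then, assuming the increment $f(\eta_0+\tfrac12)-f(\eta_0)$ is too large, uses $m \approx \log(1/\delta)/f(\eta_0)$ copies---so $m$ depends on the current value of $f$---to force that event with probability at least $(e^{-f(\eta_0)}/2)^m \ge 2\delta$. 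In your variables this corresponds to taking $m_j \approx \log(1/\delta)/(-\log\mu_{j-1})$, which converts the recurrence into $-\log\mu_j \ge \bigl(1 - O(1/\log(1/\delta))\bigr)\cdot(-\log\mu_{j-1})$. Iterating from $-\log\mu_1 \approx \log(1/\delta)$ down to $-\log\mu_M = \log 2$ costs $\Omega\bigl(\log(1/\delta)\log\log(1/\delta)\bigr)$ steps, which is $\Omega(\log(1/\delta)\log k)$ once $\delta \le e^{-k/\log^2 k}$.

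For the complementary regime $\delta > e^{-k/\log^2 k}$ the paper does \emph{not} use group privacy: it invokes the Steinke--Ullman average-error lower bound $M \ge \Omega(\sqrt{k\log(1/\delta)})$, which dominates $\log(1/\delta)\log k$ exactly when $\log(1/\delta) \le k/\log^2 k$. Your group-privacy patch, as outlined, does not recover the missing factor: the inequality $\binom{k}{m}\mu_s^m \lesssim e^s\delta$ combined with $\mu_M \ge 1/2$ gives at best $M = \Omega(\log(1/\delta))$, and splicing it with the already-degenerate fixed-$m$ one-neighbor recurrence cannot manufacture the extra $\log k$. The essential fix is the level-dependent choice of $m$, not group privacy.
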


%One question that may arise is whether one can construct a bounded-noise mechanism using independent coordinates, that yields tight bounds also for $\delta = \exp(-ck)$. We show here that the answer is no. Assume that $\mu$ is unimodal, namely, $\mu$ is monotone non-increasing for $x \ge 0$ and non-decreasing for $x \le 0$.

We can assume that $\mu[0,\infty) \ge 1/2$ (otherwise we can consider $-\mu$ instead of $\mu$). Further, recall that $\mu$ is a continuous distribution, hence it has density that we can denote by $e^{-f(x)}$, where $f(x)=\infty$ if the density is zero. Our assumption implies that $f(\eta)$ is increasing for $\eta \ge 0$ and decreasing for $\eta \le 0$.

The argument consists of the following lemmas:

\begin{lemma}\label{lem:base}
Assume that $\Pr_\mu[\eta \ge 0] \ge 1/2$ and that $\delta \le 0.1$. Then,
\[
f(1/2) \le \log(10 M).
\]
\end{lemma}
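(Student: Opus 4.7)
The target $f(1/2) \le \log(10M)$ is equivalent to the pointwise lower bound $e^{-f(1/2)} \ge 1/(10M)$ on the density at $1/2$. Since unimodality makes the density non-increasing on $[0, M]$, we have $\mu[1/2, M] \le (M - 1/2)\, e^{-f(1/2)}$, so it suffices to show $\mu[1/2, M] \ge 1/10$ and then divide.

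To prove this mass bound, I will use a single application of $(1,\delta)$-DP to transport mass from $[0,1/2]$ into $[1, 3/2]$. Consider two neighboring datasets on which one $1$-sensitive query takes the values $1$ and $0$ respectively, so the outputs are distributed as $1+\eta$ and $\eta$ with $\eta \sim \mu$. Applying the DP inequality in the direction comparing the $q=1$ dataset to the $q=0$ dataset, with the set $S = [1, 3/2]$, gives
\[
\mu[0, 1/2] \;=\; \Pr[1+\eta \in S] \;\le\; e\,\Pr[\eta \in S] + \delta \;=\; e\,\mu[1, 3/2] + \delta \;\le\; e\,\mu[1/2, M] + \delta,
\]
where the last inequality uses $[1, 3/2] \cap [-M, M] \subseteq [1/2, M]$ (and is trivially stronger when $M < 3/2$, since then $\mu[1,3/2]$ is even smaller). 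Writing $a = \mu[0, 1/2]$ and $b = \mu[1/2, M]$, this reads $a \le eb + \delta$, while the hypothesis $\mu[0,\infty) \ge 1/2$ gives $a + b \ge \mu[0, M] \ge 1/2$. Adding these two constraints yields $(e+1)b \ge 1/2 - \delta$, so $b \ge (1/2 - \delta)/(e+1) \ge 0.4/(e+1) \ge 1/10$ for $\delta \le 1/10$.

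Substituting back, $e^{-f(1/2)} \ge b/(M-1/2) \ge 1/(10(M-1/2)) \ge 1/(10M)$, which gives the claimed bound. The main conceptual subtlety is orienting the DP inequality so that mass from $[0, 1/2]$ is pushed in the \emph{positive} direction, toward $[1, 3/2]$; applying DP in the opposite direction only relates $\mu[0, 1/2]$ to the negative tail $\mu[-1, -1/2]$, which is already known to be small by the hypothesis $\mu[0,\infty) \ge 1/2$ and hence would yield no useful information. No other use of DP, iteration, or unimodality beyond monotonicity on $[0,M]$ is needed.
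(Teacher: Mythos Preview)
Your proof is correct and follows essentially the same route as the paper's: a single application of the $(1,\delta)$-DP shift (the paper's Lemma~\ref{lem:Uplusv}) combined with the monotonicity of the density on $[0,M]$ and the mass hypothesis $\mu[0,\infty)\ge 1/2$. The only cosmetic differences are that you argue directly rather than by contradiction and compare the intervals $[0,1/2]$ and $[1,3/2]$ where the paper compares $[0,1]$ and $[1,2]$.
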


For an intuitive explanation about this lemma, notice that it is stating that the density at $1/2$ cannot be very small. This follows from two facts: (1) the density at $0$ is at least $1/2M$, since $\mu([0,M]) \ge 1/2$ and the density is decreasing in $[0,M]$. Further, since the noise satisfies $(1,0.1)$-DP, its density cannot drop ``too fast'', other a change in the true query value will be detected. The formal proof of this lemma appears in Section~\ref{pr:lem-base}.

\begin{lemma}\label{lem:step}
Let $\eta\ge 1/2$ such that
\[
\max\lp(\log 2,\frac{\log(1/2\delta)}{4(k-1)}\rp) \le f(\eta) \le \frac{\log (1/2\delta)}{3}.
\]
Then, 
\[
f(\eta+1/2) \le f(\eta)\exp(8/\log(2/\delta)).
\]
\end{lemma}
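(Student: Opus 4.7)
}
The approach is to argue by contradiction. Assume $f(\eta + 1/2) > f(\eta)\exp(8/\log(2/\delta))$; I will exhibit a violation of $(1,\delta)$-differential privacy by constructing $k$ suitable $1$-sensitive queries on a pair of neighboring one-bit datasets. Writing out the log density ratio of the mechanism's output under the two datasets, with all $k$ queries producing a common shift $\tau \in [-1,1]$ (which is permitted by $1$-sensitivity), the DP condition unpacks to
\[
\Pr_{\eta_1,\ldots,\eta_k \simiid \mu}\Big[\sum_{i=1}^k \bigl(f(\eta_i + \tau) - f(\eta_i)\bigr) > 1\Big] \le \delta.
\]
I will instantiate this inequality with $\tau = 1$, which lets me leverage the monotonicity of $f$ on $[0,M]$ in an especially clean way.

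Write $F_0 := f(\eta)$ and $F_1 := f(\eta+1/2)$. The contradictory hypothesis combined with the inequality $e^x - 1 \ge x$ gives $F_1 - F_0 \ge 8F_0/\log(2/\delta)$; substituting the lower hypothesis $F_0 \ge \log(1/(2\delta))/(4(k-1))$ and using $\log(1/(2\delta))/\log(2/\delta) \ge 1/2$ (valid for $\delta \le 1/8$) then yields $F_1 - F_0 \ge 1/(k-1)$. Fix the interval $I := [\eta - 1/2,\eta]$, which lies in $[0,M]$ since $\eta \ge 1/2$. For any $\eta_i \in I$, monotonicity of $f$ gives $f(\eta_i + 1) \ge f(\eta + 1/2) = F_1$ and $f(\eta_i) \le f(\eta) = F_0$, so $Y_i := f(\eta_i + 1) - f(\eta_i) \ge F_1 - F_0$. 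Let $N$ count the indices $i$ with $\eta_i \in I$; on the event $E := \{N \ge (k-1)/2 + 1\}$ the sum $\sum_i Y_i$ exceeds $1$. The upper hypothesis $F_0 \le \log(1/(2\delta))/3$ forces the density $\mu(t) = e^{-f(t)} \ge (2\delta)^{1/3}$ on $I$, so $p := \Pr[\eta_i \in I] \ge (2\delta)^{1/3}/2$.

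The crux is then a binomial lower-tail estimate: one needs $\Pr[N \ge (k-1)/2 + 1] > \delta$ for $N \sim \mathrm{Bin}(k,p)$. This is the main obstacle, because when $F_0$ sits near its minimum, both $p$ is small and the threshold is close to $k/2$. The numerology in the lemma's hypotheses (the exponents $8$, $3$, and $4(k-1)$) appears to be calibrated exactly so that a careful Chernoff-type lower tail estimate or a direct MGF computation goes through across the full admissible parameter range. It may be useful to split into two sub-cases depending on whether $F_0$ is near the upper extreme $\log(1/(2\delta))/3$ (where $F_1 - F_0 \gtrsim 8/3 > 1$ and a single coordinate in $I$ already makes the sum exceed $1$, so the event reduces to a union bound of probability $\gtrsim k (2\delta)^{1/3}/2$), or near the lower extreme (where many coordinates are needed and the binomial tail estimate is more delicate, but the extra assumption $F_0 \ge \log 2$ provides additional slack).
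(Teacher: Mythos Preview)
Your overall contradiction strategy and the use of the interval $I=[\eta-1/2,\eta]$ together with monotonicity of $f$ is exactly what the paper does (this is the content of Claim~\ref{cla:simplestep}). However, the way you aggregate the $k$ coordinates has two concrete gaps.

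First, the threshold $N\ge (k-1)/2+1$ does \emph{not} make $\sum_i Y_i>1$. In the extremal case $F_0=\log(1/2\delta)/(4(k-1))$ your own computation gives only $F_1-F_0\ge 1/(k-1)$, so the contribution from the $I$-coordinates is at most about $1/2$. Worse, because you shift \emph{all} $k$ coordinates by $\tau=1$, the terms $Y_i=f(\eta_i+1)-f(\eta_i)$ with $\eta_i<0$ can be negative and are not controlled, so even $N=k-1$ need not give $\sum_i Y_i>1$. (A minor related point: the DP inequality does not unpack to $\Pr[\sum>1]\le\delta$; one needs a threshold strictly above $\epsilon=1$. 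The paper proves $\Pr[\sum\ge 2]<2\delta$ via Lemma~\ref{lem:cannot-be-difference}.)

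Second, you identify the binomial tail as ``the main obstacle'' but do not resolve it, and the constants in the hypotheses are \emph{not} tuned for a Chernoff bound over all $k$ coordinates. The paper avoids the binomial computation entirely: instead of shifting all $k$ coordinates, it shifts only $m=\lceil C/K\rceil$ of them (taking $v_i=0$ for the rest), where $K=F_1-F_0$ is the per-coordinate gain, and then lower-bounds the bad event by the event that \emph{all} $m$ of those coordinates land in $I$, which has probability at least $(e^{-F_0}/2)^m$. The hypotheses $\log 2\le F_0$, $F_0\ge \log(1/2\delta)/(4(k-1))$, and $F_0\le \log(1/2\delta)/3$ are precisely what make both $m\le k$ and $(e^{-F_0}/2)^m\ge 2\delta$ hold simultaneously (Lemma~\ref{lem:tensorize}). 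This adaptive choice of $m$ is the missing ingredient in your argument; it replaces the delicate binomial tail with a one-line product bound and simultaneously eliminates the negative-term issue.
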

To gain some intuition on this lemma, we again use the fact that since $\mu$ satisfies DP, the noise density cannot drop too fast. In particular, if the density at $\eta$ is non-negligible, then the density at $\eta+1/2$ cannot drop too fast. Notice that the assumption that the density at $\eta$ is non-negiligible corresponds to requiring that $f(\eta)\le \log(1/\delta)/3$. On the other hand, the lower bound requirement on $f$ weakens as $k$ grows. This is due to the fact that we utilize the multiple samples. The proof appears in Section~\ref{sec:pr-lem-step}.

The proof concludes by combining these two lemmas. Since $f(1/2)$ is relatively and due to the bound on the growth rate of $f$, we conclude that $f(\eta)\le O(\log(1/\delta))$ for some $\eta \ge \Omega(\log k \log 1/\delta)$. In particular, this implies that $[0,\eta]$ is contained in the support of $\mu$, and concludes the proof. The analysis is based on a case analysis as formalized below:

\begin{proof}[Proof of Theorem~\ref{thm:lb-formal}]
To complete the proof, let $\eta_0$ be the minimal $\eta$ such that $\eta \ge 1/2$ and 
\[
f(\eta) \ge \max\lp(\log 2,\frac{\log(2/\delta)}{4(k-1)}\rp).
\]
%We can assume that $M$ is sufficiently large such that
%$\log(10M)$ is greater than the above maximum, hence, 
Then, by Lemma~\ref{lem:base},
\begin{equation}\label{eq:feta0}
f(\eta_0) \le \max\lp(\log 2,\frac{\log(2/\delta)}{4(k-1)}, \log(10M)\rp).
\end{equation}
Applying Lemma~\ref{lem:step} multiple times, we derive that
\[
f(\eta_0 + i/2) \le f(\eta_0) e^{i \cdot 8/\log(2/\delta)},
\]
for any $i$ such that
\[
f(\eta_0) e^{i \cdot 8/\log(2/\delta)}
\le \frac{\log(2/\delta)}{3}.
\]
Equivalently, this holds for any
\[
i \le \ell := \frac{\log( \log(2/\delta)/3 f(\eta_0))}{8/\log(2/\delta)} = \log\lp( \frac{\log(2/\delta)}{3 f(\eta_0)}\rp) \cdot \frac{\log(2/\delta)}{8}\enspace.
\]
In particular, $f(\eta_0 + \lfloor\ell\rfloor/2) < \infty$, which implies that $M > \eta_0 + \lfloor\ell\rfloor/2$. It suffices to show that $\ell \ge \Omega(\log 1/\delta \cdot \log k)$ to conclude the proof. We divide into cases according to $f(\eta_0)$, using \eqref{eq:feta0}.
\begin{itemize}
    \item If $f(\eta_0)\le \log 2$: then, $\ell \ge \Omega(\log(2/\delta)\log\log(2/\delta))$. We divide into cases according to $\delta$: if $\delta \le e^{-k/\log^2 k}$ then $\log\log(2/\delta) \ge \Omega(\log k)$ and the proof follows. Otherwise, we use the theorem of \cite{steinke2016between} that claims that for any $\delta \ge e^{-k}$ and for any $(1,\delta)$ mechanism for $1$-sensitive queries, it holds that the average error is bounded as follows: 
    \[
    \frac{1}{k}\sum_{i=1}^k|q_i(\vx) - a_i| \ge \Omega(\sqrt{k\log(1/\delta)})\enspace.
    \]
    This implies that if the mechanism uses independent bounded noise of magnitude bounded by $M$, then $M \ge \Omega(\sqrt{k\log(1/\delta)})$. Since $\sqrt{k\log(1/\delta)} \ge \log 1/\delta\log k$ for $\delta \le e^{-k/\log^2 k}$, the result follows.
    \item If $f(\eta_0) \le \log(10M)$. As argued for the previous case, we can assume that $\delta \le e^{-k/\log^2 k}$. Further, we can assume that $M \le \log(1/\delta)\log(k)/10$, otherwise the theorem follows. The above two assumptions imply that
    \[
    f(\eta_0)
    \le \log (10M)
    \le \log \log (1/\delta) + \log \log k
    \le O(\log\log(1/\delta)).
    \]
    This implies that 
    \[
    \ell = \log\lp( \frac{\log(2/\delta)}{3 f(\eta_0)}\rp) \cdot \frac{\log(2/\delta)}{8}
    \ge \Omega\lp(\log\log(1/\delta)\log(1/\delta)\rp)
    \ge \Omega(\log k \log(1/\delta)),
    \]
    where the last inequality follows from $\delta \le e^{-k/\log^2 k}$.
    \item If $f(\eta_0) \le \log(2/\delta)/4(k-1)$, it clearly follows from definition of $\ell$ that $\ell \ge \Omega(\log(1/\delta)\log k)$.
\end{itemize}
This concludes the proof.
\end{proof}

\subsection{Proof of Lemma~\ref{lem:base}}\label{pr:lem-base}

We start with a simple lemma that argues that if we shift any subset $U$ of $\mathbb{R}^k$ by any $v \in [0,\Delta]^k$ then the probability of the shifted set should not significantly differ from that of $U$, if the noise satisfies DP.
\begin{lemma} \label{lem:Uplusv}
Let $\epsilon,\delta$, let $\mu^k$ be a noise that is $(\epsilon,\delta)$ differentially private against $1$-sensitive queries. Let $U \subseteq \mathbb{R}^k$, $v \in [0,1]^k$, and define $U+v = \{u+v \colon u \in U\}$. Then.
\[
\mu^k(U) \le e^\epsilon\mu^k(U + v) + \delta.
\]
\end{lemma}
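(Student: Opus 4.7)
The plan is to reduce the claim directly to the definition of differential privacy by constructing a suitable pair of neighboring datasets and a specific tuple of $1$-sensitive queries whose difference in values, evaluated on the two datasets, exactly realizes the shift vector $v$. Since the noise is i.i.d.\ from $\mu$ and added to each true query value, shifting the query values by $v$ is equivalent to shifting the set of acceptable outputs by $-v$, which is exactly what we need to relate $\mu^k(U)$ to $\mu^k(U+v)$.

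Concretely, I would take $\calX=\{0,1\}$ and $n=1$ (or any $n\ge 1$ and fix one coordinate), let $\vx$ be the all-zero dataset and $\vx'$ differ from $\vx$ in a single coordinate $j$ (so $\vx'_j=1$). For $i=1,\dots,k$, define the non-adaptive query $q_i(\vy) := -v_i\cdot \vy_j$. Because $v_i\in[0,1]$, we have $|q_i(\vy)-q_i(\vy')|=|v_i|\cdot|\vy_j-\vy'_j|\le 1$ whenever $\vy,\vy'$ are neighbors, so each $q_i$ is $1$-sensitive. Moreover, $q_i(\vx)=0$ and $q_i(\vx')=-v_i$.

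Let $\vec{a}=(a_1,\dots,a_k)$ denote the output of the mechanism. On input $\vx$, $a_i=\eta_i$ with $\eta_i\simiid\mu$, so $\Pr[\vec a\in U\mid \vx]=\mu^k(U)$. On input $\vx'$, $a_i=-v_i+\eta_i$, so
\[
\Pr[\vec a\in U\mid \vx']=\Pr[\vec\eta\in U+v]=\mu^k(U+v).
\]
The $(\epsilon,\delta)$-DP hypothesis applied to the neighbors $\vx,\vx'$, the fixed analyst that asks $q_1,\dots,q_k$ non-adaptively, and the measurable set $U$, yields
\[
\mu^k(U)=\Pr[\vec a\in U\mid \vx]\le e^{\epsilon}\Pr[\vec a\in U\mid \vx']+\delta=e^{\epsilon}\mu^k(U+v)+\delta,
\]
which is the desired inequality.

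There is no real obstacle here; the only thing to be slightly careful about is that the DP guarantee is stated for adaptive query-answering, which of course implies the same guarantee against the non-adaptive strategy I am using. The construction is tight in the sense that it uses the entire allowed sensitivity ($|v_i|\le 1$), which is precisely why the hypothesis $v\in[0,1]^k$ is needed.
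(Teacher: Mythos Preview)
Your proof is correct and follows essentially the same approach as the paper: both construct a single-element (or single-coordinate) pair of neighboring datasets together with non-adaptive $1$-sensitive queries $q_i$ so that $q_i(\vx)=0$ and $q_i(\vx')=-v_i$, then read off the inequality directly from the DP definition. The only cosmetic difference is the choice of domain (the paper takes $\calX=[-1,0]$ with $q_i(x)=x_i$, you take $\calX=\{0,1\}$ with $q_i(\vy)=-v_i\vy_j$), which is immaterial.
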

\begin{proof}
Assume that $\calX = [-1,0]$, that $n=1$, and define the query $q_i(x) = x$ for all $i = 1,\dots,k$. Let $x_1 =(0,\dots,0)$ and $x_1' = -v$. Recall that $a_1,\dots,a_k$ are the answers of the algorithm, and notice that by the $(\epsilon,\delta)$ differential privacy,
\[
\Pr[(a_1,\dots,a_k)\in U \mid x_1] \le e^\epsilon \Pr[(a_1,\dots,a_k)\in U \mid x_1'] + \delta.
\]
This is equivalent to
\[
\Pr_\mu[\eta \in U] \le e^\epsilon \Pr_\mu[\eta \in U+v] + \delta,
\]
as required.
\end{proof}
We can conclude with the proof.
\begin{proof}[Proof of Lemma~\ref{lem:base}]
Assume towards contradiction that $f(1/2) > \log(10 M)$, and this implies by monotonicity that $f(1) > \log(10M)$. Then,
\[
\mu[1,\infty)
= \mu[1,M]
= \int_1^{M} e^{-f(\eta)} d\eta
\le \int_1^{M} e^{-f(1)} d\eta
\le M e^{-f(1)}
< 1/10.
\]
Consequently,
\[
\mu[0,1] = \mu[0,\infty) - \mu[1,\infty) > 0.4
\]
while 
\[
\mu[1,2] \le \mu[1,\infty) < 0.1.
\]
Let $U = \{(x_1,\dots,x_n) \colon x_1 \le 1\}$. %Since the algorithm is $(1,0.1)$ differentially private,
It holds that
\[
\mu^n(U) = \mu[0,1] > 0.4, %\ge e^1 \cdot 0.1 + 0.1 \ge e \mu(U+(1,0,\dots,0)) + 0.1,
\]
while 
\[
\mu^n(U+(1,0,\dots,0)) = \mu[1,2] < 0.1.
\]
This implies that
\[
\mu^n(U) > e^1 \mu^n(U+(1,0,\dots,0)) + 0.1,
\]
which contradicts Lemma~\ref{lem:Uplusv} and the fact that the noise is $(1,0.01)$ private.
\end{proof}

\subsection{Proof of Lemma~\ref{lem:step}}\label{sec:pr-lem-step}
First we use the following result, which is analogous to a bound that appears in the upper bound in this paper, and follows from Lemma~\ref{lem:Uplusv} above.

\begin{lemma}\label{lem:cannot-be-difference}
If the noise if $\mu^k$ is $(1,\delta)$ private with respect to $1$-sensitive queries, then for any $v_1,\dots,v_k \in [0,1]$,
\begin{equation}\label{eq:lb-sum}
\Pr_{x\sim \mu}\lp[\sum_{i=1}^k f(x_i+v_i) - f(x_i) \ge 2\rp] < 2\delta.
\end{equation}
\end{lemma}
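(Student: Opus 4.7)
The plan is to apply the shift-lemma (Lemma~\ref{lem:Uplusv}) directly to the ``bad'' set and then use the definition of that bad set to bootstrap the bound. Define
\[
U = \lp\{ x \in \mathbb{R}^k \colon \sum_{i=1}^k f(x_i+v_i) - f(x_i) \ge 2 \rp\},
\]
with the convention that $f = \infty$ outside $[-M,M]$ (so that points $x$ for which some $f(x_i)=\infty$ can be discarded, since they carry zero $\mu^k$ mass). Our goal is to show $\mu^k(U)<2\delta$.

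First, by Lemma~\ref{lem:Uplusv} applied with $\epsilon = 1$ and the shift vector $v=(v_1,\dots,v_k)\in[0,1]^k$,
\[
\mu^k(U) \le e \cdot \mu^k(U+v) + \delta.
\]
Next, change variables: writing the density of $\mu^k$ as $p(x) = \prod_i e^{-f(x_i)}/Z^k$, we have $\mu^k(U+v) = \int_U p(x+v)\,dx$ and $\mu^k(U) = \int_U p(x)\,dx$. By the very definition of $U$, the ratio $p(x+v)/p(x) = \exp\lp(-\sum_i [f(x_i+v_i)-f(x_i)]\rp) \le e^{-2}$ pointwise on $U$, and therefore
\[
\mu^k(U+v) \le e^{-2}\,\mu^k(U).
\]

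Combining these two inequalities yields $\mu^k(U) \le e\cdot e^{-2}\mu^k(U) + \delta = e^{-1}\mu^k(U) + \delta$, so
\[
\mu^k(U) \le \frac{\delta}{1-e^{-1}} < 2\delta,
\]
as claimed. The only mildly subtle point is handling points where $x+v$ falls outside the support of $\mu$: there $p(x+v)=0$, so the pointwise ratio bound holds trivially and the change-of-variables step goes through. No delicate estimates or concentration arguments are needed, the whole proof is a one-line algebraic consequence of Lemma~\ref{lem:Uplusv} combined with the density ratio identity on $U$.
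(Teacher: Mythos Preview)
Your proposal is correct and follows essentially the same argument as the paper: define the bad set $U$, apply Lemma~\ref{lem:Uplusv} to get $\mu^k(U)\le e\,\mu^k(U+v)+\delta$, use the density-ratio identity on $U$ to obtain $\mu^k(U+v)\le e^{-2}\mu^k(U)$, and solve the resulting inequality $\mu^k(U)(1-e^{-1})\le\delta$. The paper's proof is identical up to notation (it phrases the second step as $\Pr[U]\ge e^{2}\Pr[U+v]$ rather than $\Pr[U+v]\le e^{-2}\Pr[U]$).
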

\begin{proof}
Look at the set $U = \{\eta \colon \sum_i f(\eta_i+v_i) - f(\eta_i) \ge 2\}$.
From Lemma~\ref{lem:Uplusv} and the $(1,\delta)$ privacy assumption, we have
\[
\Pr_{\mu^k}[U]\le e\Pr_{\mu^k}[U+v]+\delta.
\]
Further, 
\[
\Pr_{\mu^k}[U]
= \int_{U} \exp\lp( 
- \sum_{i=1}^k f(\eta_i) d\eta
\rp)
\ge \int_{U} \exp\lp(-2
- \sum_{i=1}^k f(\eta_i+v_i) d\eta
\rp)
= e^2 \Pr_{\eta}[U+v].
\]
Combining the above inequalities, we derive that
\[
\Pr_\mu[U] \le e\Pr_\mu[U+v] + \delta \le \Pr_\mu[U]/e + \delta,
\]
hence
\[
\Pr_\eta[U] (1-1/e) \le \delta,
\]
which implies that $\Pr_\eta[U] < 2\delta$ as required.
\end{proof}

Recall that we want to bound $f(\eta+1/2)-f(\eta)$. In the proof, we assume towards contradiction that $f(\eta_0+1/2)-f(\eta_0)$ is large for some appropriate $\eta_0$ and we will derive that \eqref{eq:lb-sum} fails to hold, which, by Lemma~\ref{lem:cannot-be-difference} implies that the mechanism is not DP. As a first step, we will prove that if $f(\eta_0+1/2)-f(\eta_0)$ is large then a variant of \eqref{eq:lb-sum} is not satisfied, with different constants and $k=1$.

%First, we want to argue that if for some $\eta_0\ge 1/2$, $f(\eta_0+1/2)-f(\eta_0)$ is large, then for any $\eta\in [\eta_0-1/2,\eta_0]$ it will hold that $f(\eta+1)-f(\eta)$ is large, and this follows from monotonicity of $f$ and the fact that $\eta \le \eta_0 \le \eta_0+1/2 \le \eta+1$. Further, we argue that if $\eta_0$ has a non-negligible density, then the set of these ``bad'' values of $\eta$ has a non-negiligible probability. This is summarized in the following lemma:

\begin{claim}\label{cla:simplestep}
Let $\eta_0 \ge 1/2$. Then,
\[
\Pr_{\eta\sim \mu}[f(\eta+1)-f(\eta) \ge f(\eta_0+1/2)-f(\eta_0)] \ge \frac{e^{-f(\eta_0)}}{2}\enspace.
\]
\end{claim}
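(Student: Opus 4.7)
}
The plan is to exhibit an explicit interval of length $1/2$ on which the differential inequality $f(\eta+1)-f(\eta) \ge f(\eta_0+1/2)-f(\eta_0)$ is forced to hold, and then lower-bound its $\mu$-mass by $e^{-f(\eta_0)}/2$. Both bounds will come from nothing more than monotonicity of $f$ on $[0,\infty)$ (which is equivalent to the assumption that the density $e^{-f}$ is monotonically decreasing for $\eta \ge 0$), together with the hypothesis $\eta_0 \ge 1/2$.

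Concretely, I would take the interval $A := [\eta_0 - 1/2, \eta_0]$, which is contained in $[0,\infty)$ since $\eta_0 \ge 1/2$. For any $\eta \in A$, monotonicity of $f$ on $[0,\infty)$ yields $f(\eta) \le f(\eta_0)$ and, since $\eta + 1 \ge \eta_0 + 1/2$, $f(\eta+1) \ge f(\eta_0 + 1/2)$. Subtracting these two inequalities gives
\[
f(\eta+1) - f(\eta) \ge f(\eta_0 + 1/2) - f(\eta_0) \quad\text{for every } \eta \in A,
\]
so the event in the claim contains $A$.

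It then remains to lower-bound $\mu(A)$. Again by monotonicity of $f$ on $[0,\infty)$, we have $f(\eta) \le f(\eta_0)$ for every $\eta \in A$. Since the density of $\mu$ equals $e^{-f(\eta)}$ (the normalization is absorbed into $f$, so $\int e^{-f(\eta)}d\eta = 1$), we obtain
\[
\mu(A) \;=\; \int_{\eta_0 - 1/2}^{\eta_0} e^{-f(\eta)}\, d\eta \;\ge\; \int_{\eta_0 - 1/2}^{\eta_0} e^{-f(\eta_0)}\, d\eta \;=\; \tfrac{1}{2} e^{-f(\eta_0)},
\]
which completes the proof.

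There is really no obstacle here; the only subtlety is making sure that $\eta_0 - 1/2 \ge 0$ so that the monotonicity hypothesis applies on the entire interval $A$, which is guaranteed by $\eta_0 \ge 1/2$. The factor $1/2$ in the claim is exactly the length of $A$, and the $e^{-f(\eta_0)}$ factor is the pointwise lower bound on the density on $A$, so the constants match up cleanly.
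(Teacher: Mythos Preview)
Your proof is correct and is essentially identical to the paper's own argument: both use the interval $[\eta_0-1/2,\eta_0]$, apply monotonicity of $f$ on $[0,\infty)$ to obtain $f(\eta)\le f(\eta_0)$ and $f(\eta+1)\ge f(\eta_0+1/2)$ for $\eta$ in that interval, and then lower-bound the $\mu$-mass of the interval by $\tfrac12 e^{-f(\eta_0)}$ via the same density bound.
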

\begin{proof}
We have by monotonicity of $f(\eta)$,
\[
\Pr_{\eta\sim\mu}[\eta_0 - 1/2 \le \eta \le \eta_0]
= \int_{\eta_0-1/2}^{\eta_0} e^{-f(\eta)} d\eta
\ge \int_{\eta_0-1/2}^{\eta_0} e^{-f(\eta_0)} d\eta
= \frac{e^{-f(\eta_0)}}{2}\enspace.
\]
For any such $\eta \in [\eta_0-1/2,\eta_0]$ we have
\[
f(\eta) \le f(\eta_0) \le f(\eta_0+1/2) \le f(\eta+1).
\]
Consequently, $f(\eta+1)-f(\eta) \ge f(\eta_0+1/2) - f(\eta_0)$.
\end{proof}

Next, we extend Claim~\ref{cla:simplestep} to show that if $f(\eta_0+1/2)-f(\eta_0)$ is large for some $\eta_0$, then a variant of \eqref{eq:lb-sum} does not hold, where the sum is over $m>1$ elements. To achieve that, we first use Claim~\ref{cla:simplestep} to show that $\Pr[f(\eta+1)-f(\eta) \ge a] \ge b$ for some $a,b>0$ and then derive that if $\eta_1,\dots,\eta_m$ are i.i.d., then 
\[
\Pr[\sum_{i=1}^m f(\eta_i+1)-f(\eta_i) \ge ma]
\ge \Pr[\forall i\le m, f(\eta_i+1)-f(\eta_i) \ge a]
= \prod_{i=1}^m \Pr[f(\eta_i+1)-f(\eta_i)\ge a]
= b^m.
\]
Choosing $a$ and $b$ appropriately yields the desired result.
\begin{lemma} \label{lem:tensorize}
Let $\eta_0 \ge 1/2$, let $\delta_0 > 0$, and let $C > 0$. Assume that
\[
f(\eta_0+1/2) \ge f(\eta_0)(1+4C/\log(1/\delta_0))
\]
and that 
\[
\max\lp(\log 2,\frac{\log(1/\delta_0)}{4(k-1)}\rp) \le f(\eta_0) \le \frac{\log (1/\delta_0)}{3}.
\]
Then, there is $m \le k$ such that
\[
\Pr_{\veta\sim \mu^k}\lp[\sum_{i=1}^m f(\eta_i+1)-f(\eta_i) \ge C\rp] \ge \delta_0.
\]
\end{lemma}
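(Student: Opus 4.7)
The plan is to tensorize Claim~\ref{cla:simplestep} across $m$ independent coordinates, where $m$ is chosen so that the cumulative gap $m \cdot a$ exceeds $C$, with $a := f(\eta_0+1/2) - f(\eta_0)$. Set $b := e^{-f(\eta_0)}/2$; the hypothesis $\eta_0 \ge 1/2$ makes Claim~\ref{cla:simplestep} directly applicable, giving $\Pr[f(\eta_i+1)-f(\eta_i) \ge a] \ge b$ for each coordinate $\eta_i \sim \mu$. By independence of the $\eta_i$, the event that this gap holds simultaneously for all $i \le m$ has probability at least $b^m$, and on that event $\sum_{i=1}^m (f(\eta_i+1)-f(\eta_i)) \ge m a$. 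So I would take $m := \lceil C/a \rceil$, which ensures $ma \ge C$; it then remains to verify $m \le k$ and $b^m \ge \delta_0$.

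For $m \le k$: the hypothesis $f(\eta_0+1/2) \ge f(\eta_0)(1+4C/\log(1/\delta_0))$ gives $a \ge 4Cf(\eta_0)/\log(1/\delta_0)$, hence $C/a \le \log(1/\delta_0)/(4f(\eta_0))$. Plugging in the lower bound $f(\eta_0) \ge \log(1/\delta_0)/(4(k-1))$ then yields $C/a \le k-1$, so $m = \lceil C/a \rceil \le k$.

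For $b^m \ge \delta_0$: first, $f(\eta_0) \ge \log 2$ combined with $f(\eta_0) \le \log(1/\delta_0)/3$ gives $\log(1/\delta_0) \ge 3\log 2$, and in particular $\log(1/b) = f(\eta_0) + \log 2 \le 2f(\eta_0)$. I would split on whether $C/a \le 1$ or $C/a > 1$. When $C/a \le 1$, $m=1$ and $\log(1/b) \le 2f(\eta_0) \le 2\log(1/\delta_0)/3 \le \log(1/\delta_0)$, as needed. When $C/a > 1$, I use $\lceil x \rceil \le 2x$ for $x \ge 1$ to get $m \le 2C/a$, and hence
\[
m\log(1/b) \le 2mf(\eta_0) \le 4Cf(\eta_0)/a \le \log(1/\delta_0),
\]
so again $b^m \ge \delta_0$.

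The argument is essentially just tensorization of Claim~\ref{cla:simplestep} plus constant tracking. The one subtle point is handling the ceiling in $m$ without eating up the slack offered by the hypothesis $f(\eta_0) \le \log(1/\delta_0)/3$; the observation $\lceil x \rceil \le 2x$ for $x \ge 1$ (case $C/a > 1$), together with a direct single-coordinate calculation (case $C/a \le 1$), closes this cleanly. I do not anticipate any further obstacle.
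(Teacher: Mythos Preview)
Your proposal is correct and follows essentially the same tensorization of Claim~\ref{cla:simplestep} as the paper. The only difference is cosmetic: the paper works with the lower bound $K = 4Cf(\eta_0)/\log(1/\delta_0)$ on your $a$ as the per-coordinate threshold and bounds $\lceil C/K\rceil \le C/K + 1$, whereas you use $a$ directly and handle the ceiling via the case split $C/a \le 1$ versus $C/a > 1$; your constant-tracking is in fact slightly cleaner.
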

\begin{proof}
Define $K = 4Cf(\eta_0)/\log(1/\delta_0)$ and $L = f(\eta_0) + \log 2$. Applying Claim~\ref{cla:simplestep}, we have
\[
\Pr_{\eta\sim \mu}[f(\eta+1)-f(\eta) \ge K] 
\ge \Pr_{\eta\sim \mu}[f(\eta+1)-f(\eta) \ge f(\eta_0+1/2)-f(\eta_0)] \ge e^{-L}.
\]
Let $m = \lceil C/K\rceil$. 
First, notice that $m \le k$: indeed, it suffices to show that $C/K+1 \le k$, which holds since
\[
f(\eta_0) \ge \frac{\log 1/\delta_0}{4 (k-1)}\enspace.
\]
Then,
\[
\Pr_{\veta \sim \mu^k}[\sum_{i=1}^m f(\eta_i+1)-f(\eta_i) \ge C] \ge
\Pr_{\veta \sim \mu^k}[\forall i\in\{1,\dots,m\},~ f(\eta_i+1)-f(\eta_i) \ge K]
\ge e^{-Lm}.
\]
It remains to argue that $e^{-Lm} \ge \delta_0$, or equivalently, $Lm \le \log(1/\delta_0)$. By definition of $m$, it suffices to show that $L(C/K+1) \le \log 1/\delta_0$. Indeed, using the fact that $f(\eta_0) \ge \log 2 \ge 1/2$ and that $f(\eta_0) \le \log(1/\delta_0)/3$,
\[
L(C/K+1) = LC/K + L =\log 1/\delta_0 \frac{f(\eta_0)+\log 2}{4 f(\eta_0)} + f(\eta_0) + \log 2 
\le \log 1/\delta_0 \frac{2 f(\eta_0)}{4f(\eta_0)} + 2 f(\eta_0) \le  3f(\eta_0) \le \log(1/\delta_0)
\]
This concludes the proof.
\end{proof}

\begin{proof}[Proof of Lemma~\ref{lem:step}]
Assume towards contradiction the existence of such $\eta$.
Then, since $e^x \ge 1+x$ for all $x$, we have
\[
f(\eta+1/2) \ge f(\eta)(1+8/\log(2/\delta)).
\]
Applying Lemma~\ref{lem:tensorize} with $C=2$ and $\delta_0=2\delta$, we obtain that there eixsists $m \le k$ such that
\[
\Pr_{\veta\sim \mu^k}\lp[\sum_{i=1}^m f(\eta_i+1)-f(\eta_i) \ge 2\rp] \ge 2\delta.
\]
However, by Lemma~\ref{lem:cannot-be-difference} this does not hold. We derive the contradiction, and this concludes the proof.
\end{proof}

\section{Adaptive data analysis: Proof of Corollary~\ref{cor:ada}}\label{sec:ada-proof}

We use the following transfer theorem from \cite{jung2020new, bassily2021algorithmic}:
\begin{theorem}\label{thm:transfer}
Assume that $\mathcal{A}$ is an algorithm that answers $k$ statistical queries, $q_i \colon X \to [0,1]$ given some dataset $(x_1,\dots,x_n)\in X^n$. Further, assume that the algorithm is $(\epsilon,\delta)$-differentially private with respect to its dataset and that with probability $1-\beta'$, all of its answers $a_i$ are $\alpha'$-accurate with respect to the sample, namely,
\[
\Pr\lp[ \forall i=1,\dots,k\colon \lp| a_i - \frac{1}{n}\sum_{j=1}^n q_i(x_j)\rp|\le \alpha' \rp] \ge 1-\beta' \enspace.
\]
Assume that $x_1,\dots,x_n$ are drawn from some distribution $P$. Then, for any $c,d>0$,
the algorithm $\A$ produces answers that are $\alpha(c,d)$-accurate with respect to $P$ with probability $1-\beta(c,d)$, where
\[
\alpha(c,d) = \alpha' + e^\epsilon - 1 + c + 2d; \quad
\beta(c,d) = \beta'/c + \delta/d.
\]
Namely,
\[
\Pr\lp[ \forall i=1,\dots,k\colon \lp| a_i - q_i(P)\rp|\le \alpha(c,d) \rp] \ge 1-\beta(c,d) \enspace.
\]
\end{theorem}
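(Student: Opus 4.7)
The plan is to prove the transfer theorem via the \emph{monitor} reduction combined with a DP-generalization lemma; the monitor collapses the $k$ adaptive queries into a single post-processed query to which a per-query generalization argument applies.

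First, I would construct an analysis-only procedure $W$ that, on input $x \in X^n$, internally simulates the interaction between $\mathcal{A}$ and the analyst's fixed strategy, and outputs the single pair $(q^*, a^*) = (q_{i^*}, a_{i^*})$ with $i^* := \arg\max_{i \in [k]} |a_i - q_i(P)|$ (breaking ties arbitrarily). Because the population distribution $P$ is a fixed object available at the analysis stage (even though $\mathcal{A}$ does not know $P$), the map from $\mathcal{A}$'s transcript to $(q^*, a^*)$ is a deterministic post-processing; hence $W$ is itself $(\epsilon,\delta)$-differentially private as a map $X^n \to Q \times \mathbb{R}$. By the definition of $i^*$ we have $|a^* - q^*(P)| = \max_{i \in [k]} |a_i - q_i(P)|$, so it suffices to bound the deviation of the single pair $(a^*, q^*)$.

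Next, I would decompose via the triangle inequality
\[
|a^* - q^*(P)| \;\le\; |a^* - \hat q^*(x)| \;+\; |\hat q^*(x) - q^*(P)|,\qquad \hat q(x) := \frac{1}{n}\sum_{j=1}^n q(x_j),
\]
and bound the two terms separately. For the first (\emph{sample-error}) term, the hypothesis gives $\Pr[\,|a^*-\hat q^*(x)| > \alpha'\,] \le \beta'$ (since $i^* \in [k]$), and because $|a^*-\hat q^*(x)| \le 1$ (queries are $[0,1]$-valued), Markov's inequality applied to $(|a^*-\hat q^*(x)|-\alpha')_+$ yields $\Pr[\,|a^*-\hat q^*(x)|>\alpha'+c\,] \le \beta'/c$. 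This accounts for the $\beta'/c$ term in $\beta(c,d)$ and the $\alpha'+c$ portion of $\alpha(c,d)$.

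The heart of the proof is the second (\emph{generalization}) term, controlled by the following DP-generalization lemma: for any $(\epsilon,\delta)$-DP mechanism $M : X^n \to Q$ producing a $[0,1]$-valued query $q$,
\[
\Pr_{x \sim P^n,\, q \sim M(x)}\bigl[\,|\hat q(x) - q(P)| > (e^\epsilon-1) + 2d\,\bigr] \;\le\; \delta/d.
\]
I would prove this in two sub-steps. (a) \emph{Expectation bound via one-coordinate resampling}: for each $j$, replace $x_j$ by an independent fresh $x_j' \sim P$ to form a neighbor $x^{(j)}$ of $x$; DP gives $\E{q(x_j)} \le e^\epsilon\,\E[{q\sim M(x^{(j)})}]{q(x_j)} + \delta$, and the right-hand side equals $e^\epsilon\,\E{q(P)} + \delta$ because $q$ is independent of $x_j$ in the coupled distribution. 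Averaging over $j$ produces $\E{\hat q(x) - q(P)} \le (e^\epsilon-1) + \delta$, and applying the same argument to $1-q$ yields the symmetric bound on $\E{q(P) - \hat q(x)}$. (b) \emph{Tail bound via a shifted Markov}: for each tail direction, define an auxiliary $[0,1]$-valued post-processed query whose sample-minus-population deviation captures the positive part $(\hat q(x) - q(P) - (e^\epsilon-1))_+$, apply the expectation bound from (a), and invoke Markov's inequality at threshold $d$ to obtain $\delta/d$; a union bound over the two tails gives the $2d$ slack.

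Combining the two terms via the triangle inequality and a union bound yields $\Pr[\,|a^*-q^*(P)| > \alpha(c,d)\,] \le \beta(c,d)$, which by the monitor's definition of $i^*$ is exactly the statement of the theorem. The main obstacle is sub-step (b) above: constructing a genuine $[0,1]$-valued post-processed query whose empirical deviation encodes the positive part of the shifted error---this is the key technical move that converts the signed expectation bound of (a) into the sharp form $(e^\epsilon-1) + 2d$ with failure $\delta/d$, rather than a weaker bound with an extra multiplicative $e^\epsilon$ factor or a $\sqrt{d}$-scaling.
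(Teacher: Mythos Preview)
The paper does not prove Theorem~\ref{thm:transfer}; it is quoted from \cite{jung2020new,bassily2021algorithmic} and used as a black box. Your outline follows the standard architecture of those references---the monitor reduction followed by a one-query DP-generalization bound---so in spirit you are reproducing the cited proof rather than offering an alternative.

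Two points deserve attention. First, the claim ``$|a^*-\hat q^*(x)|\le 1$ because queries are $[0,1]$-valued'' is not automatic: $a^*$ is an arbitrary real output of $\mathcal{A}$. You need to clip answers to $[0,1]$ first (harmless, since clipping preserves both privacy and sample accuracy), and this should be stated.

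Second, and more substantively, your sub-step~(b) is where the actual work lies, and you have not carried it out---you yourself flag it as ``the main obstacle.'' The expectation bound from~(a) controls the \emph{signed} quantity $\E[\hat q(x)-q(P)]$, not its positive part, so a direct Markov step does not yield $\Pr[\hat q(x)-q(P)>(e^\epsilon-1)+2d]\le\delta/d$. Your description---``define an auxiliary $[0,1]$-valued post-processed query whose sample-minus-population deviation captures the positive part''---is too vague to be checkable: any such query must be a function of the mechanism's output alone (not of $x$), yet the positive part you want to encode depends on $\hat q(x)$. The route taken in \cite{jung2020new} is to have the monitor also output a sign bit chosen from $a^*-q^*(P)$ (which \emph{is} available, since $P$ is fixed), use it to flip $q^*$ to $1-q^*$ so that the answer-to-population error becomes one-sided, and then combine the sample-accuracy and generalization terms \emph{inside} a single expectation before applying Markov once for each of the $\beta'$ and $\delta$ contributions. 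Your separate-Markov-per-term plan can be made to work, but the construction you gesture at is not the one that delivers the stated constants, and as written the argument is incomplete at precisely the step you identify as critical.
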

From this, we can easily derive our theorem:
\begin{proof}[Proof of Corollary~\ref{cor:ada}]
Fix $\alpha, \beta \in (0,1/2)$.
We apply Theorem~\ref{thm:transfer}, using the bounded noise mechanism from Theorem~\ref{thm:main-informal}, that answers each query $i$ with $a_i=\sum_{i=j}^n q_i(x_j)+\eta_i$ where $\eta_i$ are i.i.d. bounded noise. We set the privacy parameters to $\epsilon = \alpha/8$ and $\delta=\alpha\beta/4$. We obtain that the answers are $\alpha'$-accurate with respect to the sample, for
\[
\alpha' = O\lp(\frac{\sqrt{k\log(1/\alpha\beta)}}{\epsilon n}\rp),
\]
using Theorem~\ref{thm:main-informal} and the fact that the statistical queries on a dataset of size $n$ are $\Delta=1/n$-sensitive. This holds with probability $1$, hence, we can substitute $\beta'=0$. We take $c\to 0$ and $d=\alpha/4$, and we derive that
\[\lim_{c\to 0}\alpha(c,d) = \alpha' + e^{\alpha/8}-1 + \alpha/2 \le \alpha' + 3\alpha/4; \quad
\forall c>0,\ \beta(c,d) = \beta. \]
Here, we used $e^x \le 1+2x$ for $x\in [0,1]$.
If we take 
\[
n = \Theta\lp(\frac{\sqrt{k \log(1/\delta)}}{\epsilon^2}\rp)
= \Theta\lp( \frac{\sqrt{k \log(1/\alpha\beta)}}{\alpha^2}\rp),
\]
then we have $\alpha' < \alpha/4$, hence, $\alpha(c,d) < \alpha$ for some $c>0$ and $\beta(c,d)=\beta$. We derive by Theorem~\ref{thm:transfer} that the protocol is $\alpha$-accurate with respect to $P$ with probability $1-\beta$, as required.
\end{proof}
\section{Computing tighter upper bounds}\label{sec:exp-formal}

Here we explain how to derive an algorithm that upper bounds the optimal noise level, for each given $\epsilon,k$ and $\delta$. The final algorithm is given as Algorithm~\ref{alg:FindNoise} below, yet, we start by explaining it step by step. First of all, we note the following sufficient condition for $(\epsilon,\delta)$-privacy, which is a tighter variant of its analogoue in the proof of Theorem~\ref{thm:general-func}:

\begin{lemma}\label{lem:simulation-general}
Let $P$ and $Q$ be probability distributions over $\mathbb{R}^k$ with densities $p(x)$ and $q(x)$, respectively. Let $\epsilon, \delta > 0$, and assume that
\[
\int_{\epsilon}^\infty \Pr_{X\sim P}\lp[\log \frac{p(X)}{q(X)} > t \rp] e^{\epsilon-t} dt \le \delta.
\]
Then, for any $U \subseteq \mathbb{R}^d$,
\[
\Pr_{X\sim P}[X \in U] \le e^\epsilon \Pr_{X\sim Q}[X \in U] + \delta.
\]
\end{lemma}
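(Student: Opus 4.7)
The plan is to reduce the integral hypothesis to the familiar characterization that $(\epsilon,\delta)$-closeness between $P$ and $Q$ is controlled by the positive part $\int (p(x) - e^\epsilon q(x))_+\, dx$, and then apply this to an arbitrary measurable set $U$.

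First, I would rewrite the hypothesis as an integral over $x$ by swapping the order of integration (Fubini). Writing $\Pr_{X\sim P}[\log(p(X)/q(X)) > t] = \int_{\{x\,:\, \log(p(x)/q(x)) > t\}} p(x)\, dx$ and substituting into the hypothesis, the inner $t$-integral runs over $t \in [\epsilon, \log(p(x)/q(x)))$ whenever that interval is nonempty (i.e.\ whenever $p(x) > e^\epsilon q(x)$). Computing that inner integral gives $e^{\epsilon - \epsilon} - e^{\epsilon - \log(p(x)/q(x))} = 1 - e^\epsilon q(x)/p(x)$, so after multiplying through by $p(x)$ the hypothesis becomes
\[
\int_{\{p > e^\epsilon q\}} \bigl(p(x) - e^\epsilon q(x)\bigr)\, dx \;=\; \int \bigl(p(x) - e^\epsilon q(x)\bigr)_+\, dx \;\le\; \delta.
\]

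Next, for any measurable $U \subseteq \mathbb{R}^k$, I would write
\[
\Pr_{X\sim P}[X \in U] - e^\epsilon \Pr_{X\sim Q}[X \in U] \;=\; \int_U \bigl(p(x) - e^\epsilon q(x)\bigr)\, dx \;\le\; \int_U \bigl(p(x) - e^\epsilon q(x)\bigr)_+\, dx,
\]
and the latter is at most the unrestricted integral over all of $\mathbb{R}^k$, which by the previous step is at most $\delta$. Rearranging yields the desired inequality $\Pr_P[X\in U] \le e^\epsilon \Pr_Q[X \in U] + \delta$.

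There is no real obstacle here: the only nontrivial step is the Fubini exchange, which is justified because the integrand $\mathbf{1}[\log(p(X)/q(X)) > t]\cdot p(X)\, e^{\epsilon - t}$ is nonnegative. The rest is a straightforward manipulation, and the identity $\int (p - e^\epsilon q)_+ \le \delta \iff P(U) \le e^\epsilon Q(U) + \delta\ \forall U$ is a standard (and tight) reformulation of approximate differential privacy, so once the hypothesis is massaged into that form the conclusion is immediate.
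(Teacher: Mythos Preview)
Your proof is correct and follows essentially the same route as the paper: both arguments reduce the integral hypothesis to the inequality $\int (p(x)-e^\epsilon q(x))_+\,dx \le \delta$ and then conclude for an arbitrary $U$ by the standard pointwise bound. The only cosmetic difference is that the paper introduces the auxiliary variable $\Lambda=\max(0,\log(p(X)/q(X))-\epsilon)$ and invokes the layer-cake/integration-by-parts identity $\E[F(Z)]=F(0)+\int_0^\infty \Pr[Z>t]F'(t)\,dt$ with $F(t)=1-e^{-t}$, whereas you obtain the same identity by a direct Fubini swap; your version is arguably the more transparent of the two.
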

\begin{proof}
First of all, notice that by change of variables $s=t-\epsilon$, one has
\[
\int_{0}^\infty \Pr_{X\sim P}\lp[\log \frac{p(X)}{q(X)}-\epsilon > s \rp] e^{-s} ds \le \delta \enspace.
\]
Assume that the above statement holds and denote by $\Lambda$ the random variable 
\[
\Lambda = \max\lp(0, \log \frac{p(X)}{q(X)}-\epsilon\rp)
\]
where $X\sim P$.
The left hand side translates to
\[
\int_{0}^\infty \Pr[\Lambda > t] e^{-t} dt \le \delta.
\]
We use the known technique of integration by parts for probability distributions, which states that for a nonnegative random variable $Z$ and for a function $F \colon [0,\infty) \to \mathbb{R}$ with a continuous derivative that satisfies $F(0)=0$,
\[
\E[F(Z)] = \int_0^\infty \Pr[Z>t] F'(t) dt.
\]
Substituting $Z=\Lambda$ and $F(t) = 1-e^{-t}$, we derive that
\[
\E[1-e^{-\Lambda}] \le \delta.
\]
Using the fact that $1-e^{-0}=0$ and the fact that $\Lambda \ge 0$, we derive that
\[
\E[(1-e^{-\Lambda}) \mathds{1}(\Lambda > 0)] \le \delta.
\]
Substituting the value of $\Lambda$, we obtain
\[
\int_{\mathbb{R}^k} p(x)  \lp(1- e^\epsilon \frac{q(x)}{p(x)}\rp)  \mathds{1}\lp(\log\frac{p(x)}{q(x)} > \epsilon\rp) dx
= \int_{\mathbb{R}^k} \lp(p(x) - e^\epsilon q(x)\rp) \mathds{1}\lp(p(x) > e^\epsilon q(x)\rp) dx \le \delta.
\]
This implies that for any $U \subseteq \mathbb{R}^k$,
\begin{align*}
\int_U p(x) - e^\epsilon q(x) dx
&\le \int_U \lp(p(x) - e^\epsilon q(x)\rp) \mathds{1} \lp(p(x) > e^\epsilon q(x) \rp) dx\\
&\le \int_{\mathbb{R}^k} \lp(p(x) - e^\epsilon q(x)\rp) \mathds{1} \lp(p(x) > e^\epsilon q(x) \rp) dx
\le \delta,
\end{align*}
as required.
\end{proof}
Let us apply the above lemma for answering multiple queries. Below, we assume for simplicity that the queries are non-interactive, namely, $q_1,\dots,q_k$ are given a-priori. Yet, one can obtain the \emph{exact} same bounds while assuming that they are asked adaptively. We refer to the proof of Theorem~\ref{thm:general-func} (and particularly, to Lemma~\ref{lem:mgf-and-truncation}).
\begin{lemma}
Let $M$ be a mechanism that answers $k$ fixed $\Delta$-sensitive queries by adding an i.i.d. noise drawn from some distribution $\mu$ over $\mathbb{R}$, with density
\[
\mu(\eta) = \frac{e^{-f(\eta)}}{Z}, \quad \text{where } Z = \int_{\mathbb{R}} e^{-f(\eta)} d\eta.
\]
(Here, we use the convention $f(\eta)=\infty$ if $\mu(\eta)=0$.)
Let $\epsilon, \delta > 0$ and assume that for all $v_1,\dots,v_k \in [-\Delta,\Delta]$, it holds that
\begin{equation}\label{eq:exp-bnd-difference}
\int_{\epsilon}^\infty \Pr_{\eta_1,\dots,\eta_k \simiid \mu}\lp[\sum_{i=1}^k 
    f(\eta_i + v_i) - f(\eta_i) > t
\rp]e^{\epsilon-t} dt \le \delta.
\end{equation}
Then, the mechanism is $(\epsilon,\delta)$-private.
\end{lemma}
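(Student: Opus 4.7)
The plan is to apply the preceding Lemma~\ref{lem:simulation-general} to the two distributions over the answer vector $\vec a = (a_1,\dots,a_k)$ induced by two neighboring datasets. Specifically, fix two neighboring datasets $\vx$ and $\vx'$ and let $P$ (resp.\ $Q$) denote the distribution of $\vec a$ when $M$ is run on $\vx$ (resp.\ $\vx'$). Since each answer is $a_i = q_i(\vx) + \eta_i$ with $\eta_i \simiid \mu$, the density of $P$ factorizes as $p(\vec a) = \prod_i e^{-f(a_i - q_i(\vx))}/Z$, and similarly for $q$.

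The central computation is the log-density ratio. A direct expansion gives
\[
\log \frac{p(\vec a)}{q(\vec a)} = \sum_{i=1}^k \bigl[f(a_i - q_i(\vx')) - f(a_i - q_i(\vx))\bigr].
\]
Now I would perform the change of variables $\eta_i = a_i - q_i(\vx)$ and set $v_i = q_i(\vx) - q_i(\vx')$, so that $a_i - q_i(\vx') = \eta_i + v_i$. Under $\vec a \sim P$, the variables $\eta_i$ are i.i.d.\ samples from $\mu$, and since each query is $\Delta$-sensitive and $\vx, \vx'$ differ in one entry, $v_i \in [-\Delta,\Delta]$ for every $i$. The log-ratio thus rewrites as $\sum_i [f(\eta_i + v_i) - f(\eta_i)]$, and the hypothesis \eqref{eq:exp-bnd-difference}, applied to this particular choice of $v_1,\dots,v_k$, yields exactly the integral condition required by Lemma~\ref{lem:simulation-general}.

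Invoking Lemma~\ref{lem:simulation-general} with these $P$ and $Q$, I conclude that for every $U \subseteq \mathbb{R}^k$,
\[
\Pr[\vec a \in U \mid \vx] \le e^\epsilon \Pr[\vec a \in U \mid \vx'] + \delta,
\]
which is the definition of $(\epsilon,\delta)$-differential privacy for fixed, non-adaptive queries. I expect no technical obstacle here: the argument is essentially bookkeeping, translating the abstract condition of Lemma~\ref{lem:simulation-general} into the noise-additive setting via the substitution $\eta_i \leftrightarrow a_i - q_i(\vx)$ and exploiting $\Delta$-sensitivity to certify $v_i \in [-\Delta,\Delta]$. The only subtlety worth flagging is that the normalizing constants $Z$ cancel in the log-ratio (so the condition is insensitive to the normalization of $\mu$), and that the adaptive case, mentioned parenthetically in the statement, would follow by the same martingale argument used in Lemma~\ref{lem:mgf-and-truncation}, conditioning on the history $\eta_1,\dots,\eta_{i-1}$ when defining each $v_i$; this does not affect the bound since the hypothesis is required to hold uniformly over all such tuples.
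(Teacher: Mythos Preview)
Your proposal is correct and follows essentially the same approach as the paper: fix neighboring datasets, write the output densities as products of shifted noise densities, compute the log-ratio as $\sum_i [f(\eta_i+v_i)-f(\eta_i)]$ with $v_i=q_i(\vx)-q_i(\vx')\in[-\Delta,\Delta]$, and invoke Lemma~\ref{lem:simulation-general}. Your remarks about the cancellation of $Z$ and the adaptive extension via conditioning also match the paper's treatment.
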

\begin{proof}
Let $\vx_1$ and $\vx_2$ be two neighboring datasets, let $q_1,\dots, q_k$ denote the queries. Denote the densities of the output of the mechanism on $\vx_1$ and $\vx_2$ by $p_1$ and $p_2$, respectively. Our goal is to show that for all $U \subseteq \mathbb{R}^d$,
\begin{equation*}
\int_U p_1(y_1,\dots,y_k) dy \le e^\epsilon \int_U p_2(y_1,\dots,y_k)dy + \delta.
\end{equation*}
In order to show this inequality, from Lemma~\ref{lem:simulation-general} it suffices to show that
\begin{equation}\label{eq:simulation-toshow}
\int_\epsilon^\infty\Pr_{(y_1,\dots,y_k) \sim p_1}\lp[\log\frac{p_1(y_1,\dots,y_k)}{p_2(y_1,\dots,y_k)} > t \rp]e^{\epsilon-t}dt \le \delta.
\end{equation}
Towards this goal, notice that
\[
p_j(y_1,\dots,y_k) = \prod_{i=1}^k \mu(y_i - q_i(\vx_j)).
\]
In particular,
\begin{equation}\label{eq:toshow2}
\log \frac{p_1(y_1,\dots,y_k)}{p_2(y_1,\dots,y_k)}
= \sum_{i=1}^k f(y_i - q_i(\vx_2)) - f(y_i - q_i(\vx_1)).
\end{equation}
Assuming that $(Y_1,\dots,Y_k)$ is a random variable denoting the output of the mechanism on $\vx_1$, we have that $Y_i - q_i(\vx_1)$ are distributed i.i.d. according to $\mu$. Hence, the right hand side of \eqref{eq:toshow2} is distributed according to
\[
\sum_{i=1}^k f(\eta_i + q_i(\vx_1)-q_i(\vx_2)) - f(\eta_i),
\]
where $\eta_1,\dots,\eta_k \simiid \mu$. Denote $v_i = q_i(\vx_1)-q_i(\vx_2)$ and
since the queries are $\Delta$-sensitive, $|v_i|\le \Delta$. We have that the right hand side of \eqref{eq:toshow2} equals
\[
\sum_{i=1}^k f(\eta_i+v_i)-f(\eta_i).
\]
Combining with the assumption of this lemma, this proves \eqref{eq:simulation-toshow}, which concludes the proof.
\end{proof}

Next, we show how to bound the deviations of $\sum_i f(\eta_i+v_i) - f(\eta_i)$. A standard way is to use the moment generating function, however, for the bounded noises suggested in this paper, the corresponding MGF might not exist. Instead, one can use truncation. In particular, we will find some threshold $L$ such that $\Pr[|\eta_i| > L] \le \delta_1/k$, for some $\delta_1<\delta$. This implies that $\Pr[\exists i=1,\dots,k\colon\ |\eta_i| > L] \le \delta_1$. Then, the left hand side of  \eqref{eq:exp-bnd-difference} can be bounded by
\begin{align*}
\int_{\epsilon}^\infty \lp(\Pr_{\eta_1,\dots,\eta_k \simiid \mu}\lp[\sum_{i=1}^k 
    f(\eta_i + v_i) - f(\eta_i) > t ~\wedge~ 
    \max_i |\eta_i| \le L
\rp] + \Pr[\exists i,\ |\eta_i| > \ell]\rp)e^{\epsilon-t} dt \\
\le \int_{\epsilon}^\infty \Pr_{\eta_1,\dots,\eta_k \simiid \mu}\lp[\sum_{i=1}^k 
    f(\eta_i + v_i) - f(\eta_i) > t ~\wedge~ 
    \max_i |\eta_i| \le L
\rp]e^{\epsilon-t} dt + \delta_1,
\end{align*}
using the fact that $\Pr[\exists i,\ |\eta_i| > \ell] \le \delta_1$ and $\int_\epsilon^\infty e^{\epsilon-t} = 1$. Denote $X_i = (f(\eta_i+v_i)-f(\eta_i))\mathds{1}(|\eta_i|\le L)$ and denote $\delta_2 = \delta - \delta_1$. It is sufficient to prove that
\begin{equation}\label{eq:exp-delta2}
\int_\epsilon^\infty \Pr\lp[\sum_i X_i > t \rp] e^{\epsilon-t} dt
\le \delta_2.
\end{equation}
We bound the deviations of $\sum_i X_i$ using the moment generating function technique: for any $\lambda > 0$, by Markov's inequality, one has
\begin{equation}
\label{eq:exp-mgf}
\Pr\lp[\sum_i X_i > t\rp]
= \Pr\lp[e^{\lambda \sum_i X_i} > e^{\lambda t} \rp]
\le \frac{\mathbb{E}\lp[e^{\lambda \sum_i X_i}\rp]}{e^{\lambda t}}
= \frac{\prod_i \mathbb{E}\lp[e^{\lambda X_i}\rp]}{e^{\lambda t}}.
\end{equation}
%where $X$ is identically distributed as each $X_i$. 
Recall that $X_i = (f(\eta_i+v_i)-f(\eta_i))\mathds{1}(|\eta_i|\le L)$. We would like to eliminate the dependence on $v_i \in [-\Delta,\Delta]$. We will bound its MGF as follows:
\begin{lemma}
Assume that $\mu$ is distribution with density $\mu(\eta)\propto e^{-f(\eta)}$ defined on $[-R,R]$. Assume that $\mu$ is log concave, namely, $\mu((\eta_1+\eta_2)/2)\ge \sqrt{\mu(\eta_1)\mu(\eta_2)}$. Further assume that $\mu$ is symmetric, namely, $\mu(\eta)=\mu(-\eta)$. Let $\lambda,L > 0$. Define for any $|t| < R-L$ the random variable
\[
X_t = (f(\eta+t)-f(\eta))\mathds{1}(|\eta|\le L),
\]
where $\eta \sim \mu$.
Then, for any $|a| \le |b| < R - L$,
\[
\E[\exp(\lambda X_a)] \le \E[\exp(\lambda X_b)]\enspace
\]
(notice that $X_t$ is undefined for $|t| > R-L$).
\end{lemma}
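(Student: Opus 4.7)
The plan is to set $\phi(t) := \E[\exp(\lambda X_t)]$ for $|t| < R-L$ and show that $\phi$ is both symmetric and convex on this interval, from which the monotonicity in $|t|$ follows by an elementary argument.

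For the symmetry, I would substitute $\eta \mapsto -\eta$ in the defining integral
\[
\phi(t) = \int_{-L}^L \exp\bigl(\lambda(f(\eta+t)-f(\eta))\bigr)\mu(\eta)\,d\eta.
\]
Since $\mu$ is symmetric, its density satisfies $\mu(-\eta)=\mu(\eta)$, which forces $f(-\eta)=f(\eta)$ (up to an additive constant absorbed into $Z$). The substitution then yields $\phi(-t)=\phi(t)$, so in particular $\phi(a)=\phi(|a|)$ and $\phi(b)=\phi(|b|)$; it therefore suffices to prove $\phi$ is non-decreasing on $[0,R-L)$.

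For the convexity, I would rewrite
\[
\phi(t) = \int_{-L}^{L} \exp\bigl(\lambda f(\eta+t)\bigr)\cdot \bigl[\exp(-\lambda f(\eta))\mu(\eta)\bigr]\,d\eta,
\]
where the bracketed weight is non-negative and independent of $t$. Log-concavity of $\mu$ is equivalent to convexity of $f$ on $[-R,R]$ (extended by $+\infty$ outside). For each fixed $\eta\in[-L,L]$, the map $t\mapsto \lambda f(\eta+t)$ is therefore convex in $t$ on $(-(R-L),R-L)$, and composing with the increasing convex function $\exp$ preserves convexity. Taking a non-negative weighted integral of a family of convex functions preserves convexity, so $\phi$ is convex on $(-(R-L),R-L)$.

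To finish, I would invoke the elementary fact that a symmetric convex function on a symmetric interval is non-decreasing in $|t|$: for $0\le a'\le b'<R-L$ write $a' = \alpha b' + (1-\alpha)(-b')$ with $\alpha=(1+a'/b')/2\in[1/2,1]$; convexity and symmetry give $\phi(a')\le \alpha\phi(b')+(1-\alpha)\phi(-b') = \phi(b')$. Applied with $a'=|a|$ and $b'=|b|$ and combined with the symmetry step above, this yields the claim. I expect the main subtlety (rather than a real obstacle) to be convexity of $\phi$ under only a log-concavity hypothesis without assuming $f\in C^2$; the argument above handles this by reducing to the closure properties of convex functions rather than differentiating twice under the integral sign, and the boundary constraint $|t|<R-L$ is exactly what guarantees $f(\eta+t)<\infty$ throughout the domain of integration.
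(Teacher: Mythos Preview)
Your argument is correct and is in fact cleaner than the paper's. One small slip: your displayed formula for $\phi(t)$ omits the contribution from $\{|\eta|>L\}$, where $X_t=0$ and hence $e^{\lambda X_t}=1$; the correct expression is
\[
\E[e^{\lambda X_t}] \;=\; \Pr[|\eta|>L] \;+\; \int_{-L}^L \exp\bigl(\lambda(f(\eta+t)-f(\eta))\bigr)\mu(\eta)\,d\eta.
\]
This does not affect anything, since the extra term is a constant in $t$ and so preserves both symmetry and convexity.

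The paper takes a different route: it first symmetrizes in $\eta$ to write the integrand as $\tfrac12\bigl(e^{\lambda f(|\eta|+t)}+e^{\lambda f(|\eta|-t)}\bigr)e^{-\lambda f(\eta)}$, then differentiates in $t$ and checks the sign pointwise using that $f'$ is increasing (convexity) and $f$ is nondecreasing on $[0,\infty)$ (symmetry plus convexity). This requires $f\in C^1$, which the paper patches by an approximation argument. Your approach sidesteps both the differentiation and the approximation: you observe directly that $t\mapsto \exp(\lambda f(\eta+t))$ is convex for each fixed $\eta$ (as the exponential of a convex function) and integrate against a nonnegative weight, then combine convexity with the evenness of $\phi$ via the one-line midpoint argument. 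The trade-off is that the paper's derivative computation makes the mechanism of monotonicity more explicit, whereas your argument is shorter and works under the bare log-concavity hypothesis without smoothness.
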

\begin{proof}
It is sufficient to assume that $f$ has a continuous derivative. Otherwise, one can approximate $f$ with a sequence of functions with continuous derivatives.
We will show that $\E[e^{\lambda X_t}]$ is monotone increasing when $t \ge 0$, and the result will follow since $\E[e^{\lambda X_t}] = \E[e^{\lambda X_{-t}}]$, as $\mu$ is symmetric. Let $E$ be the event that $|\eta|\le L$. Then, from symmetricity of $\mu$ and $f$,
\begin{align*}
\E\lp[e^{\lambda X_t}\rp]
&= \E\lp[e^{\lambda f(\eta+t)} e^{-\lambda f(\eta)}\mathds{1}_E\rp]
= \E\lp[ \frac{e^{\lambda f(\eta+t)}+e^{\lambda f(-\eta+t)}}{2}e^{-\lambda f(\eta)}\mathds{1}_E\rp]\\
&\E\lp[ \frac{e^{\lambda f(|\eta|+t)}+e^{\lambda f(|\eta|-t)}}{2}e^{-\lambda f(\eta)}\mathds{1}_E\rp].
\end{align*}
It is sufficient to show that the following derivative is nonnegative, for any $\eta \ge 0$:
\begin{equation}\label{eq:exp-derivative}
\frac{d}{dt} \frac{e^{\lambda f(\eta+t)}+e^{\lambda f(\eta-t)}}{2}
= \frac{f'(\eta+t) e^{\lambda f(\eta+t)}-f'(\eta-t) e^{\lambda f(\eta-t)}}{2}.
\end{equation}
Since $\mu$ is symmetric and log-concave, then $f$ is convex and symmetric. In particular, this implies that $f$ has a minimum at $0$ and it is monotonic nondecreasing at $\eta > 0$. Since we assumed that $\eta,t\ge0$, this implies that $\eta+t \ge |\eta-t|$, which implies that $f(\eta+t) \ge f(\eta-t)$. Further, convexity of $f$ implies that its derivative is increasing, which implies that
\[
f'(\eta+t) \ge f'(|\eta-t|) = |f'(\eta-t)|,
\]
using the fact that the derivative of a symmetric function is antisymmetric. The above implies that
\[
f'(\eta+t) e^{\lambda f(\eta+t)} 
\ge \lp| f'(\eta-t) e^{\lambda f(\eta-t)} \rp|,
\]
which derives that \eqref{eq:exp-derivative} is nonnegative and concludes the proof.
\end{proof}

Define
\[
X := (f(\eta+\Delta)-f(\eta))\mathds{1}(|\eta|\le L),
\]
where $\eta\sim \mu$, and the above lemma implies that $\E[e^{\lambda X_i}] \le \E[e^{\lambda X}]$ for all $i=1,\dots,k$. Combining with \eqref{eq:exp-mgf}, one has
\[
\Pr\lp[\sum_i X_i > t\rp] \le \inf_{\lambda>0} \frac{\mathbb{E}\lp[e^{\lambda X}\rp]^k}{e^{\lambda t}}
\le \exp\lp(\inf_{\lambda>0} k \log \lp(\mathbb{E}\lp[e^{\lambda X}\rp]\rp) - \lambda t  \rp).
\]
The function $k \log \lp(\mathbb{E}\lp[e^{\lambda X}\rp]\rp) - \lambda t$ is known to be convex in $\lambda$ for any random variable $X$, hence, it can be optimized efficiently, and any $\lambda$ would yield an upper bound. Integrating over $t$, one can bound the left hand side of \eqref{eq:exp-delta2}. %Since exact integration is likely impossible, it is sufficient to check finitely many values of $t$, and bound $\Pr\lp[\sum_i X_i >t\rp] \le \Pr\lp[\sum_i X_i > s \rp]$ for any $s \le t$. 
This produces a way to certify that \eqref{eq:exp-delta2} holds, for given $\epsilon,\delta,k$ and $\mu$. Recall that if this inequality holds, then the mechanism is guaranteed to be $(\epsilon,\delta)$-DP. In particular, given $f$ and $R$, \eqref{eq:exp-delta2} certifies that $M_{f,R}$ is $(\epsilon,\delta)$-DP. 
If we want to find an upper bound on the minimal magnitude $R$ such that $M_{f,R}$ is $(\epsilon,\delta)$-DP, we can perform a simple binary search over values of $R>0$ (stopping when the desired precision has achieved).
We note that in order to obtain a proper upper bound, one has to ensure that the approximation errors in the relevant computations are one sided (e.g. the computed MGF value should not be lower than the actual value). Algorithm~\ref{alg:CheckDP} tests if a mechanism is $(\epsilon,\delta)$-DP and Algorithm~\ref{alg:FindNoise} finds an upper bound on the minimal noise $R$ given some function $f$.

\SetKwProg{Fn}{Function}{:}{\KwRet}
\SetKwInput{kwInput}{Input}
\SetKwInput{kwOutput}{Output}
\SetKwFunction{FTestPriv}{testPrivacy}
\begin{algorithm}[h!]
\SetKwInput{kwTunableParam}{Tunable Parameter}
\SetKwFunction{FMGF}{MGF}
\SetKwFunction{FMain}{main}
\SetKwFunction{FDevBnd}{deviationBound}
\Fn(\tcc*[h]{Checks if a mechanism is $(\epsilon,\delta)$-private}){\FTestPriv{$\epsilon,\delta,k,\Delta,p$}}{
\kwInput{Privacy parameters $\epsilon>0$ and $\delta \in (0,1)$; Number of queries $k \in \mathbb{N}$; sensitivity $\Delta>0$; A probability density $p$ over $(-R,R)$ such that $\log p(y)$ is concave and $p(y)=p(-y)$.}
\kwOutput{An indication whether the mechanisms that answers $k$ $\Delta$-sensitive queries with i.i.d. noise according to $p$ is $(\epsilon,\delta)$-DP. A \textbf{false} answer may be wrong but a \textsf{true} answer is always correct.}
\Fn(\tcc*[h]{Computes a moment generating function}){\FMGF{$L, \lambda$}}{
\kwInput{A threshold $L > 0$ and a real parameter $\lambda > 0$}
\kwOutput{The moment generating function $\mathbb{E}[e^{\lambda X}]$, where $Y\sim p$ and $X = (\log p(Y)-\log p(Y+\Delta))\mathds{1}(|Y|\le L)$}
\Return{$\int_{-L}^L p(y) \exp(\lambda(\log p(y)-\log p(y+\Delta)))dy$} \tcc*[h]{An upper bound is also valid and can be computed since $\log p(y)-\log p(y+\Delta)$ is monotone increasing.}
}
\Fn(\tcc*[h]{Computes a probability of deviation}){\FDevBnd{$L,t$}}{
\kwInput{Real numbers $L,t>0$}
\kwOutput{An upper bound on the probability that $\sum_{i=1}^k X_i > t$, where $Y_1,\dots,Y_n\simiid p$, $v_i \in [-\Delta,\Delta]$ are arbitrary and $X_i = (\log p(Y_i)-\log p(Y_i+v_i))\mathds{1}(|Y_i|\le L)$}
logProb $\gets \inf_{\lambda>0} k \FMGF(L,\lambda)-\lambda t$ \;
\Return $e^{\text{logProb}}$\;
}
\kwTunableParam{$\delta_1 \in (0,\delta)$.} \tcc*[h]{Can be set arbitrarily. A possible setting is: $\delta_1=0.01\delta$}\;
$L \gets$ the unique value in $[0,R]$ such that
$\int_{-L}^L p(y)dy = 1-\delta_1$ \tcc*[h]{An upper bound is also valid.}\;
\If{$L+\Delta \ge R$}{
\Return{\textsf{false}}}
$\delta_2 \gets \int_\epsilon^\infty \FDevBnd(L,t)e^{\epsilon-t}dt$ \tcc*[h]{An upper bound is valid and can be computed since deviationBound is monotone decreasing in $t$}\;
\eIf{$\delta_1+\delta_2\le\delta$}{
\Return{\textsf{true}}
}{
\Return{\textsf{false}}
}
\caption{Check if a mechanism satisfies DP\label{alg:CheckDP}}
}
\end{algorithm}

\begin{algorithm}[h!]
\SetKwInput{kwDef}{Definition}
\SetKwFunction{FNoiseUB}{noiseUpperBound}
\Fn(\tcc*[h]{Computes an upper bound on the noise required for preserving a desired privacy level}){\FNoiseUB{$\epsilon,\delta,k,\Delta,p_1$}}{
\kwInput{Privacy parameters $\epsilon>0$ and $\delta \in (0,1)$; Number of queries $k$; Sensitivity $\Delta$; A probability density function $p_1\colon (-1,1)\to (0,\infty)$ such that $\log p_1$ is concave and $p_1(y)=p_1(-y)$.}
\kwDef{For all $R>0$, define by $p_R$ the density over $[-R,R]$, such that $p_R(y) = p_1(y/R)/R$.}\tcc{Equivalently, a sample $y\sim p_R$ is obtained by sampling $y'\sim p_1$ and outputting $Ry'$.}
\kwOutput{A number $R>0$ such that $p_R$ is $(\epsilon,\delta)$-DP for answering $k$ $\Delta$-sensitive queries.}\;
err$\gets$ a very small number \tcc{For example, $10^{-8}$}
\tcc{Compute an upper bound $b$ on the minimal allowable noise $R$.}
$b \gets 1$ \;
\While{{\bf not} \FTestPriv{$\epsilon,\delta,k,\Delta,p_R$}}{
$b \gets 2 * b$
}
$a \gets 0$ \tcc{A lower bound on the minimal allowable noise $R$}
\While{$b-a>err$}{
$m\gets (a+b)/2$\;
\eIf{\FTestPriv{$\epsilon,\delta,k,\Delta,p_{m}$}}{
$b\gets m$
}{
$a \gets m$
}
}
\Return{$b$}
}
\caption{Find a suitble noise magnitude $R$\label{alg:FindNoise}}
\end{algorithm}

\clearpage
\appendix

\printbibliography

@article{bakhshizadeh2020sharp,
  title={Sharp Concentration Results for Heavy-Tailed Distributions},
  author={Bakhshizadeh, Milad and Maleki, Arian and de la Pena, Victor H},
  journal={arXiv preprint arXiv:2003.13819},
  year={2020}
}

@inproceedings{balle2018improving,
  title={Improving the Gaussian mechanism for differential privacy: Analytical calibration and optimal denoising},
  author={Balle, Borja and Wang, Yu-Xiang},
  booktitle={International Conference on Machine Learning},
  pages={394--403},
  year={2018},
  organization={PMLR}
}

@inproceedings{dwork2006calibrating,
	title={Calibrating noise to sensitivity in private data analysis},
	author={Dwork, Cynthia and McSherry, Frank and Nissim, Kobbi and Smith, Adam},
	booktitle={Theory of cryptography conference},
	pages={265--284},
	year={2006},
	organization={Springer}
}

@inproceedings{dwork2006our,
	title={Our data, ourselves: Privacy via distributed noise generation},
	author={Dwork, Cynthia and Kenthapadi, Krishnaram and McSherry, Frank and Mironov, Ilya and Naor, Moni},
	booktitle={Annual International Conference on the Theory and Applications of Cryptographic Techniques},
	pages={486--503},
	year={2006},
	organization={Springer}
}

@article{ganesh2020privately,
	title={Privately Answering Counting Queries with Generalized Gaussian Mechanisms},
	author={Ganesh, Arun and Zhao, Jiazheng},
	journal={arXiv preprint arXiv:2010.01457},
	year={2020}
}

@article{steinke2016between,
	title={Between Pure and Approximate Differential Privacy},
	author={Steinke, Thomas and Ullman, Jonathan},
	journal={Journal of Privacy and Confidentiality},
	volume={7},
	number={2},
	pages={3--22},
	year={2016}
}

@inproceedings{hardt2010geometry,
	title={On the geometry of differential privacy},
	author={Hardt, Moritz and Talwar, Kunal},
	booktitle={Proceedings of the forty-second ACM symposium on Theory of computing},
	pages={705--714},
	year={2010}
}

@online{openProblem,
	author = {Steinke, Thomas and Ullman, Jonathan},
	title = {Open Problem - Avoiding the Union Bound for Multiple Queries},
	year = 2020,
	url = {https://differentialprivacy.org/open-problem-avoid-union/}
}

@article{ghazi2020avoiding,
  title={On Avoiding the Union Bound When Answering Multiple Differentially Private Queries},
  author={Ghazi, Badih and Kumar, Ravi and Manurangsi, Pasin},
  journal={arXiv preprint arXiv:2012.09116},
  year={2020}
}

@inproceedings{hardt2014preventing,
  title={Preventing false discovery in interactive data analysis is hard},
  author={Hardt, Moritz and Ullman, Jonathan},
  booktitle={2014 IEEE 55th Annual Symposium on Foundations of Computer Science},
  pages={454--463},
  year={2014},
  organization={IEEE}
}

@inproceedings{dwork2015preserving,
  title={Preserving statistical validity in adaptive data analysis},
  author={Dwork, Cynthia and Feldman, Vitaly and Hardt, Moritz and Pitassi, Toniann and Reingold, Omer and Roth, Aaron Leon},
  booktitle={Proceedings of the forty-seventh annual ACM symposium on Theory of computing},
  pages={117--126},
  year={2015}
}

@article{bassily2021algorithmic,
  title={Algorithmic stability for adaptive data analysis},
  author={Bassily, Raef and Nissim, Kobbi and Smith, Adam and Steinke, Thomas and Stemmer, Uri and Ullman, Jonathan},
  journal={SIAM Journal on Computing},
  number={0},
  pages={STOC16--377},
  year={2021},
  publisher={SIAM}
}

@article{jung2020new,
  title={A New Analysis of Differential Privacy's Generalization Guarantees},
  author={Jung, Christopher and Ligett, Katrina},
  journal={Innovations in Theoretical Computer Science (ITCS)},
  year={2020}
}

@article{liu2018generalized,
  title={Generalized gaussian mechanism for differential privacy},
  author={Liu, Fang},
  journal={IEEE Transactions on Knowledge and Data Engineering},
  volume={31},
  number={4},
  pages={747--756},
  year={2018},
  publisher={IEEE}
}

@article{holohan2020bounded,
  title={The Bounded Laplace Mechanism in Differential Privacy},
  author={Holohan, Naoise and Antonatos, Spiros and Braghin, Stefano and Mac Aonghusa, P{\'o}l},
  journal={Journal of Privacy and Confidentiality},
  volume={10},
  number={1},
  year={2020}
}

\end{document}